\def\nicerThanLlncs{}
\title{Security Limitations of Classical-Client Delegated Quantum Computing}
\date{}
\author{\,}
\institute{\,}
\author{Christian Badertscher \inst{1} , Alexandru Cojocaru \inst{2}, L\'{e}o Colisson\inst{3} , Elham Kashefi \inst{2,3}, Dominik Leichtle\inst{3}, Atul Mantri \inst{4}, Petros Wallden\inst{2} }
\institute{IOHK, Zurich, Switzerland \\
\href{mailto:christian.badertscher@iohk.io}{christian.badertscher@iohk.io}\\
\and
School of Informatics, University of Edinburgh, \\
10 Crichton Street, Edinburgh EH8 9AB, UK \\ \href{mailto:ekashefi@inf.ed.ac.uk}{ekashefi@inf.ed.ac.uk}, \href{mailto:a.d.cojocaru@sms.ed.ac.uk}{a.d.cojocaru@sms.ed.ac.uk}, \href{mailto:petros.wallden@ed.ac.uk}{petros.wallden@ed.ac.uk}
\and
Laboratoire d'Informatique de Paris 6 (LIP6), Sorbonne Universit\'{e}, \\
4 Place Jussieu, 75252 Paris CEDEX 05, France \\ \href{mailto:leo.colisson@lip6.fr}{leo.colisson@lip6.fr}, \href{dominik.leichtle@lip6.fr}{dominik.leichtle@lip6.fr}
\and
Joint Center for Quantum Information and Computer Science (QuICS),\\ University of Maryland, College Park, USA \\ \href{mailto:amantri@umd.edu}{amantri@umd.edu}
}
\begin{document}

{\def\addcontentsline#1#2#3{}\maketitle}

\begin{abstract}
Secure delegated quantum computing is a two-party cryptographic primitive, where a computationally weak client wishes to delegate an arbitrary quantum computation to an untrusted quantum server in a privacy-preserving manner. Communication via quantum channels is typically assumed such that the client can establish the necessary correlations with the server to securely perform the given task. This has the downside that all these protocols cannot be put to work for the average user unless a reliable quantum network is deployed.

Therefore the question becomes relevant whether it is possible to rely solely on classical channels between client and server and yet benefit from its quantum capabilities while retaining privacy. Classical-client remote state preparation ($\sf{RSP}_{CC}$) is one of the promising candidates to achieve this because it enables a client, using only classical communication resources, to remotely prepare a quantum state. However, the privacy loss incurred by employing $\sf{RSP}_{CC}$ as sub-module to avoid quantum channels is unclear.

In this work, we investigate this question using the Constructive Cryptography framework by Maurer and Renner~\cite{maurer2011abstract}.  We first identify the goal of $\sf{RSP}_{CC}$ as the construction of ideal \RSP resources from classical channels and then reveal the security limitations of using $\sf{RSP}_{CC}$ in general and in specific contexts:

\begin{enumerate}
    \item We uncover a fundamental relationship between constructing ideal \RSP resources (from classical channels) and the task of cloning quantum states with auxiliary information. Any classically constructed ideal \RSP resource must leak to the server the full classical description (possibly in an encoded form) of the generated quantum state, even if we target computational security only. As a consequence, we find that the realization of common \RSP resources, without weakening their guarantees drastically, is impossible due to the no-cloning theorem.
    \item The above result does not rule out that a specific $\sf{RSP}_{CC}$ protocol can replace the quantum channel at least in some contexts, such as the Universal Blind Quantum Computing ($\sf{UBQC}$) protocol of Broadbent et al.~\cite{broadbent2009universal}. However, we show that the resulting $\sf{UBQC}$ protocol cannot maintain its proven composable security as soon as $\sf{RSP}_{CC}$ is used as a subroutine.
    \item  We show that replacing the quantum channel of the above $\sf{UBQC}$ protocol by the $\sf{RSP}_{CC}$ protocol QFactory of Cojocaru et al.~\cite{cojocaru2019qfactory}, preserves the weaker, game-based, security of $\sf{UBQC}$.
\end{enumerate}

\end{abstract}

\ifdefined\nicerThanLlncs%
  \newpage%
  \tableofcontents%
  \newpage%
\fi

\section{Introduction}
The expected rapid advances in quantum technologies in the decades to come are likely to further disrupt the field of computing. To fully realize the technological potential, remote access, and manipulation of data must offer strong privacy and integrity guarantees and currently available quantum cloud platform designs have still a lot of room for improvement.

There is a large body of research that exploits the client-server setting defined in \cite{childs2005secure} to offer different functionalities, including secure delegated quantum computation~\cite{broadbent2009universal,morimae2012blind,dunjko2014composable,broadbent2015delegating,mahadev2017classical}~\footnote{For more details see review of this field in \cite{fitzsimons2017private}}, verifiable delegated quantum computation~\cite{aharonov2008interactive,reichardt2012classical,fitzsimons2012unconditionally,hayashi2015verifiable,broadbent2015verify,fitzsimons2018post,takeuchi2018resource,mahadev2018classical}~\footnote{For more details see recent reviews in \cite{gheorghiu2019verification,vidick2020verifying}}, secure multiparty quantum computation \cite{kashefi2017multiparty,kashefi2017quantum,kashefi2017garbled}, quantum fully homomorphic encryption \cite{broadbent2015quantum,dulek2016quantum}. It turns out that one of the central building blocks is secure \emph{remote state preparation} ($\sf{RSP}$) that was first defined in \cite{dunjko2012blind}. At a high level, $\sf{RSP}$ resources enable a client to remotely prepare a quantum state on the server and are, therefore, the natural candidate to replace quantum channel resources in a modular fashion. These resources further appear to enable a large ecosystem of composable protocols~\cite{dunjko2012blind,dunjko2014composable}, including in particular the important \emph{Universal Blind Quantum Computation} ($\sf{UBQC}$)~\cite{broadbent2009universal} protocol used to delegate a computation to a remote quantum server who has no knowledge of the ongoing computation.

However, in most of the above-mentioned works, the users and providers do have access to quantum resources to achieve their goals, in particular to quantum channels in addition to classical communication channels. This might prove to be challenging for some quantum devices, e.g. those with superconducting qubits, and in general, it also restricts the use of these quantum cloud services to users with suitable quantum technology. Motivated by this practical constrain, \cite{CCKW18} introduced a protocol mimicking this remote state preparation resource over a purely \emph{classical} channel (under the assumption that learning with error problem is computationally hard for quantum servers). This is a cryptographic primitive between a fully classical client and a server (with a quantum computer). By the end of the interactive protocol the client has ``prepared'' remotely on the server's lab, a quantum state (typically a single qubit $\ket{+_\theta} := \frac{1}{\sqrt{2}}(\ket{0} + e^{i\theta}\ket{1})$). This protocol further enjoys some important privacy guarantees with respect to the prepared state.

The important role of such a classical $\sf{RSP}$ primitive as part of larger protocols -- most notably in their role in replacing quantum channels between client and server -- stems from their ability to make the aforementioned protocols available to classical users, in particular clients without quantum-capable infrastructure on their end. 
It is therefore of utmost importance to develop an understanding of this primitive, notably its security guarantees when composed in larger contexts such as in~\cite{gheorghiu2019computationally}.

In this paper, we initiate the study of analyzing classical remote state-preparation from first principles. We thereby follow the Constructive Cryptography (CC) framework \cite{maurer2011abstract,maurer2011constructive} to provide a clean treatment of the $\sf{RSP}$ primitive from a composable perspective. (Note that the framework is also referred to as Abstract Cryptography (AC) in earlier works.) Armed with such a definition, we then investigate the limitations and possibilities of using classical $\sf{RSP}$ both in general and in more specific contexts. Using CC is a common approach to analyze classical as well as quantum primitives and their composable security guarantees in general and in related works including~\cite{dunjko2014composable,dunjko2016blind,morimae2013composable}.

\subsection{Overview of our Contributions}
We present an informal overview of our main results. In this work, we cover the security of $\sf{RSP}_{CC}$, the class of remote state preparation protocols which only use a classical channel, and the use-case that corresponds to its arguably most important application: Universal Blind Quantum Computing ($\sf{UBQC}$) protocols with a completely classical client. More specifically, we analyze the security of $\sf{UBQC}_{CC}$, the family of protocols where a protocol in $\sf{RSP}_{CC}$ is used to replace the quantum channel from the original quantum-client $\sf{UBQC}$ protocol. An example of an $\sf{RSP}$ resource is the $\channelBB$~\footnote{The notation $\Z \frac{\pi}{2} $ denotes the set of the 4 angles $\{0,\frac{\pi}{2}, \pi, \frac{3\pi}{2} \}$.} resource (depicted in \cref{fig:cT_pi2_1}) outputting the quantum state $\Ket{+_{\theta}}$ on its right interface, and the classical description of this state, $\theta$, on its left interface.
\begin{figure}[ht]
  \centering
  \begin{bbrenv}{ctpi2}
    \begin{bbrbox}[name=$\channelBB$]%
      \pseudocode{%
        \theta \gets \left\{0, \frac{\pi}{2},\pi, \frac{3\pi}{2} \right\}
      }%
    \end{bbrbox}
    \bbrmsgspace{4mm}
    \bbrmsgfrom{top={$\theta$}}
    \bbrqryspace{4mm}
    \bbrqryto{top={$\ket{+_\theta}$}}
  \end{bbrenv}
  \caption{Ideal resource $\channelBB$}
  \label{fig:cT_pi2_1}
\end{figure}


We show in~\cref{sec:impossibleComposableRSP} a wide-ranging limitation to the universally composable guarantees that any protocol in the family $\sf{RSP}_{CC}$ can achieve. The limitation follows just from the relation between (i) the notion of classical realization and (ii) a property we call describability -- which roughly speaking measures how leaky an \RSP resource is. The limitation directly affects the amount of additional leakage on the classical description of the quantum state. In this way, it rules out a wide set of desirable resources, even against computationally bounded distinguishers.


\vspace{\baselineskip}
\noindent\textbf{Theorem~\ref{thm:nogoClassicalRSP}}\;(Security Limitations of $\sf{RSP}_{CC}$). \textit{Any $\RSP$ resource, realizable by an $\sf{RSP}_{CC}$ protocol with security against quantum polynomial-time distinguishers, must leak an encoded, but complete description of the generated quantum state to the server.}
\vspace{\baselineskip}


The importance of \cref{thm:nogoClassicalRSP} lies in the fact that it is drawing a connection between the composability of an $\sf{RSP}_{CC}$ protocol -- a \emph{computational} notion -- with the statistical leakage of the ideal functionality it is constructing -- an \emph{information-theoretic} notion. This allows us to use fundamental physical principles such as no-cloning or no-signaling in the security analysis of \emph{computationally} secure $\sf{RSP}_{CC}$ protocols. As one direct application of this powerful tool, we show that secure implementations of the ideal resource in \cref{fig:cT_pi2_1} give rise to the construction of a quantum cloner, and are hence impossible.

\begin{figure}[ht]
\centering%
\includegraphics[width=.8\linewidth]{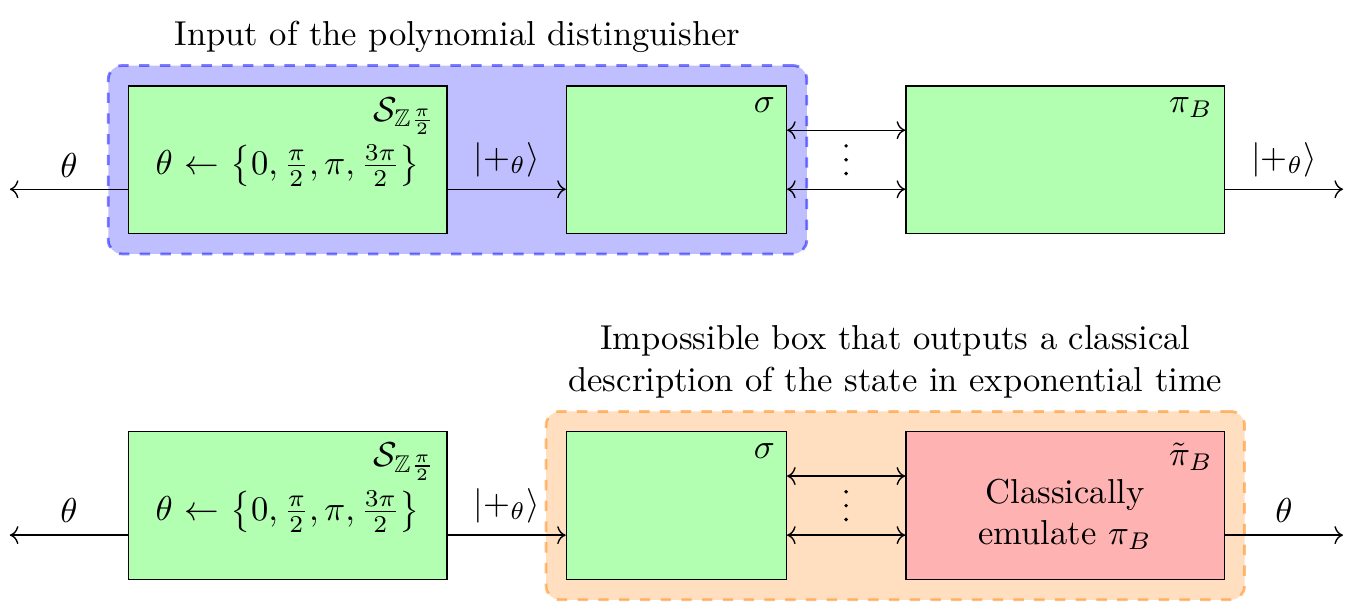}
\caption{Idea of the proof of impossibility of composable $\sf{RSP}_{CC}$, exemplified by the $\channelBB$ primitive from \cref{fig:cT_pi2}. The green boxes run in polynomial time, while the red box runs in exponential time. $\tilde{\pi}_B$ runs the same computations as $\pi_B$ by emulating it. In this way, the classical description of the quantum state can be extracted.}
\label{fig:idea_proof_nogo_RSP}
\end{figure}

\begin{proof}[Proof sketch]
While \cref{thm:nogoClassicalRSP} applies to much more general \RSP resources having arbitrary behavior at its interfaces and targeting any output quantum state, for simplicity we exemplify the main ideas of our proof for the ideal resource $\channelBB$.

The composable security of a protocol realizing $\channelBB$ implies, by definition, the existence of a simulator $\sigma$ which turns the right interface of the ideal resource into a completely classical interface as depicted in \cref{fig:idea_proof_nogo_RSP}. Running the protocol of the honest server with access to this classical interface allows the distinguisher to reconstruct the quantum state $\ket{+_\theta}$ the simulator received from the ideal resource. Since the distinguisher also has access to $\theta$ via the left interface of the ideal resource, he can perform a simple measurement to verify the consistency of the state obtained after interacting with the simulator. By the correctness of the protocol, the obtained quantum state $\ket{+_\theta}$ must therefore indeed comply with $\theta$. We emphasize that this consistency check can be performed efficiently, i.e. by \emph{polynomially-bounded} quantum distinguishers.

Since the quantum state $\ket{+_\theta}$ is transmitted from $\sigma$ to the distinguisher over a classical channel, the ensemble of exchanged classical messages must contain a complete encoding of the description of the state, $\theta$. A (possibly unbounded) algorithm can hence extract the actual description of the state by means of a classical emulation of the honest server. This property of the ideal resource is central to our proof technique, we call it \emph{describability}.
\end{proof}

Having a full description of the quantum state produced by $\channelBB$ would allow us to clone it, a procedure prohibited by the no-cloning theorem. We conclude that the resource $\channelBB$ cannot be constructed from a classical channel only.


One could attempt to modify the ideal resource, to incorporate such an extensive leakage, which is necessary as the above proof implies. However, this yields an ideal resource that is actually not a useful idealization or abstraction of the real world (because it is fully leaky) which puts in question whether they are at all useful in a composable analysis. Consider for example constructions of composite protocols that utilize the (non-leaky) ideal resource as a sub-module. These constructions require a fresh security analysis if the sub-module is replaced by any leaky version of it, but since the modified resource is very specific and must mimic its implementation (in terms of leakage) it appears that this replacement does not give any benefit compared to directly using the implementation as a subroutine and then examining the compsoable security of the combined protocol as a whole. This latter way is therefore examined next.
More precisely, we might still be able to use $\sf{RSP}_{CC}$ protocols as a subroutine in other, specific protocols, and expect the overall protocol to still construct a useful ideal functionality. The protocol family $\sf{UBQC}_{CC}$ is such an application. Unfortunately, as we show in~ \cref{sec:impossibility_composable_CUBQC}, $\sf{UBQC}_{CC}$ fails to provide the expected composable security guarantees once classical remote state preparation is used to replace the quantum channel from client to server (where composable security for UBQC refers to the goal of achieving the established ideal functionality of~\cite{dunjko2014composable} which we recall in~\cref{sec:impossibility_composable_CUBQC}). This holds even if the distinguisher is computationally bounded.


\vspace{\baselineskip}
\noindent\textbf{Theorem~\ref{thm:nogo_ubqc}}\;(Impossibility of $\sf{UBQC}_{CC}$). \textit{No $\sf{RSP}_{CC}$ protocol can replace the quantum channel in the $\sf{UBQC}$ protocol while preserving composable security.}


\begin{proof}[Proof sketch]
We first show that the existence of any composable $\sf{UBQC}_{CC}$ protocol (in the sense of achieving the ideal UBQC resource) implies the existence of a composable \emph{single-qubit} $\sf{UBQC}_{CC}$ protocol.
In turn, the impossibility of composable single-qubit $\sf{UBQC}_{CC}$ protocols is then proven in two steps.
First, we show that single-qubit $\sf{UBQC}_{CC}$ protocols can, in fact, be turned into \RSP protocols. This allows us to employ the toolbox we developed before on \RSP protocols. As a second step, we deduce that an \RSP protocol of this specific kind (that leaks the classical description, even in the form of an encoded message) would violate the no-signaling principle, thereby showing that a composable $\sf{UBQC}_{CC}$ protocol could not have existed in the first place.
\end{proof}


Finally in~\cref{sec:game_based_cubqc}, we show that the protocol family $\sf{RSP}_{CC}$ is not trivial with respect to privacy guarantees. It contains protocols with  reasonably restricted leakage that can be used as subroutines in specific applications resulting in combined protocols that offer a decent level of security. Specifically, we prove the blindness property of $\sf{QF}$-$\sf{UBQC}$, a concrete $\sf{UBQC}_{CC}$ protocol that consists of the universal blind quantum computation ($\sf{UBQC}$) protocol of \cite{broadbent2009universal} and the specific LWE-based remote state preparation ($\sf{RSP}_{CC}$) protocol from~\cite{cojocaru2019qfactory}. This yields the first provably secure $\sf{UBQC}_{CC}$ protocol from standard assumptions with a classical \RSP protocol as a subroutine.


\vspace{\baselineskip}
\noindent\textbf{Theorem~\ref{thm:game_based_security_UBQC_QF}}\;(Game-Based Security of $\sf{QF}$-$\sf{UBQC}$). \textit{The universal blind quantum computation protocol with a classical client $\sf{UBQC}_{CC}$ that combines the $\sf{RSP}_{CC}$ protocol of~\textup{\cite{cojocaru2019qfactory}} and the $\sf{UBQC}$ protocol of~\textup{\cite{broadbent2009universal}} is adaptively blind in the game-based setting. We call this protocol $\sf{QF}$-$\sf{UBQC}$. This protocol is secure under standard assumptions.}
\vspace{\baselineskip}


The statement of \cref{thm:game_based_security_UBQC_QF} can be summarized as follows: No malicious (but computationally bounded) server in the $\sf{QF}$-$\sf{UBQC}$ protocol could distinguish between two runs of the protocol performing different computations. This holds even when it is the adversary that chooses the two computations that he will be asked to distinguish. The security is achieved in the plain model, i.e., without relying on additional setup such as a measurement buffer. The protocol itself is a combination of $\sf{UBQC}$ with the QFactory protocol. For every qubit that the client would transmit to the server in the original $\sf{UBQC}$ protocol, QFactory is invoked as a subprocedure to the end of remotely preparing the respective qubit state on the server over a classical channel.

\begin{proof}[Proof sketch]
By a series of games, we show that the real protocol on a single qubit is indistinguishable from a game where the adversary guesses the outcome of a hidden coin flip. We generalize this special case to the full protocol on graphs with a polynomial number of qubits by induction over the size of the graph.
\end{proof}


\subsection{Related Work}

 While $\sf{RSP}_{CC}$ was first introduced in \cite{CCKW18} (under a different terminology), (game-based) security was only proven against weak (honest-but-curious) adversaries. Security against malicious adversaries was proven for a modified protocol in \cite{cojocaru2019qfactory}\footnote{In \cite{cojocaru2019qfactory} a verifiable version of $\sf{RSP}_{CC}$ was also given, but security was not proven in full generality.}, this protocol, called \emph{QFactory}, is the basis of the positive results in this work. In parallel \cite{gheorghiu2019computationally} gave another protocol that offers a stronger notion of \emph{verifiable} $\sf{RSP}_{CC}$ and proved the security of their primitive in the CC framework. The security analysis, however, requires an assumption of \emph{measurement buffer} resource in addition to the classical channel to construct a verifiable $\sf{RSP}_{CC}$. Our result confirms that the measurement buffer resource is a strictly non-classical assumption.  
 
In the information-theoretic setting with perfect security\footnote{By perfect security we mean at most input size is allowed to be leaked}, the question of secure delegation of quantum computation with a completely classical client was first considered in~\cite{morimae2014impossibility}. The authors showed a negative result by presenting a \emph{scheme-dependent} impossibility proof. This was further studied in~\cite{dunjko2016blind,aaronson2017implausibility} which showed that such a classical delegation would have implications in computational complexity theory. To be precise, ~\cite{aaronson2017implausibility} conjecture that such a result is unlikely by presenting an oracle separation between $\textsc{BQP}$ and the class of problems that can be classically delegated with perfect security (which is equivalent to the complexity class $\textsc{NP/poly} \cap \textsc{coNP/poly}$ as proven by \cite{abadi1987hiding}). On the other hand, a different approach to secure delegated quantum computation with a completely classical client, without going via the route of $\sf{RSP}_{CC}$, was also developed in ~\cite{mantri2017flow} where the server is unbounded and in \cite{mahadev2017classical,brakerski2018quantum} with the bounded server. The security was analysed for the overall protocol (rather than using a module to replace quantum communication). It is worth noting that~\cite{mantri2017flow} is known to be not composable secure in the Constructive Cryptography framework~\cite{atul}.
    

\section{Preliminaries}\label{sec:prelim}

 We assume basic familiarity with quantum computing, for a detailed introduction see \cite{nielsen2002quantum} (note that in this paper all Hilbert spaces are assumed to have a finite dimension). We just formalize here what we mean in this paper by \emph{quantum instrument}, which is a concept introduced by \cite{QuantumInstrument}, and which is a generalization of completely positive trace preserving (CPTP) maps to maps having both classical and quantum outputs:

\begin{definition}[Quantum Instrument]\label{def:quantumInstrument}
  A map $\Lambda: \C^{n \times n} \rightarrow \{0,1\}^{m_1} \times \C^{m_2 \times m_2}$ is said to be a quantum instrument if there exists a collection $\{\cE_y\}_{y \in \{0,1\}^{m_1}}$ of trace-non-increasing completely positive maps such that the sum is trace-preserving (i.e. for any positive operator $\rho$, $\sum_y \cE_y(\rho) = \Tr(\rho)$), and, if we define $\rho_y = \frac{\cE_y(\rho)}{\Tr(\cE_y(\rho))}$, then $\pr{\Lambda(\rho) = (y, \rho_y)} = \Tr(\cE_y(\rho))$.
\end{definition}

\subsection{The Constructive Cryptography Framework}

The Constructive Cryptography (CC) framework (also sometimes referred to as the Abstract Cryptography (AC) framework) introduced by Maurer and Renner \cite{maurer2011abstract} is a top-down and axiomatic approach, where the desired functionality is described as an (ideal) \emph{resource} $\cS$ with a certain input-output behavior independent of any particular implementation scheme. A resource has some interfaces $\cI$ corresponding to the different parties that could use the resource. In our case, we will have only two interfaces corresponding to Alice (the client) and Bob (the server), therefore $\cI = \{A, B\}$. Resources are not just used to describe the desired functionality (such as a perfect state preparation resource), but also to model the assumed resources of a protocol (e.g., a communication channel). The second important notion is the \emph{converter} which, for example, are used to define a protocol. Converters always have two interfaces, an inner and an outer one, and the inner interface can be connected to the interface of a resource. For example, if $\cR$ is a resource and $\pi_A$, $\pi_B \in \Sigma$ are two converters (corresponding to a given protocol making use of resource $\cR$) we can connect these two converters to the interface $A$ and $B$, respectively, (the resulting object being a resource as well) using the following notation: $\pi_A \cR \pi_B$.

In order to characterize the distance between two resources (and therefore the security), we use the so-called \emph{distinguishers}. We then say that two resources $\cS_1$ and $\cS_2$ are indistinguishable (within $\eps$), and denote it as $\cS_1 \approx_\eps \cS_2$, if no distinguisher can distinguish between $\cS_1$ and $\cS_2$ with an advantage greater than $\eps$. In the following, we will mostly focus on quantum-polynomial-time (QPT) distinguishers.

Central to Constructive Cryptography is the notion of a secure construction of an (ideal) resource $\cS$ from an assumed resource $\cR$ by a protocol (specified as a pair of converters).
We directly state the definition for the special case we are interested in, namely in two-party protocols between a client $A$ and a server $B$, where $A$ is always considered to be honest. The definition can therefore be simplified as follows:
\begin{definition}[See {\cite{maurer2011constructive,maurer2011abstract}}]\label{def:realize}
  Let $\cI = \{A,B\}$ be a set of two interfaces ($A$ being the left interface and $B$ the right one), and let $\cR,\cS$ be two resources. Then, we say that for the two converters $\pi_A,\pi_B$, the protocol $\protocol := (\pi_A,\pi_B)$ \emph{(securely) constructs} $\cS$ from $\cR$ within $\eps$, or that $\cR$ \emph{realizes} $\cS$ within $\eps$, denoted:
  \begin{align}
    \cR \constructs{\protocol}{\eps} \cS
  \end{align}
  if the following two conditions are satisfied:
  \begin{itemize}
  \item Availability (i.e. correctness):
  \begin{align}
    \pi_A\cR \pi_B \approx_\eps \cS \filter
  \end{align} (where $\filter$ represents a filter, i.e. a trivial converter that enforces honest/correct behavior \footnote{Usually, a filter simply sends a bit $c=0$ and then forwards all communications between its two interfaces (this filter will be denoted by $\filter^{c=0}$), but it could be a more general converter. When the filter is not clear from the context, we need to specify also which filter we consider.}, and $A \approx_\eps B$ means that no polynomial quantum distinguisher can distinguish between $A$ and $B$ (given black-box access to $A$ or $B$) with an advantage better than $\eps$)
  \item Security: there exists $\sigma \in \Sigma$ (called a simulator) such that:
  \begin{align}
    \pi_A \cR \approx_\eps \cS \sigma
  \end{align}
  \end{itemize}
  We also extend this definition when $\eps$ is a function $\eps: \N \rightarrow \R$: we say that $\cS$ is \emph{$\eps$-classically-realizable} if for any $n \in \N$, $\cS$ is $\eps(n)$-realizable\footnote{Note that here the protocols $\pi_A^{(n)}$ and $\pi_B^{(n)}$ may or may not be efficient to compute given $n$, so our nogo-result will apply to non-uniform circuits, and therefore also to uniform circuits.}.
\end{definition}

The intuition behind this definition is that if no distinguisher can know whether he is interacting with an ideal resource or with the real protocol, then it means that any attack done in the ``real world'' can also be done in the ``ideal world''. Because the ideal world is secure by definition, so is the real world. Using such a definition is particularly useful to capture the ``leakage'' of information to the server. This is quite subtle to capture in the real world, but very natural in the ideal world.

In our work, we instantiate a general model of computation to capture general quantum computations within converters which ensures that they follow the laws of quantum physics (e.g., excluding that the input-output behavior is signaling). Indeed, without such a restriction, we could not base our statements on results from quantum physics, because an arbitrary physical reality must not respect them, such as cloning of quantum states, signaling, and more. More specifically, in this work, we assume that any converter that interacts classically on its inner interface and outputs a single quantum message on its outer interface can be represented as a sequence of quantum instruments (which is a generalization of CPTP maps taking into account both quantum and classical outputs, see \cref{def:quantumInstrument}) as represented in \cref{fig:protocolintofunctions} and constitutes the most general expression of allowed quantum operations. More precisely, this model takes into account interactive converters (and models the computation in sequential dependent stages). This is similar to if one would in the classical world instantiate the converter by a sequence of classical Turing machines (passing state to each other)~\cite{Goldreich_2001}. For more details and to see why such definitions are enough to provide composability, see \cref{app:game_and_ac}.

\subsection{Notation} 

We denote by $\Z \frac{\pi}{2} $ the set of the 4 angles $\{0,\frac{\pi}{2}, \pi, \frac{3\pi}{2} \}$, and $\Z \frac{\pi}{4} = \{0, \frac{\pi}{4}, ..., \frac{7\pi}{4} \}$ the similar set of 8 angles. If $\rho$ is a quantum state, $[\rho]$ is the \emph{classical} representation (as a density matrix) of this state. We also denote the quantum state $\Ket{+_{\theta}} := \frac{1}{\sqrt{2}}(\Ket{0} + e^{i\theta} \Ket{1})$, where $\theta \in \Z \frac{\pi}{4}$, and for any angle $\theta$, $[\theta]$ will denote $[\ketbra{+_{\theta}}]$, i.e. the classical description of the density matrix corresponding to $\ket{+_\theta}$. For a protocol $\mathcal{P} = (P_1, P_2)$ with two interacting algorithms $P_1$ and $P_2$ denoting the two participating parties, let $r \gets \left\langle P_1, P_2 \right\rangle$ denote the execution of the two algorithms, exchanging messages, with output $r$. We use the notation $\cC$ to denote the \emph{classical channel} resource, that just forwards classical messages between the two parties.


\section{Impossibility of Composable Classical \texorpdfstring{$\sf{RSP}$}{RSP} \label{sec:impossibleComposableRSP}}

In this section, we first define the general notion of what \RSP tries to achieve in terms of resources and subsequently quantify information that an ideal \RSP resource must leak at its interface to the server even if the distinguisher is computationally bounded. One would expect, that against bounded distinguisher, the resource can express clear privacy guarantees, which we prove cannot be the case.

The reason is roughly as follows: assuming that there exists a simulator making the ideal resource indistinguishable from the real protocol, we can exploit this fact to construct an algorithm that can classically describe the quantum state given by the ideal resource. It is not difficult to verify that there could exist an inefficient algorithm (i.e. with exponential run-time) that achieves such a task. We show that even a computationally bounded distinguisher can distinguish the real protocol from the ideal protocol whenever a simulator's strategy is independent of the classical description of the quantum state. This would mean that for an $\sf{RSP}$ protocol to be composable there must exist a simulator that possesses at least a classical transcript encoding the description of a quantum state. This fact coupled with the quantum no-cloning theorem implies that the most meaningful and natural $\sf{RSP}$ resources cannot be realized from a classical channel alone. We finally conclude the section by looking at the class of imperfect (describable) $\sf{RSP}$ resources which avoid the no-go result at the price of being ``fully-leaky'', not standard, and having an unfortunately unclear composable security.

\subsection{Remote State Preparation and Describable Resources} \label{subsec:ccrsp_notions}

We first introduce, based on the standard definition in the Constructive Cryptography framework, the notion of \textit{correctness} and \textit{security} of a two-party protocol which constructs (realizes) a resource from a \emph{classical} channel $\cC$.

\begin{definition}[Classically-Realizable Resource]\label{def:classimplem}
  An ideal resource $\cS$ is said to be \emph{$\eps$-classically-realizable} if it is realizable (in the sense of \cref{def:realize}) from a \emph{classical} channel, i.e. if there exists a protocol $\protocol = (\pi_A, \pi_B)$ between two parties (interacting classically) such that: 
  \begin{align}
      \cC \constructs{\protocol}{\eps} \cS
  \end{align}
\end{definition}

We would like to point out that since Alice is honest, this definition incorporates already the case when Alice and Bob share purely classical resources that are achievable by Alice emulating the resource and sending Bob's output over a classical channel.

\noindent A simple ideal prototype that captures the goal of a \RSP protocol could be phrased as follows: the resource outputs a quantum state (chosen from a set of states) on one interface and classical description of that state on the other interface to the client. For our purposes, this view is too narrow and we want to generalize this notion. 
For instance, a resource could accept some inputs from the client or interact with the server and be powerful enough to comply with the above basic behavior if both follow the protocol. We would like to capture that any resource can be seen as an \RSP resource as soon as we fix a way to efficiently convert the client and server interfaces to comply with the basic prototype. To make this formal, we need to introduce some converters that will witness this: 
\begin{enumerate}
\item A converter $\cA$ will output, after interacting with the ideal resource\footnote{$\cA$ is allowed to interact with the (ideal) resource in a non-trivial manner. However, $\cA$ will often be the trivial converter in the sense that it simply forwards the output of the ideal resource, or -- when the resource waits for a simple activation input -- picks some admissible value as input to the ideal resource and forwards the obtained description to its outer interface.}, a classical description $[\rho]$ which is one of the following:
  \begin{enumerate}
  \item A density matrix (positive and with trace 1) corresponding to a quantum state~$\rho$.
  \item The null matrix, which is useful to denote the fact that we detected some deviation that should not happen in an honest run.
  \end{enumerate} 
\item A converter $\cQ$, whose goal is to output a quantum state $\rho'$ as close as possible to the state $\rho$ output by $\cA$.
\item A converter $\cP$, whose goal is to output a classical description $[\rho']$ of a quantum state $\rho'$ which is on average ``close'' to $\rho$.
\end{enumerate}
\noindent An \RSP must meet two central criteria:
\begin{enumerate}
    \item Accuracy of the classical description of the obtained quantum state: We require that the quantum state $\rho$ described by $\cA$'s output is close to $\cQ$'s output $\rho'$.
    This is to be understood in terms of the trace distance.
    \item Purity of the obtained quantum state: Since the \RSP resource aims to replace a noise-free quantum channel, it is desirable that the quantum state output by $\cQ$ admit a high degree of purity, i.e. more formally, that $\Tr \left( \rho'^2 \right)$ be close to one. Since $\rho'$ is required to be close to $\rho$, this implies a high purity of $\rho$ as well.
\end{enumerate}
It turns out that these two conditions can be unified and equivalently captured requiring that the quantity $\Tr ( \rho\rho' )$ is close to one.
A rigorous formulation of this claim and its proof is provided by \cref{lemma:equivalence_purity_closeness}.

We can also gain a more operational intuition of the notion of \RSP by considering that an \RSP resource (together with $\cA$ and $\cQ$) can be seen, not only as a box that produces a quantum state together with its description but also as a box whose accuracy can be easily \emph{tested}\footnote{This testable property will be of great importance in our argument later.}. For example, if such a box produces a state $\rho'$, and pretends that the description of that state corresponds to $\ket{\phi}$ (i.e. $[\rho] = [\ketbra{\phi}]$), then the natural way to test it would be to measure $\rho'$ by doing a projection on $\ket{\phi}$. This test would pass with probability $p_s := \braket{\phi | \rho' | \phi}$, and therefore if the box is perfectly accurate (i.e. if $\rho' = \ketbra{\phi}$), the test will always succeed. However, when $\rho'$ is far from $\ketbra{\phi}$, this test is unlikely to pass, and we will have $p_s < 1$. We can then generalise this same idea for arbitrary (eventually not pure) states by remarking that $p_s = \braket{\phi | \rho' | \phi} = \Tr(\ketbra{\phi}\rho') = \Tr(\rho\rho')$. Indeed, this last expression corresponds\footnote{Note that it also turns out to be equal to the (squared) fidelity between $\rho$ and $\rho'$ when $\rho$ is pure.} exactly to the probability of outputting $E_0$ when measuring the state $\rho'$ according to the POVM $\{E_0 := \rho, E_1 := I - \rho\}$, and since the classical description of $\rho$ is known, it is possible to perform this POVM and test the (average) accuracy of our box. This motivates the following definition for general \RSP resources.

\begin{definition}[\RSP resources]\label{def:rsp_resource}
  A resource $\cS$ is said to be a \emph{remote state preparation resource} within $\eps$ with respect to converters $\cA$ and $\cQ$ if the following three conditions hold: (1) both converters output a single message at the outer interface, where the output $[\rho]$ of $\cA$ is classical and is either a density matrix or the null matrix, and the output $\rho'$ of $\cQ$ is a quantum state; (2) the equation:
  \begin{equation}
    \esp[([\rho], \rho') \leftarrow \cA \cS \filter \cQ]{ \Tr(\rho \rho')} \geq 1-\eps
    \label{eq:remoteresource}
  \end{equation}
  is satisfied, where the probability is taken over the randomness of $\cA$, $\cS$ and $\cQ$, and finally, (3) for all the possible outputs $[\rho]$ of $([\rho], \rho') \leftarrow \cA \cS \filter \cQ$, if we define $E_0 = \rho$, $E_1 = I - \rho$, then  the POVM $\{E_0, E_1\}$ must be efficiently implementable\footnote{We could also define a similar definition when this POVM can only be approximated (for example because the distinguishers can only perform quantum circuits using a finite set of gates) and the theorems would be similar, up to this approximation, but for simplicity we will stick to that setting.} by any distinguisher.
\end{definition}

Whenever we informally speak of a resource $\cS$ as being an \RSP resource, this has to be understood always in a context where the converters $\cA$ and $\cQ$ are fixed.

\paragraph*{Describable resources.}
So far, we have specified that a resource qualifies as an \RSP resource if, when all parties follow the protocol, we know how to compute a quantum state on the right interface and classical description of a ``close'' state on the other interface. 
A security-related question now is, if it is also possible to extract (possibly inefficiently) from the right interface a \emph{classical} description of a quantum state that is close to the state described by the client. If we find a converter $\cP$ doing this, we would call the (RSP) resource \emph{describable}. The following definition captures this.

\begin{definition}[Describable Resource]\label{def:describable_resource}
  Let $\cS$ be a resource and $\cA$ a converter outputting a single classical message $[\rho]$ on its outer interface (either equal to a density matrix or the null matrix).
  Then we say that $(\cS, \cA)$ is \emph{$\eps$-describable} (or, equivalently, that $\cS$ is describable within $\eps$ with respect to $\cA$) if there exists a (possibly unbounded) converter $\cP$ (outputting a single classical message $[\rho']$ on its outer interface representing a density matrix) such that:
  \begin{equation}
    \esp[([\rho], [\rho']) \leftarrow \cA \cS \cP]{ \Tr(\rho \rho')} \geq 1 - \eps
    \label{eq:describable}
  \end{equation}
  (the expectation is taken over the randomness of $\cS$, $\cA$ and $\cP$).
\end{definition}

\paragraph*{Reproducible converters.}
In the proof of our first result, we will encounter a crucial decoding step. Roughly speaking, the core of this decoding step is the ability to convert the classical interaction with a client, which can be seen as an arbitrary encoding of a quantum state, back into an explicit representation of the state prepared by the server. The ability of such a conversion can be phrased by the following definition.

\begin{definition}[Reproducible Converter]\label{def:reproducibleconverter}
  A converter $\pi$ that outputs (on the right interface) a quantum state $\rho$ is said to be reproducible if there exists a (possibly inefficient) converter $\tilde{\pi}$ such that:
  \begin{enumerate}
  \item the outer interface of $\tilde{\pi}$ outputs only a classical message $[\rho']$
  \item the converter $\pi$ is perfectly indistinguishable from $\tilde{\pi}$ against any unbounded distinguisher $D \in \cD^u$, up to the conversion of the classical messages $[\rho']$ into a quantum state $\rho'$. More precisely, if we denote by $\cT$ the converter that takes as input on its inner interface a classical description $[\rho']$ of a quantum state and outputs that quantum state $ \rho'$ (as depicted in \cref{fig:reproducibleconverter}), we have:
    \begin{equation}
      \cC \pi \approx^{\cD^u}_0 \cC \tilde{\pi} \cT
    \end{equation}
  \end{enumerate}
\end{definition}

  \begin{figure}[ht]
    \centering
    \includegraphics[]{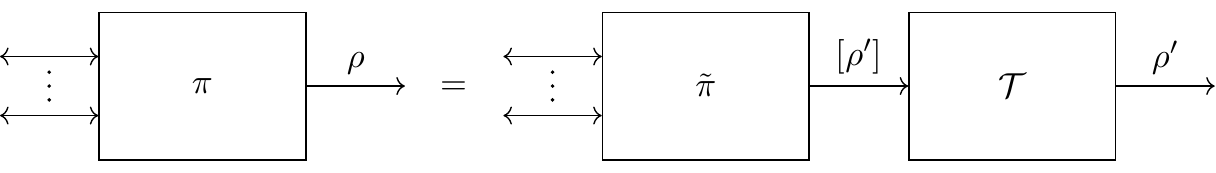}
    \caption{Reproducible converter.}
    \label{fig:reproducibleconverter}
  \end{figure}

\paragraph*{Classical communication and reproducibility.}
We see that in general, being reproducible is a property that stands in conflict with the quantum no-cloning theorem. More precisely, the ability to reproduce implies that there is a way to extract knowledge of a state sufficient to clone it. However, whenever communication is classical, quite the opposite is
true. This is formalized in the following lemma. Intuitively, it says that in the principle it is always possible to compute the exact description of the state from the classical transcript and the \emph{quantum instruments} (circuit) used to implement the action of the converter, where an instrument is a generalized CPTP map which allows a party to output both a quantum and a classical state and is formalized more precisely in \cref{def:quantumInstrument}. Recall that this is the most general way of representing a quantum operation.

  In the proof, we just need to assume that $\pi$ interacts (classically) with the inner interface first, and finally outputs a quantum state on the outer interface, so for simplicity we will stick to that setting. In this  way we can decompose $\pi$ as depicted in \cref{fig:protocolintofunctions} using the following notation:
  \begin{align}\label{eq:notation_pi_i}
    \pi := (\pi_i)_{i}
  \end{align}
  Each $\pi_i$ represents a round, and we denote with $(y_i, \rho_{i+1}) \leftarrow  \pi_i(x_i, \rho_i)$ the output of the $i$-th round, assuming that $x_i \in \{0,1\}^{l_i}$ is a classical input message sent from the inner interface, $\rho_{i}$ is the internal quantum state (density matrix) after round $i-1$, $\rho_{i+1}$ is the internal state after round $i$, and $y_i \in \{0,1\}^{l'_i} \cup \bot$ is a classical message, sent to the inner interface when $y_i \neq \bot$. For the first protocol, we set $\rho_0 = (1)$, which is the trivial density matrix of dimension 1. Moreover, when $y_i = \bot$, we do not send any message to the inner interface and instead we send $\rho_{i+1}$ to the outer interface and we stop the protocol. Note that if we want to let $\pi$ send the first message instead of receiving it, we can set $x_0 = \bot$, and similarly, if the last message is sent instead of received, we can add one more round where we set $x_{n+1} = \bot$.

  \begin{figure}[ht]
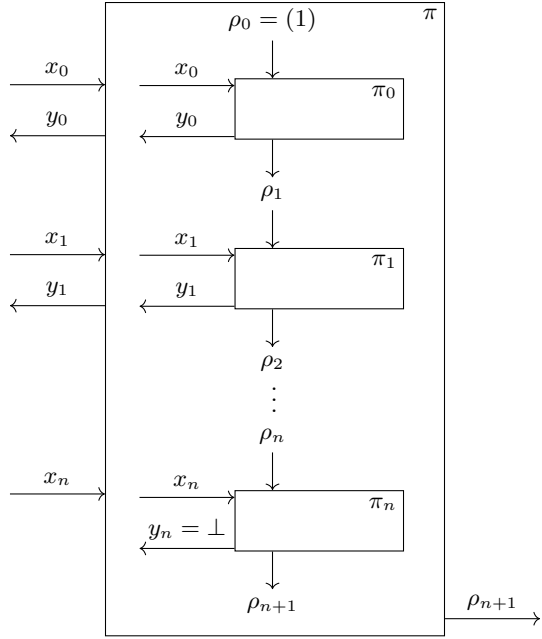

    \centering
    \begin{bbrenv}{Pi}
      \begin{bbrbox}[name=$\pi$]
        \begin{bbrenv}{Pi1}
          \begin{bbrbox}[name=$\pi_0$]
            
          \end{bbrbox}
          \bbrinput{$\rho_0 = (1)$}
          \bbroutput{$\rho_1$}
          \bbrmsgto{top=$x_0$}
          \bbrmsgfrom{top=$y_0$}
        \end{bbrenv}\\
        ~\\
        \begin{bbrenv}{Pi1}
          \begin{bbrbox}[name=$\pi_1$]
            
          \end{bbrbox}
          \bbrinput{}
          \bbrmsgto{top=$x_1$}
          \bbrmsgfrom{top=$y_1$}
          \bbroutput{$\rho_2$}
        \end{bbrenv}\\
        \vspace{-1mm}\\\hspace*{19.5mm}\vdots\vspace{5mm}\\
        \begin{bbrenv}{Pi1}
          \begin{bbrbox}[name=$\pi_n$]
            
          \end{bbrbox}
          \bbrinput{$\rho_{n}$}
          \bbrmsgto{top=$x_n$}
          \bbrmsgfrom{top={$y_n = \bot$}}
          \bbroutput{$\rho_{n+1}$}
        \end{bbrenv}\\
      \end{bbrbox}
      \bbrmsgspace{4.2mm}
      \bbrmsgto{top=$x_0$}
      \bbrmsgfrom{top=$y_0$}
      \bbrmsgspace{9mm}
      \bbrmsgto{top=$x_1$}
      \bbrmsgfrom{top=$y_1$}
      \bbrmsgspace{18.2mm}
      \bbrmsgto{top=$x_n$}
      \bbrqryspace{7.5cm}
      \bbrqryto{top=$\rho_{n+1}$}
    \end{bbrenv}
    \caption{Representation of an interactive protocol $\pi$ into a sequence of quantum instruments.}
    \label{fig:protocolintofunctions}
  \end{figure}

Now, we can prove that a party, that produces a quantum state at the end of a protocol with exclusively classical communication, is reproducible:

\begin{lemma}\label{lemma:classicalAreReproducible}
  Let $\pi = (\pi_i)_{i}$ (using the notation introduced \cref{eq:notation_pi_i}) be a converter such that:
  \begin{enumerate}
  \item it receives and sends only classical messages from the inner interfaces
  \item it outputs at the end a quantum state on the outer interface
  \item each $\pi_i$ is a quantum instrument
  \end{enumerate}
  then $\pi$ is reproducible.
\end{lemma}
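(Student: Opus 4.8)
The plan is to construct the reproducing converter $\tilde{\pi}$ explicitly and to show that it induces exactly the same joint distribution over the classical transcript exchanged on the inner interface and the final quantum state (after $\cT$ re-materializes it) as $\pi$ does. The crux is the observation that, because the initial internal state is the trivial one-dimensional state $\rho_0 = (1)$ and \emph{all} communication on the inner interface is classical, the internal quantum state after each round is a deterministic function of the classical transcript produced so far.

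First I would make this determinism precise by induction on the round index $i$. Using \cref{def:quantumInstrument}, each $\pi_i(\cdot, x_i)$ is described by a family $\{\cE_y\}$ of trace-non-increasing completely positive maps, and conditioned on emitting the classical output $y_i$, the updated internal state is the normalized state $\rho_{i+1} = \cE_{y_i}(\rho_i)/\Tr(\cE_{y_i}(\rho_i))$. Since $\rho_0$ is deterministic and each conditioning on a fixed classical output yields a uniquely determined output state, it follows that for every partial transcript $(x_0, y_0, \ldots, x_i, y_i)$ the state $\rho_{i+1}$ is uniquely determined; in particular the final state $\rho_{n+1}$ is a deterministic function of the full transcript $(x_0, y_0, \ldots, x_n)$.

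Next I would define $\tilde{\pi}$ so that it simulates $\pi$ purely at the level of \emph{classical descriptions} of density matrices. Maintaining the description $[\rho_i]$ of the current internal state (starting from $[\rho_0] = [(1)]$), upon receiving $x_i$ the converter $\tilde{\pi}$ computes from its description of the instrument $\pi_i$ the outcome probabilities $\Tr(\cE_y(\rho_i))$, samples $y_i$ accordingly, sends it on the inner interface exactly as $\pi$ would, and then updates $[\rho_{i+1}]$ to the classical description of the (deterministic) post-measurement state. These computations are classical and may be inefficient (they manipulate exponentially large matrices), which is permitted since $\tilde{\pi}$ is allowed to be unbounded. At the end, instead of outputting a quantum state, $\tilde{\pi}$ emits the single classical message $[\rho'] := [\rho_{n+1}]$ on its outer interface, and $\cT$ converts it back into the quantum state $\rho_{n+1}$.

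Finally I would establish perfect indistinguishability $\cC\pi \approx^{\cD^u}_0 \cC\tilde{\pi}\cT$ by checking that the two systems produce identical joint distributions at the distinguisher's interfaces. By construction the classical outputs $y_i$ are sampled with exactly the probabilities $\Tr(\cE_{y_i}(\rho_i))$ prescribed by the instruments, so the marginal distribution over transcripts forwarded through $\cC$ is the same in both worlds; and conditioned on any transcript, both $\cC\pi$ and $\cC\tilde{\pi}\cT$ deliver the very same (deterministic) state $\rho_{n+1}$ on the outer interface, by the determinism established above. Hence no distinguisher $D \in \cD^u$, even unbounded, can achieve nonzero advantage, which is precisely the reproducibility condition. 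I expect the only real subtlety to lie in the bookkeeping that $\tilde{\pi}$ reproduces the joint distribution of transcript and state, rather than merely the two marginals separately; this is exactly where the determinism-given-transcript property is indispensable, and it is the reason the trivial starting state $\rho_0 = (1)$ is needed.
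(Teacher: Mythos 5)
Your proposal is correct and follows essentially the same route as the paper: both construct $\tilde{\pi}$ as a round-by-round classical emulation of $\pi$ that tracks the explicit density-matrix description $[\rho_i]$, samples each $y_i$ with the probability $\Tr(\cE_{y_i}(\ket{x_i}\bra{x_i}\otimes\rho_i))$ prescribed by the quantum instrument, and outputs $[\rho_{n+1}]$ for $\cT$ to re-materialize. Your explicit induction showing that the internal state is a deterministic function of the classical transcript is a welcome sharpening of the step the paper dismisses as ``trivial,'' but it is the same argument, not a different one.
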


\begin{proof}
  The intuition behind the proof is to argue that because the only interactions with the outside world are classical as seen from \cref{fig:protocolintofunctions}, the internal state of $\pi$ can always be computed (in exponential time) manually.\\
  More precisely, for all $i$, because $\pi_i$ is a quantum instrument, there exists a set $\{\cE_{y_i}\}$ of maps having the properties defined in \cref{def:quantumInstrument}. And because for all $y_i$, $\cE_{y_i}$ is completely positive, there exists a finite set of matrices $\{B^{(i,y_i)}_k\}_{k}$, known as Kraus operators, such that we have for all $\rho$ (and in particular for $\rho = \ket{x_i}\bra{x_i} \otimes \rho_i$):
  \begin{equation}
    \cE_{y_i}(\rho) = \sum_k B^{(i,y_i)}_k \rho B^{(i,y_i)\dagger}_k
    \end{equation}
  Therefore, for all $x_i$, $\rho_i$ and $y_i$, we have with probability $p_{y_i} := \Tr(\cE_{y_i}(\ket{x_i}\bra{x_i} \otimes \rho_i))$:
  \begin{align}
    \pi_i(x_i, \rho_i)
    &= (y_i, \cE_{y_i}(\ket{x_i}\bra{x_i} \otimes \rho_i))\\
    &= (y_i, \underbrace{\sum_k B^{(i,y_i)}_k (\ket{x_i}\bra{x_i} \otimes \rho_i) B^{(i,y_i)\dagger}_k}_{\rho_{i+1}})
    \label{eq:sigmaiplusone}
  \end{align}
  We remark that if we know $[\rho_i]$, the coefficients of the matrix $\rho_i$, then for all $y_i$ we can compute the probability $p_{y_i}$ of outputting $y_i$, and the corresponding $[\rho_{i+1}]$, (the coefficients of the matrix $\rho_{i+1}$) by just doing the above computation. So to construct $\tilde{\pi}$ (using notations from \cref{def:reproducibleconverter}) we do as follows:
  \begin{itemize}
  \item first, for all $i$ we construct $\tilde{\pi}_i$, which on input $(x_i, [\rho_i])$ outputs $(y_i, [\rho_{i+1}])$ with probability $p_{y_i}$ using the formula \cref{eq:sigmaiplusone}.
  \item then, we define $\tilde{\pi}$ as $(\tilde{\pi}_i)$ with $[\rho_0] = (1)$.
  \end{itemize}
  Then, we trivially have $\cC \pi \approx_0 \cC \tilde{\pi} \cT$, even for unbounded distinguishers, because $\tilde{\pi}$ is exactly the same as $\pi$, except that the representations of the quantum states in $\tilde{\pi}$ are matrices, while they are actual quantum states in $\pi$. Therefore, adding $\cT$ (which turns any $[\rho_i]$ into $\rho_i$) on the outer interface (which is the only interface that sends a classical state $[\rho_i]$) gives us $\pi \approx_0 \cC \tilde{\pi}\cT$.
  \end{proof}

\subsection{Classically-Realizable \texorpdfstring{$\sf{RSP}$}{RSP} are Describable}\label{subsec:imposs_cc_rsp}

In this section we show our main result about remote state preparation resources, which interestingly links a constructive notion (\emph{composability}) with respect to a computational notion with an information theoretic property (\emph{describability}).

This implies directly the \emph{impossibility result} regarding the existence of non-describable $\sf{RSP}_{CC}$ composable protocols (secure against \emph{bounded} BQP distinguishers). While this theorem does not rule out all the possible $\sf{RSP}$ resources, it shows that most ``\emph{useful}'' $\sf{RSP}$ resources are impossible. Indeed, the describable property is usually not a desirable property, as it means that an unbounded adversary could learn the description of the state he received from an ideal resource. To illustrate this theorem, we will see in the \cref{subsec:example_imposs_resource} some examples showing how this result can be used to prove the impossibility of classical protocols implementing some specific resources, and in \cref{subsec:imperfectRSP} we will see some example of ``imperfect'' resources escaping the impossibility result.

\begin{theorem}[Classically-Realizable $\sf{RSP}$ are Describable]\label{thm:nogoClassicalRSP}
 \sloppy If an ideal resource $\cS$ is both an $\eps_1$-remote state preparation with respect to some $\cA$ and $\cQ$ and $\eps_2$-classically-realizable (including against only polynomially bounded distinguishers), then it is ${(\eps_1 + 2\eps_2)}$-describable with respect to $\cA$. In particular, if $\eps_1 = \negl[n]$ and $\eps_2 = \negl[n]$, then $\cS$ is describable within a negligible error $\eps_1 + 2\eps_2 = \negl[n]$.
\end{theorem}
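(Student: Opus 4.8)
The plan is to build the (unbounded) describing converter $\cP$ out of the two objects handed to us by $\eps_2$-classical-realizability. The security clause of \cref{def:realize} gives a simulator $\sigma$ with $\pi_A\cC \approx_{\eps_2} \cS\sigma$; since the right interface of $\pi_A\cC$ is the purely classical channel while that of $\cS$ is quantum, $\sigma$ necessarily converts the quantum right interface of $\cS$ into a classical one. On top of this emulated classical interface I would run the honest server. Concretely, treat the honest server followed by the extraction converter, $\pi_B\cQ$, as a single converter that takes classical messages and outputs a quantum state; by \cref{lemma:classicalAreReproducible} it is reproducible, yielding a (possibly inefficient) $\widetilde{\pi_B\cQ}$ that instead outputs a classical description $[\rho']$ and satisfies $\cC\,\pi_B\cQ \approx^{\cD^u}_0 \cC\,\widetilde{\pi_B\cQ}\,\cT$. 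Then I define $\cP := \sigma\,\widetilde{\pi_B\cQ}$, attaching $\sigma$ to the right interface of $\cS$ and feeding its classical output into $\widetilde{\pi_B\cQ}$.

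The core is a short hybrid chain that transports the RSP accuracy bound \cref{eq:remoteresource} onto $\cA\cS\cP$. I would measure all hybrids with the efficient test distinguisher $D$ that, given $[\rho]$ on the left and a quantum state $\tau$ on the right, implements the POVM $\{E_0=\rho,\,E_1 = I-\rho\}$ --- efficient by condition~(3) of \cref{def:rsp_resource} --- and accepts iff the outcome is $E_0$, so that its acceptance probability equals $\esp{\Tr(\rho\tau)}$. Starting from the RSP guarantee $\esp[\cA\cS\filter\cQ]{\Tr(\rho\rho')}\geq 1-\eps_1$, availability $\pi_A\cC\pi_B \approx_{\eps_2} \cS\filter$ (with $\cA$, $\cQ$ and $D$ attached) shows that $D$ accepts on $\cA\pi_A\cC\pi_B\cQ$ with probability at least $1-\eps_1-\eps_2$, and then security $\pi_A\cC \approx_{\eps_2} \cS\sigma$ (with $\cA$ on the left and $\pi_B\cQ$ on the right) shows that $D$ accepts on $\cA\cS\sigma\pi_B\cQ$ with probability at least $1-\eps_1-2\eps_2$. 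Crucially, every converter and distinguisher used in these two transitions is quantum-polynomial-time, so each step costs only the associated $\eps_2$, and this is exactly where the hypothesis ``secure against polynomially bounded distinguishers'' is used.

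It remains to pass from the quantum-output server $\pi_B\cQ$ to its description-valued counterpart inside $\cP$. Because $\sigma$ exposes a classical interface, $\pi_B\cQ$ and $\widetilde{\pi_B\cQ}\,\cT$ are fed identical classical transcripts, and the replication established in the proof of \cref{lemma:classicalAreReproducible} makes their outputs identically distributed; hence $\cA\cS\sigma\pi_B\cQ$ and $\cA\cS\sigma\widetilde{\pi_B\cQ}\,\cT$ are perfectly indistinguishable, even for unbounded $D$, and the acceptance probability is unchanged. Finally, running $D$ on $\cT([\rho'])$ accepts with probability $\Tr(\rho\rho')$, which is numerically the same quantity that \cref{eq:describable} computes directly from the descriptions $[\rho]$ and $[\rho']$; therefore $\esp[\cA\cS\cP]{\Tr(\rho\rho')} \geq 1-\eps_1-2\eps_2$, i.e.\ $(\cS,\cA)$ is $(\eps_1+2\eps_2)$-describable.

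I expect the main difficulty to be keeping the two regimes cleanly separated: the availability and security transitions may only ever be probed by efficient distinguishers, which forces $\cA$, $\cQ$, $\pi_B$ and the POVM test to be efficient (the last granted by condition~(3)), whereas the reproducibility substitution is error-free but invokes the possibly-exponential $\widetilde{\pi_B\cQ}$ and must survive unbounded distinguishers. A second point deserving care is that \cref{lemma:classicalAreReproducible} is phrased with the classical channel $\cC$ as the message source, while I apply it with $\sigma$'s emulated classical interface as the source; I would justify this by appealing to the stronger fact implicit in the lemma's proof, namely that $\widetilde{\pi_B\cQ}$ reproduces the computation of $\pi_B\cQ$ on every classical input transcript and is hence interchangeable with it in any context whose inner interface is classical.
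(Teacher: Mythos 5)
Your proposal is correct and follows essentially the same route as the paper's proof: the same POVM-based test distinguisher, the same two hybrid steps (availability costing $\eps_2$, then security costing another $\eps_2$), and the same use of \cref{lemma:classicalAreReproducible} to replace $\pi_B\cQ$ by a description-outputting converter, yielding $\cP = \sigma\cB$. Even the subtlety you flag at the end about applying reproducibility behind $\sigma$'s classical interface rather than behind $\cC$ is handled in the paper in the same spirit, via a footnote noting that $\cC$ is a neutral resource that can be inserted and removed.
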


\begin{proof}

Let $\cS$ be an $\eps_1$-remote state preparation resource with respect to $(\cA, \cQ)$ which is $\eps_2$-classically-realizable. Then there exist $\pi_A$, $\pi_B$, $\sigma$, such that:
\begin{equation}
    \esp[([\rho], \rho') \leftarrow \cA \cS \filter \cQ]{ \Tr(\rho \rho')} \geq 1 - \eps_1
    \label{eq:complprs_remoteresource}
  \end{equation}
  \begin{equation}
    \pi_A \cC \pi_B \approx_{\eps_2} \cS \filter
    \label{eq:complprs_correct}
  \end{equation}
  and
  \begin{equation}
    \pi_A \cC \approx_{\eps_2} \cS \sigma
    \label{eq:complprs_sim}
  \end{equation}
  Now, using (\ref{eq:complprs_correct}), we get:
  \begin{equation}
    \cA \pi_A \cC \pi_B \cQ \approx_{\eps_2} \cA \cS \filter \cQ
    \label{eq:complpr}
  \end{equation}
  So it means that we can't distinguish between $\cA \cS \filter \cQ$ and $\cA \pi_A \cC \pi_B \cQ$ with an advantage better than $\eps_2$ (i.e. with probability better than $\frac{1}{2}(1+\eps_2)$). But, if we construct the following distinguisher, that runs $([\rho], \rho') \leftarrow \cA \cS \filter \cQ$, and then measures $\rho'$ using the POVM $\{E_0, E_1\}$ (possible because this POVM is assumed to be efficiently implementable by distinguishers in $\cD$), with $E_0 = [\rho]$ and $E_1 = I-[\rho]$ (which is possible because we know the classical description of $\rho$, which is positive and smaller than $I$, even when $[\rho] = 0$), we will measure $E_0$ with probability $1-\eps_1$. So it means that by replacing $\cA \cS \filter \cQ$ with $\cA \pi_A \cC \pi_B \cQ$, the overall probability of measuring $E_0$ needs to be close to $1-\eps_1$. More precisely, we need to have:
  \begin{equation}
    \esp[([\rho], \rho') \leftarrow \cA \pi_A \cC \pi_B \cQ]{ \Tr(\rho \rho')} \geq 1 - \eps_1 - \eps_2
  \end{equation}
  \begin{subproof}
    Indeed, if the above probability is smaller than $1 - \eps_1 - \eps_2$, then we can define a distinguisher that outputs $0$ if he measures $E_0$, and $1$ if he measures $E_1$, and his probability of distinguishing the two distributions would be equal to:
    \begin{align}
      &\frac{1}{2} \esp[([\rho], \rho') \leftarrow \cA \cS \filter \cQ]{\Tr(\rho\rho')} + \frac{1}{2} \esp[([\rho], \rho') \leftarrow \cA \pi_A \cC \pi_B \cQ]{\Tr((I-\rho)\rho')}\\
      &> \frac{1}{2} \left( (1-\eps_1) + 1 - (1-\eps_1-\eps_2) \right)\\
      &= \frac{1}{2} (1 + \eps_2)
    \end{align}
    So this distinguisher would have an advantage greater than $\eps_2$, which is in contradiction with \cref{eq:complpr}.
  \end{subproof}
  Using a similar argument and \cref{eq:complprs_correct}, we have:
  \begin{equation}
    \esp[([\rho], \rho') \leftarrow \cA \cS \sigma \pi_B \cQ]{ \Tr(\rho \rho')} \geq 1 - \eps_1 - 2\eps_2
  \end{equation}
  We will now use $\pi_B \cQ$ to construct a $\cB$ that can describe the state given by the ideal resource. To do that, because $\pi_B \cQ$ interacts only classically with the inner interface and outputs a single quantum state on the outer interface, then according to \cref{lemma:classicalAreReproducible}, $\pi_B \cQ$ is reproducible, i.e. there exists\footnote{Note that here $\cB$ is not efficient anymore, so that's why in the describable definition we don't put any bound on $\cB$, but of course the proof does apply when the distinguisher is polynomially bounded.} $\cB$ such that $\cC \pi_B \cQ \approx_0 \cC \cB \cT$. Therefore\footnote{Indeed, we also have in particular $\cA \cS \sigma \cC \pi_B \cQ \approx_0 \cA \cS \sigma \cC \cB \cT$, and because $\cC$ is a neutral resource \cite[Sec. C.2]{maurer2011abstract} we can remove $\cC$.}, we have:
  \begin{equation}
    \esp[([\rho], \rho') \leftarrow \cA \cS \sigma \cB \cT]{ \Tr(\rho \rho')} \geq 1 - \eps_1 - 2\eps_2
  \end{equation}
  But because $\cT$ simply converts the classical description $[\rho']$ into $\rho'$, we also have:
  \begin{equation}
    \esp[([\rho], [\rho']) \leftarrow \cA \cS \sigma \cB]{ \Tr(\rho \rho')} \geq 1 - \eps_1 - 2\eps_2
  \end{equation}
  After defining $\cP = \sigma \cB$, we have that $\cS$ is $(\eps_1 + 2\eps_2)$-describable, which ends the proof.
\end{proof}

\subsection{\texorpdfstring{$\sf{RSP}$}{RSP} Resources Impossible to Realize Classically
\label{subsec:example_imposs_resource}}

In the last section we proved that if an $\sf{RSP}$ functionality is classically-realizable (secure against polynomial quantum distinguishers), then this resource is describable by an unbounded adversary having access to the right interface of that resource.

Our main result in the previous section directly implies that as soon as there exists \emph{no unbounded} adversary that, given access to the right interface, can find the classical description given on the left interface, then the \RSP resource is \emph{impossible} to classically realize (against \emph{bounded} BQP distinguishers). Very importantly, this no-go result shows that the \emph{only} type of $\sf{RSP}$ resources that can be classically realized are the ones that \emph{leak} on the right interface enough information to allow an (possibly unbounded) adversary to determine the classical description given on the left interface. From a security point of view, this property is highly non-desirable, as the resource must leak the \emph{secret description} of the state at least in \emph{some representation}.

In this section we present some of these $\sf{RSP}$ resources that are impossible to classically realize. 

\begin{definition}[Ideal Resource $\channelBB$]\label{def:s_bb84}
$\channelBB$ is the verifiable $\sf{RSP}$ resource ($\sf{RSP}$ which does not allow any deviation from the server), that receives no input, that internally picks a random $\theta \leftarrow \Z \frac{\pi}{2}$, and that sends $\theta$ on the left interface, and $\ket{+_\theta}$ on the right interface as shown in \cref{fig:cT_pi2}.
\end{definition}

\begin{figure}[ht]
  \centering
  \begin{bbrenv}{ctpi2}
    \begin{bbrbox}[name=$\channelBB$]%
      \pseudocode{%
        \theta \gets \Z \frac{\pi}{2}
      }%
    \end{bbrbox}
    \bbrmsgspace{4mm}
    \bbrmsgfrom{top={$\theta$}}
    \bbrqryspace{4mm}
    \bbrqryto{top={$\ket{+_\theta}$}}
  \end{bbrenv}
  \caption{Ideal resource $\channelBB$}
  \label{fig:cT_pi2}
\end{figure}

\begin{lemma}
  There exists a universal constant $\eta > 0$, such that for all $0 \leq \eps < \eta$ the resource $\channelBB$ is not $\eps$-classically-realizable.
\end{lemma}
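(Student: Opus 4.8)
The plan is to combine \cref{thm:nogoClassicalRSP} with a quantitative single-copy state-estimation bound for the four equatorial states $\ket{+_\theta}$, $\theta \in \Z\frac{\pi}{2}$.

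First I would observe that $\channelBB$ is a $0$-remote-state-preparation resource in the sense of \cref{def:rsp_resource}: let $\cA$ read $\theta$ off the left interface and output the classical description $[\ketbra{+_\theta}]$, and let $\cQ$ simply forward the qubit $\ket{+_\theta}$ it receives on the right interface. Then $\Tr(\rho\rho') = \Tr(\ketbra{+_\theta}\ketbra{+_\theta}) = 1$ deterministically, so \cref{eq:remoteresource} holds with $\eps_1 = 0$; moreover the verification POVM $\{\ketbra{+_\theta}, I-\ketbra{+_\theta}\}$ is a projective measurement in a known single-qubit basis and is hence efficiently implementable, as required. Consequently, if $\channelBB$ were $\eps$-classically-realizable, then by \cref{thm:nogoClassicalRSP} (instantiated with $\eps_1 = 0$ and $\eps_2 = \eps$) it would be $2\eps$-describable with respect to $\cA$.

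The heart of the argument is then to show that $\channelBB$ cannot be too describable. Unfolding \cref{def:describable_resource}, a describing converter $\cP$ receives the single qubit $\ket{+_\theta}$ on the right interface and outputs a classical matrix $[\rho']$; interpreting this output as a prepared state exhibits $\cP$ as an entanglement-breaking (measure-and-prepare) channel $\Lambda$, and its describability error $\delta$ satisfies $1-\delta \le \frac14\sum_{\theta}\Tr\bigl(\ketbra{+_\theta}\,\Lambda(\ketbra{+_\theta})\bigr) \le F^\ast$, where
$$ F^\ast := \max_{\Lambda\ \text{entanglement-breaking}}\ \frac{1}{4}\sum_{\theta \in \Z\frac{\pi}{2}} \Tr\bigl(\ketbra{+_\theta}\,\Lambda(\ketbra{+_\theta})\bigr). $$
I would evaluate $F^\ast$ by writing $\Lambda(\cdot) = \sum_k \Tr(E_k\,\cdot\,)\,\sigma_k$ for a POVM $\{E_k\}$ and output states $\{\sigma_k\}$, and passing to Bloch vectors: the four Bloch vectors $\vec r_\theta \in \{\pm\hat x, \pm\hat y\}$ sum to zero, which kills the linear terms, and a short computation gives $\sum_\theta (\vec n_k\cdot\vec r_\theta)(\vec s_k\cdot\vec r_\theta) = 2\,\vec n_{k,\perp}\cdot\vec s_{k,\perp}$. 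This reduces the objective to $\frac14\sum_k c_k\bigl(1 + \tfrac12\,\vec n_{k,\perp}\cdot\vec s_{k,\perp}\bigr)$, which is maximized at $\tfrac34$ using $\sum_k c_k = 2$ and $\vec n_{k,\perp}\cdot\vec s_{k,\perp} \le 1$. Morally this is precisely the no-cloning theorem in quantitative form: a classical description extracted from a single copy would let one prepare arbitrarily many copies, yet single-copy estimation of an unknown state drawn from these four equatorial states is capped at fidelity $F^\ast = \tfrac34 < 1$.

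Putting the pieces together, $2\eps$-describability forces $2\eps \ge 1 - F^\ast = \tfrac14$, i.e. $\eps \ge \tfrac18$; hence for every $\eps < \eta := \tfrac18$ the resource $\channelBB$ is not $\eps$-classically-realizable, which is the claim. I expect the main obstacle to be the third step — pinning down the optimal estimation fidelity $F^\ast$ rigorously (or at least some explicit constant strictly below $1$) — since this is the single point where a physical impossibility (no-cloning, equivalently the limits of single-copy state estimation) must be turned into a concrete numerical bound; everything else is a direct unwrapping of the definitions and an application of \cref{thm:nogoClassicalRSP}.
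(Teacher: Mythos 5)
Your proof is correct, and it follows the same overall skeleton as the paper's: exhibit $\cA,\cQ$ making $\channelBB$ a $0$-\RSP resource, invoke \cref{thm:nogoClassicalRSP} to get $2\eps$-describability, and then derive a contradiction from a lower bound on the describability error. The one place where you genuinely diverge is the key step of showing that $\channelBB$ is not $\delta$-describable for small $\delta$. The paper argues non-constructively: perfect describability would give a perfect cloner of the non-orthogonal states $\ket{+_\theta}$, and then compactness of the set of CPTP maps together with continuity of the fidelity upgrades "no perfect cloner" to "no cloner with fidelity arbitrarily close to $1$", yielding an unspecified constant $\eta>0$ (the paper explicitly defers a quantitative bound to an SDP or to the cited reference). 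You instead observe that any describing converter $\cP$ is a measure-and-prepare channel and compute the optimal single-copy estimation fidelity for the four equatorial states directly in Bloch coordinates, obtaining $F^\ast = \tfrac34$ (your computation checks out: $\sum_\theta \vec r_\theta = 0$ kills the linear terms, $\sum_\theta (\vec n\cdot\vec r_\theta)(\vec s\cdot\vec r_\theta) = 2\,\vec n_\perp\cdot\vec s_\perp$, and $\sum_k c_k = 2$ gives the bound $\tfrac34$, attained by measuring in the $\{\ket{+},\ket{-}\}$ basis and re-preparing). This buys an explicit constant $\eta = \tfrac18$ in place of the paper's existential $\eta$, at the cost of a computation that would have to be redone for other state ensembles, whereas the paper's compactness argument transfers verbatim to any finite set of pairwise non-orthogonal states. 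A further minor difference: the paper's final contradiction treats $\eps$ as a function of $n$ tending to $0$, whereas your argument works directly for any constant $\eps < \tfrac18$, which actually matches the lemma statement more closely.
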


\begin{proof}
  This proof is at its core a direct consequence of quantum no-cloning: If we define $\cA(\theta) := [\ketbra{+_\theta}]$ ($\cA$ just converts $\theta$ into its classical density matrix representation) and $\cQ$ the trivial converter that just forwards any message, then $\channelBB$ is a $0$-remote state preparation resource with respect to $\cA$ and $\cQ$ because:
  \begin{align}
    \esp[([\rho], \rho') \leftarrow \cA \channelBB \filter \cQ]{ \Tr(\rho \rho')} &= \frac{1}{4} \sum_{\theta \in \Z \frac{\pi}{2}} \Tr(\ketbra{+_\theta}\ketbra{+_\theta}) = 1 \geq 1 - 0
  \end{align}
  Then, we remark also that there exists a constant $\eta > 0$ such that for all $\delta < \eta$, $\channelBB$ is not $\delta$-describable with respect to $\cA$.
  \begin{subproof}
    Indeed, it is first easy to see that $\channelBB$ is not $0$-describable with respect to $\cA$. Indeed, we can assume by contradiction that there exists $\cP$ such that:
    \begin{align}
      \esp[([\rho], [\rho']) \leftarrow \cA \channelBB \cP]{ \Tr(\rho \rho')} = 1
      \label{eq:preq1}
    \end{align}
    Then, because $\rho = \ketbra{+_\theta}$ is a pure state, $\Tr(\rho \rho')$ corresponds to the fidelity of $\rho$ and $\rho'$, so $\Tr(\rho \rho') = 1 \Leftrightarrow \rho = \rho'$. But this is impossible because $\cP$ just has a quantum state $\rho$ as input, and if he can completely describe this quantum state then he can actually clone perfectly the input state with probability 1. But because the different possible values of $\rho$ are not orthogonal, this is impossible due to the no-cloning theorem.

    Moreover, it is also not possible to find a sequence $(\cP^{(n)})_{n \in N}$ of CPTP maps that produces two copies of $\rho$ with a fidelity arbitrary close to 1 (when $n \rightarrow \infty$), because CPTP maps are compact and the fidelity is continuous. \\ Therefore, there exists a constant $\eta > 0$,\footnote{Note that for finding a more precise bound for $\eta$, it is possible to use Semidefinite Programming (SDP), or the method presented in \cite[p. 2]{KRK2012}. However in our case it is enough to say that $\eps > 0$ as we are interested only in asymptotic security.} such that:
    \begin{equation}
        \esp[([\rho], [\rho']) \leftarrow \cA \channelBB \cP]{ \Tr(\rho \rho')} < 1 - \eta
    \end{equation}
  \end{subproof}
  \sloppy Now, by contradiction, we assume that $\channelBB$ is $\eps$-classically-realizable. Because ${\lim_{n \rightarrow \infty} \eps(n) = 0}$, there exists $N \in \N$ such that $\eps(N) < \eta/2$. So, using \cref{thm:nogoClassicalRSP}, $\channelBB$ is $2\eps(N)$-describable with respect to $\cA$, which contradicts $2\eps(N) < \eta$.
\end{proof}

Next, we describe $\sf{RSP}_V$, a variant of $\channelBB$ introduced in \cite{gheorghiu2019computationally}. In the latter, $\sf{RSP}_V$, the adversary can make the resource abort, that the set of output states is bigger, and that the client can partially choose the basis of the output state. Similar to the $\channelBB$, we prove that classically-realizable $\sf{RSP}_V$ is not possible. Before going into the details of the no-go result, we formalize the ideal resource for a verifiable remote state preparation, $\sf{RSP}_V$, below.  

\begin{definition}[Ideal Resource $\sf{RSP}_V$, See \cite{gheorghiu2019computationally}]\label{def:s_rsp_v}
The ideal verifiable remote state preparation resource, $\sf{RSP}_V$, takes an input $W \in \{X,Z\}$ on the left interface, but no honest input on the right interface. The right interface has a filtered functionality that corresponds to a bit $c \in \{0,1\}$. When $c=1$, $\sf{RSP}_V$ outputs error message $\sf{ERR}$ on both the interfaces, otherwise:
 \begin{enumerate}
     \item if $W = Z$ the resource picks a random bit $b$ and outputs $b \in \Z_2$ to the left interface and a computational basis state $\ket{b}\bra{b}$ to the right interface;
     \item if $W = X$ the resource picks a random angle $\theta \in \Z\frac{\pi}{4}$ and outputs $ \theta$ to the left interface and a quantum state $ \ket{+_\theta}\bra{+_\theta}$ to the right interface.
 \end{enumerate}
\end{definition}

\begin{corollary}
    There exists a universal constant $\eta > 0$, such that for all $0 \leq \eps < \eta$ the resource $\sf{RSP}_V$ is not $\eps$-classically-realizable.
\end{corollary}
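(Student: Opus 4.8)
The plan is to mirror the impossibility proof given above for $\channelBB$, treating $\sf{RSP}_V$ as a richer version of the same primitive and isolating the single branch that already suffices for impossibility. Concretely, I would restrict attention to the input $W = X$ on the left interface: under the honest filter $\filter^{c=0}$ the resource never aborts, picks a uniform $\theta \in \Z\frac{\pi}{4}$, and delivers $\theta$ to the left interface and $\ket{+_\theta}$ to the right interface — i.e. exactly the behaviour of $\channelBB$ but now over the eight equatorial angles instead of four. The whole argument then reduces to three steps: (i) exhibiting converters $\cA, \cQ$ witnessing that $\sf{RSP}_V$ is a $0$-remote state preparation resource, (ii) showing it is not $\delta$-describable for some universal $\delta$, and (iii) invoking \cref{thm:nogoClassicalRSP} to derive a contradiction from classical realizability.

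For step (i) I would let $\cA$ fix the input $W = X$ and then, on receiving the left-interface output, return $[\ketbra{+_\theta}]$ if that output is an angle $\theta$ and the null matrix if it is $\sf{ERR}$; I take $\cQ$ to be the trivial forwarding converter. Since the filter enforces $c=0$, no abort occurs in the honest execution, and the remote-state-preparation quantity evaluates exactly as in the $\channelBB$ case, namely $\esp[([\rho], \rho') \leftarrow \cA\, \sf{RSP}_V\, \filter\, \cQ]{\Tr(\rho\rho')} = \frac{1}{8}\sum_{\theta \in \Z\frac{\pi}{4}} \Tr(\ketbra{+_\theta}\ketbra{+_\theta}) = 1$. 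Hence $\sf{RSP}_V$ is a $0$-remote state preparation resource with respect to $(\cA,\cQ)$, and the POVM $\{\rho, I-\rho\}$ is efficiently implementable because $[\rho]$ is an explicit single-qubit density matrix.

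For step (ii) I would use that the describability definition connects an unbounded $\cP$ to the \emph{unfiltered} right interface. The key accounting observation is that choosing to abort ($c=1$) forces $\sf{ERR}$ on the left, so $\cA$ emits the null matrix and that run contributes $\Tr(0\cdot\rho') = 0$ to \cref{eq:describable}; therefore any $\cP$ maximizing the describability score is, without loss of generality, the non-aborting strategy that merely receives a single copy of $\ket{+_\theta}$ and must output a classical $[\rho']$. Now the eight states $\{\ket{+_\theta}\}_{\theta \in \Z\frac{\pi}{4}}$ are pairwise non-orthogonal (apart from antipodal pairs), so, exactly as in the $\channelBB$ argument, a perfect describer — where $\Tr(\rho\rho')=1 \Leftrightarrow \rho'=\rho$ for pure $\rho$ — would amount to cloning a non-orthogonal ensemble, which no-cloning forbids; compactness of CPTP maps and continuity of the fidelity then yield a universal $\eta>0$ with $\esp[([\rho], [\rho']) \leftarrow \cA\, \sf{RSP}_V\, \cP]{\Tr(\rho\rho')} < 1-\eta$, i.e. $\sf{RSP}_V$ is not $\delta$-describable for $\delta<\eta$.

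The final step is identical to the closing argument of the $\channelBB$ lemma: assuming for contradiction that $\sf{RSP}_V$ is $\eps$-classically-realizable with $\eps(n)\to 0$, I pick $N$ with $\eps(N)<\eta/2$, and \cref{thm:nogoClassicalRSP} (applied with $\eps_1=0$) then makes $\sf{RSP}_V$ be $2\eps(N)$-describable with respect to $\cA$, contradicting $2\eps(N)<\eta$. I expect the only genuinely delicate point to be step (ii), specifically verifying that the abort/filtered functionality on the right interface cannot be exploited by $\cP$ to inflate the describability score; the observation that an abort yields $\Tr(0\cdot\rho')=0$ settles this and cleanly reduces everything to the no-cloning bound for the eight-state ensemble. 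The extra freedoms of $\sf{RSP}_V$ (client-chosen basis, larger output set, adversarial abort) are thus all neutralized by fixing $W=X$ and noting that aborting only hurts the adversary.
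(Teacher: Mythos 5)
Your proposal is correct and follows essentially the same route as the paper's own proof: fix $W=X$, let $\cA$ output the density matrix of $\ket{+_\theta}$ or the null matrix on $\sf{ERR}$, observe that an aborting run contributes $\Tr(0\cdot\rho')=0$ so the optimal describer never aborts, and then reduce to the no-cloning/compactness bound for the non-orthogonal eight-state ensemble before invoking Theorem~\ref{thm:nogoClassicalRSP}. The handling of the abort branch — the one genuinely new point relative to $\channelBB$ — is exactly the paper's argument.
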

\begin{proof}
The proof is quite similar to the proof of impossibility of $\channelBB$. The main difference is that we need to address properly the abort case when $c=1$. The main idea is to define $\cA$ a bit differently: $\cA$ picks always $W = X$, and outputs as $\rho$ the classical density matrix corresponding to $s$ when $s \neq \ERR$, and when $s = \ERR$, $\cA$ outputs the null matrix $\rho = 0$ ($\cQ$ is still the trivial converter). It is easy to see again that this resource is a $0$-remote state preparation resource, and it is also impossible to describe it with arbitrary small probability: indeed, when $c = 1$, $\rho = 0$, so the trace $\Tr(\rho\rho')$ (that appears in \cref{eq:describable}) is equal to 0. Therefore, from a converter $\cP$ that (sometimes) inputs $c=1$, we can always increase the value of $\Tr(\rho\rho')$ by creating a new converter $\cP'$ turning $c$ into 0. And we are basically back to the same picture as $\channelBB$, where we have a set of states that is impossible to clone with arbitrary small probability, which finishes the impossibility proof.
\end{proof}
\begin{remark}
  Note that our impossibility of classically-realizing $\sf{RSP}_V$ does not contradict the result of \cite{gheorghiu2019computationally}.
  Specifically, in their work they make use of an additional assumption (the so called ``Measurement Buffer'' resource), which ``externalizes'' the measurement done by the distinguisher onto the simulator. In practice, this allows the simulator to change the state on the distinguisher side without letting him know. However, what our result shows is that it is impossible to realize this Measurement Buffer resource with a protocol interacting purely classically. Intuitively, the Measurement Buffer re-creates a quantum channel between the simulator and the server: when the simulator is not testing that the server is honest, the simulator replaces the state of the server with the quantum state sent by the ideal resource. This method has however a second drawback: it is possible for the server to put a known state as the input of the Measurement Buffer, and if he is not tested on that run (occurring with probability $\frac{1}{n}$), then he can check that the state has not been changed, leading to polynomial security (a polynomially bounded distinguisher can distinguish between the ideal and the real world). And because in CC, the security of the whole protocol is the sum of the security of the inner protocols, any protocol using this inner protocol will not be secure against polynomial distinguishers.
\end{remark}

\subsection{Accepting the Limitations: Fully Leaky \texorpdfstring{$\sf{RSP}$}{RSP} resources}\label{subsec:imperfectRSP}

As explained in the previous section, \cref{thm:nogoClassicalRSP} rules out all resources that are impossible to be \emph{describable} with unbounded power, and that the only type of classically-realizable \RSP resources would be the one leaking the full classical description of the output quantum state to an unbounded adversary, which we will refer to as being \emph{fully-leaky} $\sf{RSP}$. 
\noindent Fully-leaky $\sf{RSP}$ resources can be separated into two categories:
\begin{enumerate}
    \item If the $\sf{RSP}$ is describable in quantum polynomial time, then the adversary can get the secret in polynomial time. This is obviously not an interesting case as the useful properties that we know from quantum computations (such as UBQC) cannot be preserved if such a resource is employed to prepare the quantum states.
    \item If the $\sf{RSP}$ are only describable using unbounded power, then these \emph{fully-leaky} $\sf{RSP}$ resources are not trivially insecure, but their universally composable security remains unclear. Indeed, it defeats the purpose of aiming at a nice ideal resource where the provided security should be clear ``by definition'' and it becomes hard to quantify how the additional leakage could be used when composed with other protocols. A possible remedy would be to show restricted composition following~\cite{DBLP:journals/iacr/JostM17} which we discuss at the end of this paragraph.
\end{enumerate}
For completeness, we present an example of a resource that stands in this second category when assuming that post-quantum encryption schemes exist (e.g. based on the hardness of the LWE problem). As explained before, this resource needs to completely leak the description of the classical state, which in our case, is done by leaking an encryption of the description of the output state. The security guarantees therefore rely on the properties of the encryption scheme, and not on an ideal privacy guarantee as one would wish for, which is an obvious limitation.

  \paragraph*{\bf A concrete example.} In this section we focus on the second category of fully-leaky $\sf{RSP}$ and we show an example of resource that belongs to this class and a protocol realizing this resource. The fully-leaky $\sf{RSP}$ resource that we will implement, produces a BB84 state (corresponding to the set of states produced by the simpler QFactory protocol) and is described below: 
  
  \begin{definition}[Ideal Resource $\RSP^{4-states, \cF}_{CC}$]
    Let $\cF = (\Gen, \Enc, \Dec)$ be a family of public-key encryption functions. Then, we define $\RSP^{4-states, \cF}_{CC}$ as pictured in~\cref{fig:RSPenc_cC_BB84}. $B_1$ represents the basis of the output state, and is guaranteed to be random even if the right interface is malicious. $B_2$ represents the value bit of the output state when encoded in the basis $B_1$, and in the worst case it can be chosen by the right interface in a malicious scenario\footnote{Note that here the right interface can have (in a malicious scenario) full control over $B_2$, but in the QFactory Protocol~\ref{protocol:qfactoryReal} it is not clear what an adversary can do concerning $B_2$.}. Note however that in a malicious run, the adversary does not have access (at least not directly from the ideal resource) to the quantum state whose classical description is known by the classical client.
    \begin{figure}[ht]
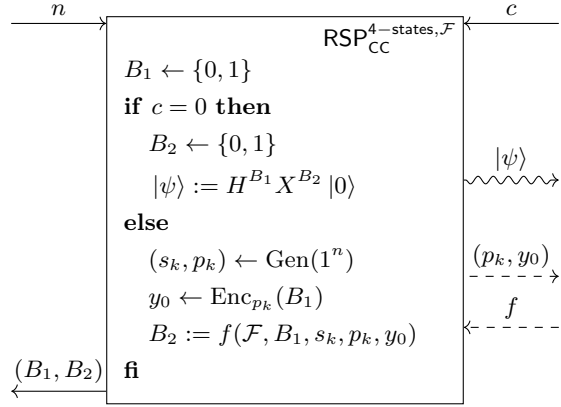

      \centering
      \begin{bbrenv}{SBQC}
        \begin{bbrbox}[name=$\RSP^{4-states, \cF}_{CC}$]%
          \pseudocode{%
            B_1 \gets \{0,1\}\\
            \pcif c=0 \pcthen\\%
            \t B_2 \gets \{0,1\}\\
            \t \ket{\psi} := H^{B_1}X^{B_2}\ket{0}\\
            \pcelse\\
            \t (s_k, p_k) \gets \Gen(1^n)\\
            \t y_0 \leftarrow \Enc_{p_k}(B_1)\\
            \t B_2 := f(\cF, B_1, s_k, p_k, y_0)\\
            \pcfi
          }%
        \end{bbrbox}
        \bbrmsgto{top={$n$}}
        \bbrmsgspace{42mm}
        \bbrmsgfrom{top={$(B_1, B_2)$}}
        \bbrqryfrom{top=$c$}
        \bbrqryspace{14mm}
        \bbrqryto{top={$\ket{\psi}$}, edgestyle={decorate, decoration={snake, segment length=2mm, amplitude=.5mm, pre length=1mm}}}
        \bbrqryspace{6mm}
        \bbrqryto{top={$(p_k, y_0)$}, edgestyle={dashed}}
        \bbrqryfrom{top={$f$}, edgestyle={dashed}}
      \end{bbrenv}
      \caption{Ideal resource $\RSP^{4-states, \cF}_{CC}$, which prepares atm{one of the four BB84 states}. The ``snake'' arrow is sent only in the honest case ($c=0$), and the dashed arrows are send/received only in the malicious case ($c=1$).}
      \label{fig:RSPenc_cC_BB84}
    \end{figure}
  \end{definition}

  \begin{lemma}\label{lemma:qfactorysecurewithleakage}
    The 4-states QFactory protocol \textup{\cite{cojocaru2019qfactory}} (Protocol~\textup{\ref{protocol:qfactoryReal}}) securely constructs $\RSP^{4-states, \cF}_{CC}$ from a classical channel, where $\cF$ is defined as follows:
    \begin{enumerate}
    \item $(t_K, K) \leftarrow \Gen(1^n)$ outputs two matrices: public $K$ (used to describe the function) and secret $t_K$ (a trapdoor used to invert the function) as defined in \textup{\cite{cojocaru2019qfactory,CCKW18}} (which is itself based on the learning with errors problem and the construction presented in \textup{\cite{MP11}});
    \item $y_0 \leftarrow \Enc_K(B_1)$ , where $y_0 = Ks_0 + e_0 + B_1 \begin{pmatrix}
        q/2 & 0 & \dots & 0
      \end{pmatrix}^T$, $s_0$ and $e_0$ being sampled accordingly to some distribution presented in \textup{\cite{cojocaru2019qfactory,CCKW18}}
    \item $B_1 \leftarrow \Dec_{t_K}(y)$ - using $t_K$ we can efficiently obtain $B_1$ from $y_0$.
    \end{enumerate}
  \end{lemma}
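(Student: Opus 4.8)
The plan is to verify the two conditions of \cref{def:realize} for the construction
\[
  \cC \constructs{\protocol}{\eps} \RSP^{4-states, \cF}_{CC},
\]
where $\protocol = (\pi_A, \pi_B)$ is the pair of converters describing the honest client and server of the 4-states QFactory protocol (Protocol~\ref{protocol:qfactoryReal}). Concretely, I would first establish \emph{availability}, i.e. $\pi_A \cC \pi_B \approx_\eps \RSP^{4-states, \cF}_{CC} \filter^{c=0}$, and then exhibit a simulator $\sigma$ witnessing \emph{security}, i.e. $\pi_A \cC \approx_\eps \RSP^{4-states, \cF}_{CC} \sigma$. The structural observation driving both parts is that the honest client's only outgoing message is the pair $(K, y_0) = (p_k, \Enc_{p_k}(B_1))$, and that the description $(B_1, B_2)$ delivered on the left interface is computed by the client as a deterministic function of $B_1$, the trapdoor recoverable from $s_k$, the public data $(p_k, y_0)$, and the single classical message returned by the server.

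For availability I would invoke the correctness of QFactory as proven in~\cite{cojocaru2019qfactory}: when both parties follow the protocol, the client samples $B_1$ and $B_2$ with the correct (uniform) distribution, and the qubit left in the server's register is exactly $\ket{\psi} = H^{B_1} X^{B_2} \ket{0}$, up to the negligible probability of an LWE decoding failure. This matches the $c=0$ branch of the ideal resource, where $(B_1, B_2)$ is sampled uniformly and $\ket{\psi}$ is delivered on the right (snake) interface. A distinguisher could only separate the two worlds on a decode failure, where the claimed description no longer matches the actual state; hence $\pi_A \cC \pi_B$ and $\RSP^{4-states, \cF}_{CC} \filter^{c=0}$ are indistinguishable within a negligible $\eps = \negl[n]$ inherited directly from the correctness error of the underlying trapdoor construction.

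For security I would construct $\sigma$ as follows. On activation, $\sigma$ reads the pair $(p_k, y_0)$ leaked by the $c=1$ branch of the ideal resource and forwards it verbatim to the distinguisher, thereby perfectly emulating the client's unique outgoing message; it then relays the classical interaction exactly as the honest client would, recording the server's returned message $m$. Finally $\sigma$ submits to the resource the function $f$ obtained by \emph{hard-coding} $m$ into the honest client's decoding routine: given $(\cF, B_1, s_k, p_k, y_0)$, the function $f$ reconstructs the trapdoor from $s_k$, recovers the preimage data exactly as $\pi_A$ does, and combines it with the baked-in $m$ to output the bit $B_2$ that the real client would have computed. Since $(p_k, y_0)$ is produced by $\Gen$ and $\Enc$ with identical distribution in both worlds, the right-interface transcript is identically distributed, and since $f$ replicates the client's computation of $B_2$ verbatim (failing exactly when the real client fails), the left-interface output $(B_1, B_2)$ coincides as well; the two worlds are therefore perfectly indistinguishable, so the security error is $0$.

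The main obstacle is the precise construction of $f$: its prescribed signature $f(\cF, B_1, s_k, p_k, y_0)$ does not take the server's response $m$ as an explicit argument, so I must verify that QFactory's decoding genuinely factors as ``trapdoor inversion of $(p_k, y_0)$ combined with the server-supplied $m$'', which is exactly the structure needed to absorb $m$ as a closure inside $f$. Unpacking this requires the concrete description in~\cite{cojocaru2019qfactory} of how $B_2$ is recovered from the two trapdoor preimages and the server's measurement string. I would also emphasize that this perfect indistinguishability reflects the fact that the resource is deliberately \emph{fully leaky}: the ciphertext $\Enc_{p_k}(B_1)$ appears identically in both worlds, so no computational assumption is needed to \emph{construct} the resource; the actual privacy of $B_1$ against a bounded server is instead a property of the ideal resource itself, to be exploited separately in the proof of \cref{thm:game_based_security_UBQC_QF}.
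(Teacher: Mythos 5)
Your proposal is correct and follows essentially the same route as the paper: availability is reduced to the correctness of QFactory (\cref{thm:correctness}), and security uses the trivial simulator that sends $c=1$, forwards the $(K,y_0)$ leaked by the resource, and, upon receiving the server's reply, sets the deviation $f$ to be exactly the function computed by $\pi_A$ (your ``hard-coding $m$ as a closure'' is just a more explicit description of this step), yielding perfect indistinguishability on the security condition. Your closing remark that the construction needs no computational assumption because the resource is deliberately fully leaky is likewise consistent with the paper's discussion surrounding the lemma.
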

  
  \begin{proof}
    We already know that the protocol of QFactory $(\pi_A,\pi_B)$ is correct with super-polynomial probability if the parameters are chosen accordingly (\cref{thm:correctness}), therefore
    \begin{equation}
        \pi_A \cC \pi_B \approx_\eps \RSP^{4-states, \cF}_{CC} \filter
    \end{equation} for some negligible $\eps$. We now need to find a simulator $\sigma$ such that 
    \begin{equation}
    \pi_A \cC \approx_{\eps'} {\sf{RSP}_{CC}^{\text{4-states}, \mathcal{F}}} \sigma
    \end{equation}
 The simulator is trivial here: it sends $c = 1$ to ideal resource  then, it just forwards the $(K, y_0)$ given by the resource to its outer interface, and when it receives the $(y,b)$ corresponding to the measurements performed by the server, it just sets the deviation $f$ to be the same function as the one computed by $\pi_A$. Therefore, $\pi_A \cC \approx_0 {\sf{RSP}_{CC}^{\text{4-states}, \mathcal{F}}} \sigma $, which ends the proof.
  \end{proof}

 \paragraph*{\bf Concluding remarks.}
 We see that using this kind of leaky resource is not desirable: the resources are non-standard and it seems hard to write a modular protocol with this resource as an assumed resource. The resource is very specific and mimics its implementation. As such, we cannot really judge its security. 
  
  On the other hand however, if a higher-level protocol did guarantee that the value $B_2$ always remains hidden, i.e., a higher level protocol's output does not depend on on $B_2$ (e.g., by blinding it all the time), it is easy to see that we could simulate $y_0$ without knowledge about $B_1$ thanks to the semantic security of the encryption scheme. If we fix this restricted context, the ideal resource in~\cref{fig:RSPenc_cC_BB84} could be re-designed to not produce the output $(p_k,y_0)$ at all and therefore, by definition, leak nothing extra about the quantum state (note that in such a restricted context, the simulator can simply come up with a fake encryption that is indistinguishable). This can be made formal following ~\cite{DBLP:journals/iacr/JostM17}. We note in passing that this particular example quite severely restricts applicability unfortunately. 
  Indeed, it is interesting future research whether it is possible to come up with restricted yet useful contexts that admit nice ideal resources for \RSP following the framework in~\cite{DBLP:journals/iacr/JostM17}.


\section{Impossibility of Composable Classical-Client \texorpdfstring{$\sf{UBQC}$}{UBQC}} \label{sec:impossibility_composable_CUBQC}
In the previous section, we showed that it was impossible to get a (useful) composable $\sf{RSP}_{CC}$ protocol. A (weaker) \RSP protocol, however, could still be used internally in other protocols, hoping for the overall protocol to be composably secure. To this end, we analyze the composable security of a well-known delegated quantum computing protocol, universal blind quantum computation ($\sf{UBQC}$), proposed in ~\cite{broadbent2009universal}. The $\sf{UBQC}$ protocol allows a semi-quantum client, Alice, to delegate an arbitrary quantum computation to a (universal) quantum server Bob, in such a way that her input, the quantum computation and the output of the computation are information-theoretically hidden from Bob. The protocol requires Alice to be able to prepare single qubits of the form $\Ket{+_{\theta}}$, where $\theta \in \Z \frac{\pi}{4}$ and send these states to Bob at the beginning of the protocol, the rest of the communication between the two parties being classical. We define the family of protocols $\sf{RSP}^{8-states}_{CC}$ as the $\sf{RSP}$ protocols that classically delegate the preparation of an output state $\Ket{+_{\theta}}$, where $\theta \in \Z \frac{\pi}{4}$. That is, without loss of generality, we assume a pair of converters $P_A$, $P_B$ such that the resource $R:= P_A \cC P_B$ has the behavior of the prototype \RSP resource except with negligible probability. Put differently, we assume we have an (except with negligible error) \emph{correct} RSP protocol, but we make \emph{no assumption about the security} of this protocol.
Therefore, one can directly instantiate the quantum interaction with the $\sf{RSP}^{8-states}_{CC}$ at the first step as shown in~Protocol \ref{protocol:ubqcReal}. While $\sf{UBQC}$ allows for both quantum and classical outputs and inputs, given that we want to remove the quantum interaction in favor of a completely classical interaction, we only focus on the classical input and classical output functionality of $\sf{UBQC}$ in the remaining of the paper. \\ \\ \\

\vspace{\baselineskip}
\begin{breakablealgorithm}
\caption{$\sf{UBQC}$ with  $\sf{RSP}^{8-states}_{CC}$ (See \cite{broadbent2009universal})} \label{protocol:ubqcReal}
\begin{itemize}
    \item \textbf{Client's classical input:} An $n$-qubit unitary $U$ that is represented as set of angles $\{\phi\}_{i, j}$ of a one-way quantum computation over a brickwork state/cluster state~\cite{mantri2017universality}, of the size $n \times m$, along with the dependencies X and Z obtained via flow construction~\cite{danos2006determinism}.
\item \textbf{Client's classical output:} The measurement outcome $\bar{s}$ corresponding to the $n$-qubit quantum state, where  $\bar{s} = \bra{0}U\ket{0}$.    
\end{itemize}
\begin{enumerate}
\item Client and Server runs $n \times m$ 
different instances of $\sf{RSP}^{8-states}_{CC}$ (in parallel) to obtain $\theta_{i, j}$ on client's side and $\ket{+_{{\theta}_{i, j}}}$ on server's side, where $\theta_{i, j} \gets \Z \frac{\pi}{4}$, $i \in \{1, \cdots , n\}$, $j \in \{1, \cdots , m\}$
\item Server entangles all the qubits, $n \times (m-1)$ received from $\sf{RSP}^{8-states}_{CC}$, by applying controlled-Z gates between them in order to create a graph state $\mathcal{G}_{n \times m}$ 
\item For $j \in [1,m]$ and $i \in [1,n]$
\begin{enumerate}
\item Client computes $\delta_{i, j} = \phi_{i, j}' + \theta_{i, j} + r_{i, j}\pi$, $r_{i, j} \gets \{0, 1\}$, where $\phi_{i, j}' = (-1)^{s_{i, j}^X} \phi_{i, j} + s_{i, j}^Z \pi$ and $s_{i, j}^X$ and $s_{i, j}^Z$ are computed using the previous measurement outcomes and the X and Z dependency sets. Client then sends the measurement angle $\delta_{i, j}$ to the Server.
\item Server measures the qubit $\ket{+_{{\theta}_{i, j}}}$ in the basis $\{\ket{+_{{\delta}_{i, j}}}, \ket{-_{{\delta}_{i, j}}}\}$ and obtains a measurement outcome $s_{i, j} \in \{0, 1\}$. Server sends the measurement result to the client.
\item Client computes $\bar{s}_{i, j} = s_{i, j} \oplus r_{i, j}$.
\end{enumerate}
\item The measurement outcome corresponding to the last layer of the graph state ($j = m$) is the outcome of the computation.
\end{enumerate}
\end{breakablealgorithm}
\vspace{\baselineskip}
Note that Protocol~\ref{protocol:ubqcReal} is based on measurement-based model of quantum computing (MBQC). This model is known to be equivalent to the quantum circuit (up to polynomial overhead in resources) and does not require one to perform quantum gates on their side to realize arbitrary quantum computation. Instead, the computation is performed by an (adaptive) sequence of single-qubit projective measurements that steer the information flow across a highly entangled resource state. Intuitively, $\sf{UBQC}$ can be seen as a distributed MBQC where the measurements are performed by the server whereas the classical update of measurement bases is perfomed by the client. Since the projective measurements in quantum physics, in general, are probabilistic in nature and therefore, the client needs to update the measurement bases (and classically inform the server about the update) based on the outcomes of the earlier measurements to ensure the correctness of the computation. Roughly speaking, this information flow is captured by the X and Z dependencies. For more details, we refer the reader to~\cite{raussendorf2001one,nielsen2006cluster}.

Next, we show that the Universal Blind Quantum Computing protocol~\cite{broadbent2009universal}, which is proven to be secure in the Constructive Cryptography framework \cite{dunjko2014composable}, cannot be proven composably secure (for the same ideal resource) when the quantum interaction is replaced with $\sf{RSP}_{CC}$ (this class of protocol is denoted as $\sf{UBQC}_{CC}$). We also give an outlook that the impossibility proof also rules out weaker ideal resources.

\subsection{Impossibility of Composable \texorpdfstring{$\sf{UBQC}_{CC}$}{UBQCcc} on 1 Qubit \label{subsec:imposs_single_cc_ubqc}}

In order to prove that there exists no $\sf{UBQC}_{CC}$ protocol, we will first focus on the simpler case when the computation is described by a single measurement angle. The resource that performs a blind quantum computation on one qubit ($\cS_{UBQC1}$) is defined as below:

\begin{definition}[Ideal resource of single-qubit $\sf{UBQC}$\label{def:ideal_bcq1} (See~\cite{dunjko2014composable})] The definition of the ideal resource $\cS_{UBQC1}$, depicted in \cref{fig:SBQC1}, achieves blind quantum computation specified by a single angle $\phi$. The input $(\xi, \rho)$ is filtered when $c=0$. The $\xi$ can be any deviation (specified for example using the classical description of a CPTP map) that outputs a classical bit, and which can depend on the computation angle $\phi$ and on some arbitrary quantum state $\rho$.
  \begin{figure}[ht]
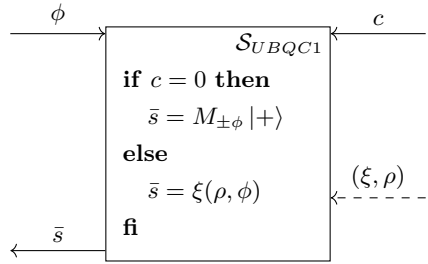

    \centering
    \begin{bbrenv}{SBQC}
      \begin{bbrbox}[name=$\cS_{UBQC1}$]%
        \pseudocode{%
          \pcif c=0 \pcthen\\%
          \t \bar{s} = M_{\pm\phi}\ket{+}\\
          \pcelse\\
          \t \bar{s} = \xi(\rho, \phi)\\
          \pcfi
        }%
      \end{bbrbox}
      \bbrmsgto{top=$\phi$}
      \bbrmsgspace{22mm}
      \bbrmsgfrom{top={$\bar{s}$}}
      \bbrqryfrom{top=$c$}
      \bbrqryspace{15mm}
      \bbrqryfrom{top={$(\xi,\rho)$}, edgestyle={dashed}}
    \end{bbrenv}
    \caption{Ideal resource $\cS_{UBQC1}$ for $\sf{UBQC}$ with one angle, with a filtered (dashed) input. In the case of honest server the output $\bar{s} \in \{0,1\}$ is computed by measuring the qubits $\ket{+}$ in the $\{\ket{+_{\phi}}, \ket{-_{\phi}}\}$ basis. On the other hand if $c = 1$ any malicious behaviour of server can be captured by $(\xi, \rho)$, i.e. the output $\bar{s}$ is computed by applying the CPTP map $\xi$ on the input $\phi$ and on another auxiliary state $\rho$ chosen by the server.}
    \label{fig:SBQC1}
  \end{figure}
\end{definition}

\begin{theorem}[No-go composable classical-client single-qubit $\sf{UBQC}$]\label{thm:nogo_ubqc1}
\sloppy  Let $(P_A, P_B)$ be a protocol interacting only through a classical channel $\cC$, such that ${(\theta, \rho_B) \leftarrow (P_A \cC P_B)}$ with $\theta \in \Z \frac{\pi}{4}$, and such that (by correctness) the trace distance between $\rho_B$ and $\ket{+_\theta}\bra{+_\theta}$ is negligible with overwhelming probability\footnote{In the following, the parties $P_A$ and $P_B$ (and therefore $\pi_A$ and $\pi_B$) and the simulator $\sigma$ depend on some security parameter $n$, but, in order to simplify the notations and the proof, this dependence will be implicit. We are as usual interested only in the asymptotic security, when $n \rightarrow \infty$.} with overwhelming probability\footnote{Note that here $\rho_B$ is different at every run: it corresponds to the density matrix of the state obtained after running $P_B$, when tracing out the environment and the internal registers of $P_B$ and $P_A$.}. Then, if we define $\pi_A$ and $\pi_B$ as the $\sf{UBQC}$ protocol on one qubit that makes use of $(P_A, P_B)$ as a sub-protocol to replace the quantum channel (as pictured in \cref{fig:honestubqc}), $(\pi_A, \pi_B)$ is not composable, i.e. there exists no simulator $\sigma$ such that:
  \begin{align}
    \pi_A \cC \pi_B &\approx_\eps \cS_{UBQC1} \filter^{c=0} \label{eq:correctnessrsp1}\\
    \pi_A \cC &\approx_\eps \cS_{UBQC1} \sigma \label{eq:soundnessrsp1}
  \end{align}
  for some negligible $\eps = \negl[n]$.
  \begin{figure}[ht]
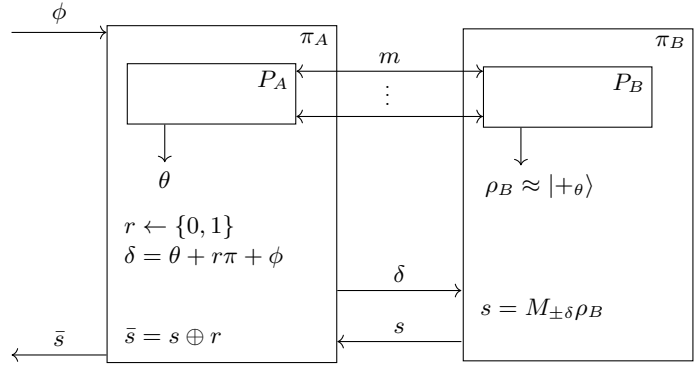

    \centering
    \begin{bbrenv}{A}
      \begin{bbrbox}[name=$\pi_A$]
        \begin{bbrenv}{RSPA}
          \begin{bbrbox}[name=$P_A$]
          \end{bbrbox}
          \bbroutput{$\theta$}
        \end{bbrenv}\\
        \vspace{.5em}\\
        $r \gets \{0,1\}$\\
        $\delta = \theta + r\pi + \phi$\\
        \vspace{3mm}\\
        $\bar{s} = s \xor r$
      \end{bbrbox}%
      \bbrmsgto{top=$\phi$}
      \bbrmsgspace{36mm}
      \bbrmsgfrom{top=$\bar{s}$}
      \begin{bbroracle}{B}
        \begin{bbrbox}[name=$\pi_B$]
          \begin{bbrenv}{RSPB}
            \begin{bbrbox}[name=$P_B$]
            \end{bbrbox}
            \bbroutput{\hspace*{5mm}$\rho_B \approx \ket{+_\theta}$}
          \end{bbrenv}\\
          \vspace{12mm}\\
          $s = M_{\pm \delta}\rho_B$\\
          \pcdraw{
            \draw[<->] ($(RSPA.east)+(0mm,3mm)$) -- node[midway] (m) {} node[midway, above, inner sep=1mm]{$m$} ($(RSPB.west|-RSPA)+(0mm,3mm)$);
            \node[below=0mm of m] (vdots) {\scalebox{.8}{$\vdots$}};
            \draw[<->] ($(RSPA.east)+(0mm,-3mm)$) -- node[midway] (m) {} node[midway, above, inner sep=1mm]{} ($(RSPB.west|-RSPA)+(0mm,-3mm)$);
          }
        \end{bbrbox}
      \end{bbroracle}
      \bbroracleqryspace{28mm}
      \bbroracleqryto{top=$\delta$}
      \bbroracleqryfrom{top=$s$}
    \end{bbrenv}
    \caption{\label{fig:honestubqc} $\sf{UBQC}$ with one qubit when both Alice and Bob follows the protocol honestly (see Protocol~\ref{protocol:ubqcReal})}
  \end{figure}
\end{theorem}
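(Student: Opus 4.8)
The plan is to argue by contradiction: assume a simulator $\sigma$ satisfying both \cref{eq:correctnessrsp1} and \cref{eq:soundnessrsp1} for negligible $\eps$ exists, and then build an attack that the soundness condition forbids. The first move is to recognise that the sub-protocol $(P_A,P_B)$ of \cref{fig:honestubqc} is itself an \RSP protocol that prepares $\ket{+_\theta}$ with $\theta$ known to the client, so the toolbox of \cref{subsec:ccrsp_notions} applies. Since $P_B$ produces its state $\rho_B$ purely from a classical interaction over $\cC$, \cref{lemma:classicalAreReproducible} makes it reproducible: there is a (possibly unbounded) emulator $\tilde P_B$ that, from the classical transcript alone, outputs the classical description $[\rho_B]$. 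By the correctness hypothesis $\rho_B$ is negligibly close to $\ketbra{+_\theta}$, so reading off the nearest angle in $\Z\frac{\pi}{4}$ recovers the true $\theta$ with overwhelming probability. This is the precise sense in which a single-qubit $\sf{UBQC}_{CC}$ protocol is ``turned into'' a describable \RSP protocol, and it lets me treat $\theta$ as extractable from the right-interface transcript, exactly as in the proof of \cref{thm:nogoClassicalRSP}.

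Next I would exploit the defining feature of $\cS_{UBQC1}$ (\cref{def:ideal_bcq1}): it never forwards the left input $\phi$ to the right interface. Hence in the ideal world $\cS_{UBQC1}\sigma$ the entire transcript that $\sigma$ presents on the right -- both the emulated \RSP messages (from which $\theta$ is extracted) and the angle $\delta$ -- has a distribution independent of $\phi$; this is the no-signaling step. By contrast, correctness of the protocol fixes $\delta = \theta + r\pi + \phi$ with $r\in\{0,1\}$, so in the real world $\pi_A\cC$ one always has the congruence $\delta \equiv \theta + \phi \pmod{\pi}$, equivalently $\delta - \theta - \phi \in \{0,\pi\}$. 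The attack is then simply to test this congruence: extract $\theta$ from the transcript, read off $\delta$, feed a uniformly random $\phi \in \Z\frac{\pi}{4}$ on the left interface, and accept iff $\delta - \theta - \phi \equiv 0 \pmod{\pi}$. Note this test uses neither $\bar s$ nor $s$, which keeps the analysis clean.

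In the real world the test accepts with overwhelming probability. In the ideal world, however, $(\theta,\delta)$ are produced before and independently of $\phi$, so for any fixed value of $\delta-\theta \bmod \pi$ exactly $2$ of the $8$ admissible values of $\phi$ satisfy $\phi \equiv \delta - \theta \pmod{\pi}$, and the test accepts with probability at most $1/4$. This yields a distinguishing advantage of roughly $3/4 - \negl[n]$, contradicting \cref{eq:soundnessrsp1} for negligible $\eps$. The lift from the single angle to the whole family $\sf{RSP}^{8-states}_{CC}$ is immediate because the argument is entirely local to one qubit.

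The delicate point, and the reason for routing through \cref{thm:nogoClassicalRSP} rather than attacking head-on, is that extraction of $\theta$ is guaranteed only by an \emph{unbounded} emulator, whereas composable security is stated against \emph{quantum polynomial-time} distinguishers. The resolution is exactly the bridge furnished by \cref{thm:nogoClassicalRSP}: computational realizability forces an information-theoretic describability property, after which the no-signaling counting is a statement about distributions that needs no efficiency. I expect the main care to lie in (i) making rigorous that the right-interface output of $\sigma$ is genuinely $\phi$-independent even when $\sigma$ interacts adaptively with $\cS_{UBQC1}$, and (ii) controlling the accumulated negligible errors -- correctness of $(P_A,P_B)$, fidelity of the extracted angle, and the simulation error $\eps$ -- so that the constant $3/4$ gap survives intact.
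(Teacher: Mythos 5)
You have correctly identified the three pillars of the paper's argument---reproducibility of a classically-communicating server (\cref{lemma:classicalAreReproducible}), the fact that $\cS_{UBQC1}$ sends nothing to its right interface so that the simulator's view is $\phi$-independent, and the mod-$\pi$ counting on the congruence $\delta \equiv \theta + \phi \pmod{\pi}$---and your overall route is the paper's. However, as written the argument has a genuine gap exactly at the point you flag: the distinguisher you construct (extract $\theta$ from the transcript via the unbounded emulator, then test the congruence) is itself unbounded, so its advantage of roughly $3/4$ does not contradict \cref{eq:soundnessrsp1}, which quantifies only over quantum polynomial-time distinguishers. Saying that \cref{thm:nogoClassicalRSP} ``bridges'' this is not yet a proof, because that theorem's hypotheses are never established: it requires an ideal resource that is simultaneously an \RSP resource (with respect to some $\cA$, $\cQ$ and a filter) and classically realizable, and neither $(P_A,P_B)$ (for which only correctness, not security, is assumed) nor $\cS_{UBQC1}$ with the filter $\filter^{c=0}$ (whose honest right interface outputs a classical bit, not a qubit) satisfies both conditions.

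The missing construction, which is the actual content of the paper's proof, is the modified server $\pi_B'$ that runs $P_B$, receives $\delta$, applies $R_Z(-\delta)$ to $\rho_B$, outputs the resulting qubit on its outer interface, and returns $s=0$; together with the filter $\filter^\sigma = \sigma\pi_B'$ this exhibits $\cS_{UBQC1}$ as a classically-realizable \RSP resource whose described state is $\ketbra{+_{\phi'}}$ with $\phi' = \phi_0 + \bar{s}\pi$ computable on the left interface (\cref{lemma:SBQC1CCrealizable} and \cref{lemma:SBQC1CC_RSP}). The efficient test that transfers the real-world property into the ideal world is then the single-qubit POVM $\{\ketbra{+_{\phi'}}, I - \ketbra{+_{\phi'}}\}$ applied to the un-rotated qubit, not your congruence check on an extracted $\theta$; only after this QPT step does reproducibility yield a classical description inside the ideal world, and the final contradiction is with the no-signaling principle (\cref{lemma:impossibleDescribable}, via the rounding step of \cref{lemma:round}), not with \cref{eq:soundnessrsp1}. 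Your counting that only $2$ of the $8$ admissible values of $\phi$ satisfy the congruence, and your observation that $\bar{s}$ and $r$ drop out modulo $\pi$, are both correct and reappear verbatim in that final step.
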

\begin{proof}
  In order to prove this theorem, we will proceed by contradiction. Let us assume that there exists $(P_A,P_B)$, and a simulator $\sigma$ having the above properties. \\
  Then, for the same resource $\cS_{UBQC1}$ we consider a different protocol $\protocol' = (\pi_A',\pi_B')$ that realizes it, but using a different filter\footnote{\label{footnote:filterInside} Note that we could include this new filter inside $\cS_{UBQC1}$ and use a more traditional filter $\filter^{c=0}$ but for simplicity we will just use a different filter.} $\filter^\sigma$ and a different simulator $\sigma'$:
    \begin{align}
    \pi_A' \cC \pi_B' &\approx_\eps \cS_{UBQC1} \filter^\sigma \label{eq:correctnessrsp2}\\
    \pi_A' \cC &\approx_\eps \cS_{UBQC1} \sigma' \label{eq:soundnessrsp2}
  \end{align}
  More specifically, the new filter $\filter^\sigma_{UBQC1}$ will depend on $\sigma$ defined in \cref{eq:soundnessrsp1}. Then our main proof can be described in the following steps:
  \begin{enumerate}
  \item We first show in \cref{lemma:SBQC1CCrealizable} that $\cS_{UBQC1}$ is also $\eps$-classically-realizable by $(\pi_A', \pi_B')$ with the filter $\filter^\sigma$.
  \item We then prove in \cref{lemma:SBQC1CC_RSP} that the resource $\cS_{UBQC1}$ is an $\sf{RSP}$ within $\negl[n]$, with respect to some well chosen converters $\cA$ and $\cQ$ (see \cref{fig:APiAPiBQ}) and this new filter $\filter^\sigma$.
  \item Then, we use the main result about $\sf{RSP}$ (\cref{thm:nogoClassicalRSP}) to show that $\cS_{UBQC1}$ is describable within $\negl[n]$ with respect to $\cA$ (\cref{corollarry:sbqc1_describable}).
  \item Finally, in \cref{lemma:impossibleDescribable} we prove that if $\cS_{UBQC1}$ is describable then we could achieve \emph{superluminal signaling}, which concludes the contradiction proof.
  \end{enumerate}
\end{proof}

\begin{definition}\label{def:SBQC1CC}
  Let $\pi' = (\pi'_A, \pi'_B)$ the protocol realizing $\cS_{UBQC1}$ described in the following way (as pictured \cref{fig:APiAPiBQ}):
  \begin{itemize}
      \item $\pi'_A = \pi_A$ (\cref{fig:honestubqc})
      \item $\pi'_B$: runs $P_B$, obtains a state $\rho_B$, then uses the angle $\delta$ received from its inner interface to compute $\tilde{\rho} := R_Z(-\delta)\rho_B$, and finally outputs $\tilde{\rho}$ on its outer interface and $s := 0$ on its inner interface.
  \end{itemize}
  Then we define $\filter^\sigma = \sigma \pi_B'$ depicted in \cref{fig:SBQC1CC} (with $\sigma$ the simulator defined in \cref{eq:soundnessrsp1} as explained before). \\
  We define the converters $\cA$ and $\cQ$ as seen in:
  \begin{figure}
    \centering
\includegraphics[width=.8\linewidth]{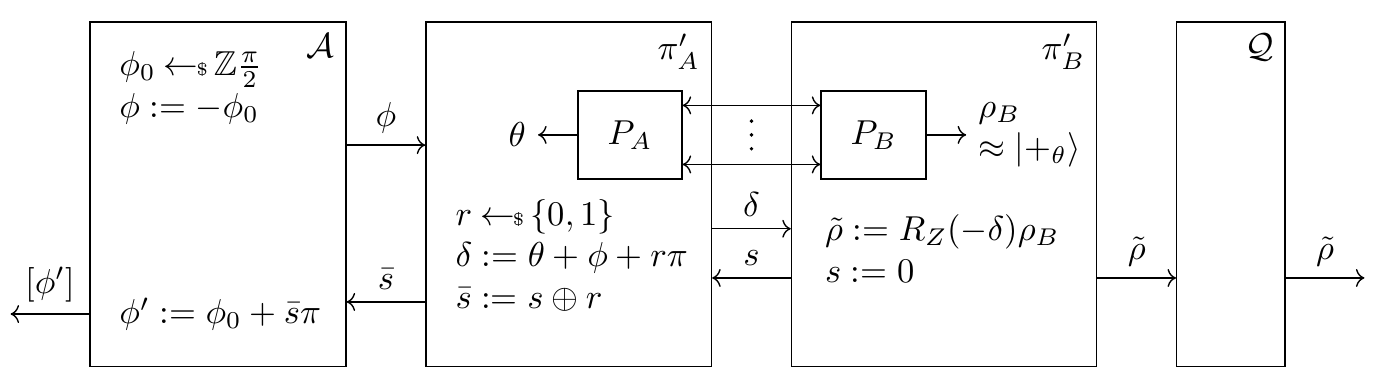}
    \caption{Definition of $\cA$, $\pi_A'$, $\pi_B'$ and $\cQ$.}
    \label{fig:APiAPiBQ}
  \end{figure}
  \begin{figure}[ht]
    \centering
    \includegraphics{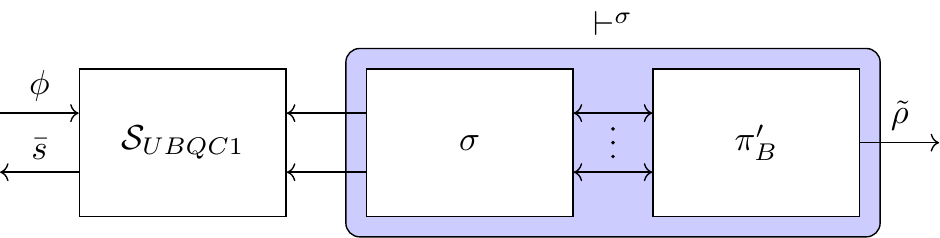}
    \caption{Description of $\filter^\sigma$}
    \label{fig:SBQC1CC}
  \end{figure}
  \end{definition}

\begin{lemma}\label{lemma:SBQC1CCrealizable}
  If $\cS_{UBQC1}$ is $\eps$-classically-realizable by $(\pi_A, \pi_B)$ with the filter $\filter^{c=0}$ then $\cS_{UBQC1}$ is also $\eps$-classically-realizable by $(\pi_A', \pi_B')$ with the filter $\filter^\sigma$.
\end{lemma}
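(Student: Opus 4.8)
The plan is to verify the two conditions of \cref{def:realize} --- availability and security --- for the primed protocol $(\pi_A', \pi_B')$ against the new filter $\filter^\sigma$, working directly from the hypothesis that $(\pi_A,\pi_B)$ realizes $\cS_{UBQC1}$ with $\filter^{c=0}$, i.e. from \cref{eq:correctnessrsp1,eq:soundnessrsp1}. The only external fact I rely on is the standard Constructive Cryptography principle that indistinguishability of resources is preserved under attaching a converter to a shared interface: if $X \approx_\eps Y$ and $\alpha$ is an admissible converter, then $X\alpha \approx_\eps Y\alpha$, since any distinguisher for $X\alpha$ versus $Y\alpha$ can absorb $\alpha$ into itself and thereby become a distinguisher for $X$ versus $Y$ with the same advantage.

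The security condition is immediate. Since $\pi_A' = \pi_A$ by \cref{def:SBQC1CC}, the left-interface object is literally unchanged, $\pi_A' \cC = \pi_A \cC$, so the hypothesis \cref{eq:soundnessrsp1} already reads $\pi_A' \cC \approx_\eps \cS_{UBQC1}\sigma$. Choosing the new simulator to be the old one, $\sigma' := \sigma$, this is exactly \cref{eq:soundnessrsp2}.

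For availability I start again from \cref{eq:soundnessrsp1} and attach the converter $\pi_B'$ to the outer (right) interface of both resources. By the preservation principle above this yields $\pi_A \cC \pi_B' \approx_\eps \cS_{UBQC1}\sigma\pi_B'$. I then rewrite both sides using \cref{def:SBQC1CC}: on the left, $\pi_A = \pi_A'$ gives $\pi_A \cC \pi_B' = \pi_A' \cC \pi_B'$, while on the right the definition $\filter^\sigma = \sigma\pi_B'$ gives $\cS_{UBQC1}\sigma\pi_B' = \cS_{UBQC1}\filter^\sigma$. Combining these, $\pi_A' \cC \pi_B' \approx_\eps \cS_{UBQC1}\filter^\sigma$, which is precisely \cref{eq:correctnessrsp2}. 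It is worth noting that the original availability \cref{eq:correctnessrsp1} is not even needed here; the whole construction of $\pi_B'$ and $\filter^\sigma$ is engineered so that the new \emph{availability} guarantee is just the old \emph{security} guarantee with $\pi_B'$ post-composed on the right.

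The single point that requires care is that $\pi_B'$ must be an admissible converter, so that post-composing it does not take the reduction outside the QPT distinguisher class under which \cref{eq:soundnessrsp1} is assumed to hold. This is the case: $\pi_B'$ runs the efficient, correct RSP server $P_B$, applies the single-qubit rotation $R_Z(-\delta)$ determined by the classical angle $\delta$ it receives on its inner interface, and emits the fixed classical value $s := 0$ --- all efficient operations. Once admissibility is settled, the remainder is pure interface bookkeeping, and there is no quantitative loss beyond the $\eps$ already present in \cref{eq:soundnessrsp1}, which is exactly why the error parameter is preserved.
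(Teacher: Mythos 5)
Your proof is correct and follows essentially the same route as the paper: the security condition is immediate from $\pi_A'=\pi_A$ with $\sigma':=\sigma$, and availability is obtained by post-composing $\pi_B'$ onto both sides of $\pi_A\cC\approx_\eps\cS_{UBQC1}\sigma$ and using $\filter^\sigma=\sigma\pi_B'$. Your added remarks --- that the original availability condition is never used and that $\pi_B'$ is an admissible (efficient) converter --- are accurate refinements of the paper's argument rather than deviations from it.
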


\begin{proof}
  If $S_{UBQC1}$ is $\eps$-classically-realizable with $\filter^{c=0}$, then as seen in \cref{thm:nogo_ubqc1}, we have:
   \begin{align}
    \pi_A \cC \pi_B &\approx_\eps \cS_{UBQC1} \filter^{c=0} \label{eq:correctnessrsp2a}\\
    \pi_A \cC &\approx_\eps \cS_{UBQC1} \sigma \label{eq:soundnessrsp2a}
  \end{align}
  Now we can show that $\cS_{UBQC1}$ is $\eps$-classically-realizable by $(\pi_A',\pi_B')$ with $\filter^\sigma$ , i.e. that there exists a simulator $\sigma'$ such that:
   \begin{align}
    \pi_A' \cC \pi_B' &\approx_\eps \cS_{UBQC1} \filter^\sigma \label{eq:correctnessrsp3}\\
    \pi_A' \cC &\approx_\eps \cS_{UBQC1} \sigma' \label{eq:soundnessrsp3}
  \end{align}
For the correctness condition, we have:
\begin{align}
    \pi_A' \cC \pi_B' &= (\pi_A \cC) \pi_B' \\
    & \approx_\eps (\cS_{UBQC1} \sigma) \pi_B' \\
    & = \cS_{UBQC1} \filter^\sigma
\end{align}
For the security condition, we define $\sigma' = \sigma$. Then, we have:
\begin{align}
    \pi_A' \cC &= \pi_A \cC \\
    &\approx_\eps \cS_{UBQC1} \sigma
\end{align}
Which concludes our proof.
\end{proof}

\begin{lemma}\label{lemma:SBQC1CC_RSP}
 If $\cS_{UBQC1}$ is $\negl$-classically-realizable with $\filter^{c=0}$ then $\cS_{UBQC1}$ is an \emph{$\negl$-remote state preparation resource} with respect the converters $\cA$ and $\cQ$ and filter $\filter^\sigma$ defined in \cref{fig:APiAPiBQ}.
\end{lemma}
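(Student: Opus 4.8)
The plan is to verify the three conditions of \cref{def:rsp_resource} for the resource $\cS_{UBQC1}$ equipped with the filter $\filter^\sigma$ and the converters $\cA,\cQ$ of \cref{fig:APiAPiBQ}, reducing everything to the real protocol via the correctness guarantee of \cref{lemma:SBQC1CCrealizable}. As specified in \cref{fig:APiAPiBQ}, $\cQ$ merely forwards the state $\tilde{\rho}$ it receives (so $\rho' = \tilde{\rho}$), while $\cA$ feeds an angle $\phi$ into the left interface, reads the returned bit $\bar{s}$, and outputs the classical description $[\rho] := [\ketbra{+_{-\phi - \bar{s}\pi}}]$. Condition (1) is then immediate: $\cA$ outputs a single classical density matrix and $\cQ$ a single quantum state. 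Condition (3) is equally immediate, since $\rho$ is a single-qubit pure state whose classical description $\cA$ knows explicitly, so the measurement $\{E_0 = \rho, E_1 = I - \rho\}$ is efficiently implementable (a single-qubit rotation followed by a computational-basis measurement; note $\cA$ never outputs the null matrix here, as $\bar{s} \in \{0,1\}$ always).

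The heart of the proof is condition (2), i.e. $\esp[([\rho],\rho')\leftarrow \cA\cS_{UBQC1}\filter^\sigma\cQ]{\Tr(\rho\rho')} \ge 1 - \negl[n]$. First I would analyze the \emph{real} protocol $\cA\,\pi_A'\cC\pi_B'\,\cQ$, where every quantity is explicit. By correctness of $(P_A,P_B)$ (the hypothesis of \cref{thm:nogo_ubqc1}), the state produced by $P_B$ satisfies $\rho_B \approx \ketbra{+_\theta}$ in trace distance up to $\negl[n]$ with overwhelming probability. Since $\pi_B'$ computes $\tilde{\rho} = R_Z(-\delta)\rho_B R_Z(-\delta)^\dagger$ with $\delta = \theta + r\pi + \phi$, unitary invariance of the trace distance gives $\tilde{\rho} \approx \ketbra{+_{\theta - \delta}} = \ketbra{+_{-\phi - r\pi}}$ up to $\negl[n]$. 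The crucial observation is that $\pi_B'$ reports $s := 0$ on its inner interface, hence $\pi_A'$ outputs $\bar{s} = s \oplus r = r$; therefore $\cA$'s description is exactly $[\rho] = [\ketbra{+_{-\phi - r\pi}}]$, i.e. it correctly names the state prepared on the right. Consequently $\Tr(\rho\rho') = \braket{+_{-\phi-r\pi} | \tilde{\rho} | +_{-\phi-r\pi}} \ge 1 - \negl[n]$, and taking the expectation (bounding the negligible failure event from below by zero) yields $\esp[\cA\,\pi_A'\cC\pi_B'\,\cQ]{\Tr(\rho\rho')} \ge 1 - \negl[n]$.

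It then remains to transfer this bound from the real world to the filtered ideal resource, for which I would reuse the distinguishing argument from the proof of \cref{thm:nogoClassicalRSP}. Consider the distinguisher $D$ that runs $\cA$ and $\cQ$, obtains $([\rho],\rho')$, measures $\rho'$ with the efficient POVM $\{\rho, I - \rho\}$, and outputs the result. By construction $\pr{D = 0}$ is exactly the expectation $\esp{\Tr(\rho\rho')}$ in whichever world $D$ is plugged into. Since $D$ is an admissible (efficient) distinguisher and $\cS_{UBQC1}\filter^\sigma \approx_\eps \pi_A'\cC\pi_B'$ by the correctness part of \cref{lemma:SBQC1CCrealizable}, the two acceptance probabilities differ by at most $\eps = \negl[n]$, so $\esp[\cA\cS_{UBQC1}\filter^\sigma\cQ]{\Tr(\rho\rho')} \ge 1 - \negl[n] - \eps = 1 - \negl[n]$, establishing condition (2).

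The step requiring the most care is the clean identification of the prepared state together with the equality $\bar{s} = r$: it is precisely the design choice to let $\pi_B'$ report $s = 0$ (rather than a genuine measurement outcome) that forces $\bar{s} = r$ and thereby lets the \emph{left}-interface converter $\cA$ recover $r$ and name the state $\ket{+_{-\phi - r\pi}}$ produced on the \emph{right}. One must also be attentive that the closeness guarantee is a trace-distance statement about $\rho_B$ that has to be pushed through the unitary $R_Z(-\delta)$ and then converted into a lower bound on the overlap $\Tr(\rho\rho')$; and that the transfer from the real world to the $\filter^\sigma$-world is legitimate only because the RSP test is efficiently implementable, which is exactly condition (3).
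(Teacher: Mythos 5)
Your proposal is correct and follows essentially the same route as the paper's own proof: analyze the real world $\cA\,\pi_A'\cC\pi_B'\,\cQ$ explicitly (using $s=0$ to force $\bar{s}=r$ and unitary invariance of trace distance through $R_Z(-\delta)$), then transfer the overlap bound to $\cA\,\cS_{UBQC1}\filter^\sigma\cQ$ via the efficient POVM test and the indistinguishability from \cref{lemma:SBQC1CCrealizable}. The only difference is cosmetic: you spell out conditions (1) and (3) of \cref{def:rsp_resource} and the distinguisher construction explicitly, whereas the paper leaves these implicit by reference to the argument in \cref{thm:nogoClassicalRSP}.
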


\begin{proof}
 We need to prove that:
 \begin{equation}
    \esp[([\rho], \rho_B) \leftarrow \cA \cS_{UBQC1} \filter^\sigma \cQ]{ \Tr(\rho \rho_B)} \geq 1-\eps
    \label{eq:remoteresourcea}
  \end{equation}
  First, we remark that due to \cref{lemma:SBQC1CCrealizable}:
  \begin{align}
      \cA \cS_{UBQC1} \filter^\sigma \cQ &\approx_\eps \cA \pi_A' \cC \pi_B' \cQ
      \label{eq:qSBQCapproxpibprime}
  \end{align}
  However, from the protocol description it is easy to check that in the real world $\bar{s} = 0 \xor r = r$, and therefore $\phi' := \phi_0 + \bar{s}\pi = \phi_0 + r\pi$ and $\rho = \ketbra{+_{\phi'}}$. And because the trace distance between $\rho_B$ and $\ketbra{+_\theta}$ is negligible with overwhelming probability (by the correctness of $(P_A,P_B)$), then we also have that $\tilde{\rho} = R_Z(-\delta)\rho_B R(-\delta)^\dagger$ is negligibly close in trace distance to $\ketbra{+_{\theta - \delta}} = \ketbra{+_{-\phi_0 + r\pi}} = \ketbra{+_{\phi'}}$. Therefore, we have:
  \begin{equation}
    \esp[([\rho], \tilde{\rho}) \leftarrow \cA \pi_A' \cC \pi_B' \cQ]{ \Tr(\rho \tilde{\rho})} \geq 1 - \negl[n]
  \end{equation}
  Then it also means that:
  \begin{equation}
    \esp[([\rho], \tilde{\rho}) \leftarrow \cA \cS_{UBQC1} \filter^\sigma \cQ]{ \Tr(\rho \tilde{\rho})} \geq 1 - \negl[n]
  \end{equation}
  otherwise we could (using a similar argument to the one given in the proof of \cref{thm:nogoClassicalRSP}) distinguish between the ideal and the real world, contradicting \cref{eq:qSBQCapproxpibprime}, which concludes the proof.
\end{proof}

Now, using our main \cref{thm:nogoClassicalRSP} we obtain directly that if $\cS_{UBQC1}$ is classically-realizable and $\sf{RSP}$ with respect to filter $\filter^\sigma$, then it is also describable:
\begin{corollary}
If $\cS_{UBQC1}$ is $\negl$-classically-realizable with respect to filter $\filter^{c=0}$ then $\cS_{UBQC1}$ is $\negl$-describable with respect to the converter $\cA$ described above.
\label{corollarry:sbqc1_describable}
\end{corollary}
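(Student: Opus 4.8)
The plan is to read off the corollary as a direct instantiation of the general describability theorem, \cref{thm:nogoClassicalRSP}, once the two preparatory lemmas about $\cS_{UBQC1}$ are fed into it. Recall that that theorem requires exactly two hypotheses on a resource: that it be an $\eps_1$-remote state preparation with respect to some $\cA,\cQ$, and that it be $\eps_2$-classically-realizable; it then yields $(\eps_1+2\eps_2)$-describability with respect to the same $\cA$. Both hypotheses have just been verified for $\cS_{UBQC1}$, so the corollary is essentially a matter of assembling them and tracking the error.

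Concretely, I would proceed as follows. Starting from the hypothesis that $\cS_{UBQC1}$ is $\negl$-classically-realizable with respect to the honest filter $\filter^{c=0}$ (which by definition provides $(\pi_A,\pi_B)$ and the simulator $\sigma$ of \cref{eq:correctnessrsp1,eq:soundnessrsp1}), I first apply \cref{lemma:SBQC1CCrealizable} to transport this realizability to the modified protocol $(\pi_A',\pi_B')$ under the tailored filter $\filter^\sigma$, which gives $\eps_2=\negl[n]$. Next I apply \cref{lemma:SBQC1CC_RSP}, which certifies that under this same filter $\filter^\sigma$ and the converters $\cA,\cQ$ of \cref{def:SBQC1CC}, the resource $\cS_{UBQC1}$ is an $\negl$-remote state preparation resource, giving $\eps_1=\negl[n]$. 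With both preconditions in hand I invoke \cref{thm:nogoClassicalRSP} to conclude that $\cS_{UBQC1}$ is $(\eps_1+2\eps_2)$-describable with respect to $\cA$; since a finite sum of negligible functions is negligible, $\eps_1+2\eps_2=\negl[n]$, which is exactly the claim.

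The only point demanding care -- and the one place a naive application could go wrong -- is that \cref{thm:nogoClassicalRSP} is stated with the generic filter $\filter$, whereas every statement here runs under the tailored filter $\filter^\sigma=\sigma\pi_B'$. I would resolve this exactly as flagged in the footnote to \cref{thm:nogo_ubqc1}: treat $\filter^\sigma$ as a bona fide filter (equivalently, absorb it into the resource and work with $\cS_{UBQC1}\filter^\sigma$ under the standard filter). What makes this legitimate is that the filter enters the hypotheses of \cref{thm:nogoClassicalRSP} only through two matched occurrences on the right interface -- once in the remote-state-preparation equation and once in the availability equation -- and both \cref{lemma:SBQC1CC_RSP} and \cref{lemma:SBQC1CCrealizable} supply these against the identical filter $\filter^\sigma$. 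After the security condition is used to replace $\pi_A'\cC$ by $\cS_{UBQC1}\sigma'$, the filter disappears and the describing converter $\cP=\sigma'\cB$ (built from the reproducibility guarantee of \cref{lemma:classicalAreReproducible}) carries the remainder, so no change to the theorem is needed -- only consistent bookkeeping of which filter is active. This describability is then the input to the final no-signaling argument (\cref{lemma:impossibleDescribable}) that completes \cref{thm:nogo_ubqc1}.
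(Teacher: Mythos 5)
Your proposal is correct and matches the paper's route exactly: the paper states this corollary as a direct consequence of feeding \cref{lemma:SBQC1CCrealizable} ($\eps_2$-classical-realizability under $\filter^\sigma$) and \cref{lemma:SBQC1CC_RSP} ($\eps_1$-\RSP under the same filter and converters $\cA,\cQ$) into \cref{thm:nogoClassicalRSP}, yielding $(\eps_1+2\eps_2)$-describability with both terms negligible. Your added remark on why the tailored filter $\filter^\sigma$ can stand in for the generic one is a sound piece of bookkeeping that the paper itself only gestures at in a footnote.
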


\begin{lemma}\label{lemma:round}
  Let $\Omega = \{[\rho_i]\}$ be a set of (classical descriptions of) density matrices, such that $\forall i \neq j$, $\Tr(\rho_i\rho_j) \leq 1 - \eta$. Then let $([\rho], [\tilde{\rho}])$ be two random variables (representing classical description of density matrices), such that $[\rho] \in \Omega$ and $\esp[([\rho], [\tilde{\rho}])]{\Tr(\rho\tilde{\rho})} \geq 1 - \eps$, with $\eta > 6\sqrt{\eps}$. Then, if we define the following ``rounding'' operation that rounds $\tilde{\rho}$ to the closest $\tilde{\rho}_r \in \Omega$:
  \begin{align}
    [\tilde{\rho}_r] := \Round_\Omega([\tilde{\rho}]) := \argmax_{[\tilde{\rho}_r] \in \Omega} \Tr(\tilde{\rho}_r\tilde{\rho})
  \end{align}
  Then we have:
  \begin{equation}
  \pr[([\rho], [\tilde{\rho}])]{\Round_\Omega([\tilde{\rho}]) = [\rho]} \geq 1-\sqrt{\eps}
  \end{equation}
  In particular, if $\eps = \negl[n]$, and $\eta \neq 0$ is a constant, $\pr{\Round_\Omega([\tilde{\rho}]) = [\rho]} \geq 1-\negl[n]$.
\end{lemma}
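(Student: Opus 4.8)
The plan is to decouple the statement into a probabilistic step and a deterministic geometric step. First I would apply Markov's inequality to the nonnegative random variable $1-\Tr(\rho\tilde{\rho})$: since $\rho$ and $\tilde{\rho}$ are density matrices, $\Tr(\rho\tilde{\rho})\in[0,1]$, so $1-\Tr(\rho\tilde{\rho})\ge 0$ and its expectation is at most $\eps$. Markov then gives $\Pr[\,1-\Tr(\rho\tilde{\rho})\ge\sqrt{\eps}\,]\le\sqrt{\eps}$; equivalently, with probability at least $1-\sqrt{\eps}$ the ``good event'' $\Tr(\rho\tilde{\rho})>1-\sqrt{\eps}$ occurs. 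This already produces the $1-\sqrt{\eps}$ of the conclusion, so it remains to show that rounding succeeds \emph{deterministically} on every realisation of the good event; inclusion of events then yields $\Pr[\Round_\Omega([\tilde{\rho}])=[\rho]]\ge 1-\sqrt{\eps}$.

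For the deterministic step, fix a realisation with $\rho=\rho_i\in\Omega$ and $\Tr(\rho_i\tilde{\rho})>1-\sqrt{\eps}$, and I would show that $\rho_i$ is the \emph{strict} maximiser of $[\rho_r]\mapsto\Tr(\rho_r\tilde{\rho})$ over $\Omega$, i.e. $\Tr(\rho_j\tilde{\rho})<\Tr(\rho_i\tilde{\rho})$ for every $j\neq i$. The natural tool is a genuine metric on states that converts the overlap hypotheses into an additive triangle inequality. I would use the Bures angle $A(\rho,\sigma):=\arccos F(\rho,\sigma)$, which is a metric valued in $[0,\pi/2]$, together with the facts that for pure $\rho_i$ one has $F(\rho_i,\sigma)^2=\Tr(\rho_i\sigma)$ exactly and that in general $\Tr(\rho\sigma)\le F(\rho,\sigma)^2$. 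As in all intended applications -- e.g. $\Omega=\{[\ketbra{+_\theta}]\}$ with $\theta\in\Z\frac{\pi}{4}$ -- the elements of $\Omega$ are pure and pairwise separated, $\Tr(\rho_i\rho_j)\le 1-\eta$; this purity is what makes the separation hypothesis meaningful and is used below. Writing $\alpha:=A(\rho_i,\tilde{\rho})$ and $\beta:=A(\rho_i,\rho_j)$, the good event gives $\sin^2\alpha<\sqrt{\eps}$ and the separation gives $\sin^2\beta\ge\eta$.

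The triangle inequality $A(\rho_j,\tilde{\rho})\ge\beta-\alpha$ (with $0\le\beta-\alpha\le\pi/2$, which holds since $\eta>\sqrt{\eps}$ forces $\beta>\alpha$) and monotonicity of $\cos^2$ on $[0,\pi/2]$ then give $\Tr(\rho_j\tilde{\rho})=\cos^2 A(\rho_j,\tilde{\rho})\le\cos^2(\beta-\alpha)$. It therefore suffices to check $\cos^2(\beta-\alpha)<1-\sqrt{\eps}$, i.e. $\sin(\beta-\alpha)>\eps^{1/4}$. Expanding $\sin(\beta-\alpha)=\sin\beta\cos\alpha-\cos\beta\sin\alpha\ge\sqrt{\eta}\,\sqrt{1-\sqrt{\eps}}-\eps^{1/4}$, this reduces to the elementary inequality $\eta(1-\sqrt{\eps})>4\sqrt{\eps}$, which is implied by the hypothesis $\eta>6\sqrt{\eps}$ whenever $\sqrt{\eps}\le 1/3$, hence in the asymptotic regime of interest. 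Since then $\Tr(\rho_i\tilde{\rho})>1-\sqrt{\eps}>\Tr(\rho_j\tilde{\rho})$, the $\argmax$ is uniquely $\rho_i$, so $\Round_\Omega([\tilde{\rho}])=[\rho]$ on the good event. An equally viable, more elementary alternative replaces the Bures angle by the Hilbert--Schmidt distance: for pure states, $\argmax_r\Tr(\rho_r\tilde{\rho})$ coincides with the nearest neighbour of $\tilde{\rho}$ in $\|\cdot\|_2$, and the triangle inequality $\|\tilde{\rho}-\rho_j\|_2\ge\|\rho_i-\rho_j\|_2-\|\rho_i-\tilde{\rho}\|_2$ with $\|\rho_i-\rho_j\|_2\ge\sqrt{2\eta}$ and $\|\rho_i-\tilde{\rho}\|_2<\sqrt{2}\,\eps^{1/4}$ yields the same conclusion (even under the weaker hypothesis $\eta>4\sqrt{\eps}$).

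I expect the main obstacle to be the deterministic rounding step rather than the Markov step. One must select a bona fide metric so that the overlap hypotheses become additive, and be careful that (i) the separation hypothesis $\Tr(\rho_i\rho_j)\le 1-\eta$ is a faithful notion of distance only for pure $\Omega$ -- for genuinely mixed states it fails to lower-bound the Bures angle, since only $\Tr(\rho\sigma)\le F(\rho,\sigma)^2$ holds in one direction -- and (ii) the $\argmax$ is attained \emph{strictly} and uniquely at $\rho_i$, which is precisely what the slack between the stated constant $6$ and the tight constant $4$ comfortably secures.
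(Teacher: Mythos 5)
Your Markov step is exactly the paper's. Where you diverge is the deterministic rounding step. The paper introduces no metric explicitly: it argues by contradiction that if some $\rho_j\neq\rho$ had $\Tr(\rho_j\tilde{\rho})\geq\Tr(\rho\tilde{\rho})\geq 1-\sqrt{\eps}$, then a ``transitivity of overlap'' lemma (\cref{lemma:trace_transitivity}: $\Tr(\rho_1\rho_2)\geq 1-\eps_1$ and $\Tr(\rho_2\rho_3)\geq 1-\eps_2$ imply $\Tr(\rho_1\rho_3)\geq 1-3(\eps_1+\eps_2)$) would force $\Tr(\rho_j\rho)\geq 1-6\sqrt{\eps}>1-\eta$, contradicting the separation hypothesis. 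That transitivity lemma is proved by precisely your ``more elementary alternative'': the identity $\Tr(\rho\sigma)=\frac{1}{2}[\Tr(\rho^2)+\Tr(\sigma^2)]-\frac{1}{2}\lVert\rho-\sigma\rVert_{\text{HS}}^2$ together with the triangle inequality for the Hilbert--Schmidt norm. So your secondary route is essentially the paper's argument repackaged as a nearest-neighbour statement, while your primary route via the Bures angle is genuinely different; both of your routes buy the sharper constant $4$ in place of the paper's $6$, and the value $6$ in the lemma's hypothesis is exactly the artefact of the Hilbert--Schmidt computation in \cref{lemma:trace_transitivity}.

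The one substantive difference in scope is purity. The lemma as stated makes no purity assumption on $\Omega$, and the paper's proof needs none: the identity above combined with $\Tr(\rho\sigma)\geq 1-\eps\Rightarrow\Tr(\rho^2)\geq 1-2\eps$ (\cref{lemma:trace_properties}) handles nearly-pure states automatically, which is all the hypotheses can enforce. Both of your arguments use exact purity of every element of $\Omega$ --- for $F(\rho_i,\sigma)^2=\Tr(\rho_i\sigma)$, for $\sin^2\beta\geq\eta$, and for identifying the $\argmax$ of the overlap with the nearest neighbour in $\lVert\cdot\rVert_{\text{HS}}$. You flag this yourself, and it is harmless for the paper's application (\cref{lemma:impossibleDescribable} invokes the lemma only with $\Omega$ a set of pure states $[\ketbra{+_{\theta'}}]$), but as written you prove a restriction of the stated lemma rather than the lemma itself. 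The patch is small: any $\rho_j$ achieving $\Tr(\rho_j\tilde{\rho})\geq 1-\sqrt{\eps}$ is automatically $2\sqrt{\eps}$-close to pure, which lets the Hilbert--Schmidt version run without assuming purity --- and that is the paper's proof.
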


\begin{proof}
We know that $\esp[([\rho], [\tilde{\rho}])]{ \Tr(\rho \tilde{\rho})} \geq 1-\eps$. Therefore, using Markov inequality we get that:
\begin{align}
  \pr[([\rho], [\tilde{\rho}])]{1 - \Tr(\rho \tilde{\rho}) \geq \sqrt{\eps}} &\leq \frac{\esp{1 - \Tr(\rho \tilde{\rho})}}{\eps} \\
  \pr[([\rho], [\tilde{\rho}])]{\Tr(\rho \tilde{\rho}) \leq 1 - \sqrt{\eps}} &\leq \frac{\eps}{\sqrt{\eps}} \\
  \pr[([\rho], [\tilde{\rho}])]{\Tr(\rho \tilde{\rho}) \geq 1 - \sqrt{\eps}} &\geq 1 - \sqrt{\eps} \label{eq:trBiggerEps1}
\end{align}
But when $\Tr(\rho \tilde{\rho}) \geq 1 - \sqrt{\eps}$, we have $\Round_\Omega([\tilde{\rho}]) = \rho$.
\begin{subproof}
  We will indeed show that $\forall \rho_i \in \Omega$, $\Tr(\rho_i\tilde{\rho}) \leq \Tr(\rho\tilde{\rho})$. By contradiction, we assume there exists $\rho_i \in \Omega$ such that $\rho_i \neq \rho$ and $\Tr(\rho_i\tilde{\rho}) > \Tr(\rho\tilde{\rho}) \geq 1 - \sqrt{\eps}$. But due to \cref{lemma:trace_transitivity} we have:
  \begin{align}
  \Tr(\rho_i\rho) \geq 1 - 3(\sqrt{\eps} + \sqrt{\eps}) = 1-6\sqrt{\eps}  
  \end{align}
  However, because both $\rho_i$ and $\rho$ belong to $\Omega$, we also have $\Tr(\rho_i\rho) \leq 1 - \eta < 1-6\sqrt{\eps}$, which is absurd.
\end{subproof}
Therefore, using \cref{eq:trBiggerEps1} we have
\begin{align}
  \pr[([\rho], [\tilde{\rho}])]{\Round_\Omega([\tilde{\rho}]) = [\rho]} \geq 1 - \sqrt{\eps}
\end{align}
which concludes the proof.
\end{proof}

\begin{lemma}\label{lemma:impossibleDescribable}
  $\cS_{UBQC1}$ cannot be $\negl$-describable with respect to  converter $\cA$.
\end{lemma}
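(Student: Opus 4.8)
The plan is to derive a contradiction from the assumption that $\cS_{UBQC1}$ is $\negl$-describable, by showing that a describing converter sitting on the right interface would let that interface recover information about the angle $\phi_0$ fed into the \emph{left} interface, even though the resource transmits nothing to the right — i.e. it would realize superluminal signaling. So I would argue by contradiction: suppose there is a (possibly unbounded) converter $\cP$ on the right interface with $\esp[([\rho],[\rho']) \leftarrow \cA \cS_{UBQC1} \cP]{\Tr(\rho\rho')} \geq 1-\negl$. Recall from \cref{def:SBQC1CC} (see \cref{fig:APiAPiBQ}) that $\cA$ samples a uniformly random $\phi_0 \in \Z\frac{\pi}{4}$, inputs it at the left interface, reads the returned bit $\bar s$, and outputs $[\rho] = [\ketbra{+_{\phi_0 + \bar s\pi}}]$, so that $\rho$ always lies in $\Omega := \{\ketbra{+_\alpha} : \alpha \in \Z\frac{\pi}{4}\}$.

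The first, and conceptually central, step is the no-signaling observation. Inspecting \cref{fig:SBQC1}, the right interface of $\cS_{UBQC1}$ \emph{only receives} inputs (the control bit $c$ and, when $c=1$, the filtered deviation $(\xi,\rho)$) and is never sent any message by the resource. Consequently the output $[\rho']$ of any converter $\cP$ attached to the right interface is a function of $\cP$'s own internal randomness alone and is, in particular, statistically independent of the angle $\phi_0$ chosen at the left interface. Were this not so, a party controlling the right interface could learn information about $\phi_0$ — freely and independently chosen at the spatially separated left interface — without receiving any communication, i.e. by superluminal signaling, which the quantum-instrument model of converters forbids.

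The second step turns the fidelity guarantee into exact recovery via the rounding lemma. The states in $\Omega$ are pairwise distinguishable: for distinct $\alpha,\alpha' \in \Z\frac{\pi}{4}$ one has $\Tr(\ketbra{+_\alpha}\ketbra{+_{\alpha'}}) = \cos^2\frac{\alpha-\alpha'}{2} \leq \cos^2\frac{\pi}{8} = 1-\eta$ with the constant $\eta := \sin^2\frac{\pi}{8} > 0$. Since $\eps = \negl$ eventually satisfies $\eta > 6\sqrt{\eps}$, \cref{lemma:round} applied with this $\Omega$ gives $\Round_\Omega([\rho']) = [\rho] = [\ketbra{+_{\phi_0 + \bar s\pi}}]$ with probability $\geq 1-\negl$. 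Reading the angle of this rounded matrix modulo $\pi$ therefore yields, with overwhelming probability, the value $\phi_0 \bmod \pi$, since the additive term $\bar s\pi$ is annihilated by reduction mod $\pi$.

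Finally I would close the contradiction by combining the two facts. Let $Z$ denote the angle modulo $\pi$ extracted from $\Round_\Omega([\rho'])$; it is a function of $[\rho']$ and hence, by the no-signaling step, independent of $\phi_0$. Because $\phi_0$ is uniform on $\Z\frac{\pi}{4}$, the reduced value $\phi_0 \bmod \pi$ is uniform on the four angles $\{0,\frac{\pi}{4},\frac{\pi}{2},\frac{3\pi}{4}\}$, and $Z$ itself takes values in this same set, so independence forces $\Pr[Z = \phi_0 \bmod \pi] = \frac{1}{4}$. This contradicts the bound $\Pr[Z = \phi_0 \bmod \pi] \geq 1-\negl$ from the previous step, so no describing $\cP$ can exist and $\cS_{UBQC1}$ is not $\negl$-describable. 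I expect the step most in need of care to be the no-signaling argument: one must verify that the filtered right interface genuinely returns nothing to $\cP$ across all rounds (including through the control bit $c$ and the injected deviation $(\xi,\rho)$), so that the independence of $[\rho']$ from $\phi_0$ is airtight; the rounding and the final counting argument are then routine.
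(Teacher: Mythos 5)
Your proposal is correct and follows essentially the same route as the paper's proof: contradiction via a describing converter $\cP$, rounding $[\rho']$ to the eight-element set $\Omega$ using \cref{lemma:round}, reducing modulo $\pi$ to strip the $\bar{s}\pi$ one-time pad, and concluding that the right interface would guess $\phi_0 \bmod \pi$ with overwhelming probability despite receiving no message from $\cS_{UBQC1}$, violating no-signaling. Your explicit constant $\eta = \sin^2\frac{\pi}{8}$ and the cleaner statement of independence of $[\rho']$ from $\phi_0$ are minor refinements of the same argument.
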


\begin{proof}
 \begin{figure}
    \centering
\includegraphics[width=.8\linewidth]{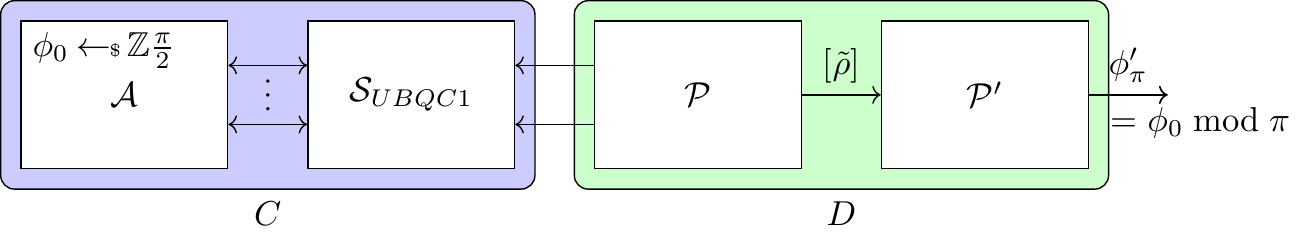}
    \caption{Illustration of the no-signaling argument}
    \label{fig:signal}
\end{figure}

If we assume that $\cS_{UBQC1}$ is $\negl[n]$-describable, then there exists a converter $\cP$ (outputting $[\tilde{\rho}]$) such that:
 \begin{equation}
    \esp[([\rho], [\tilde{\rho}]) \leftarrow \cA \cS_{UBQC1} \cP]{ \Tr(\rho \tilde{\rho})} \geq 1-\negl[n]
    \label{eq:neglDescribable}
 \end{equation}
We define the set $\Omega:= \{ [\ketbra{+_{\theta'}}{+_{\theta'}}] \ | \ \theta' \in \{0, \pi/4,...,7\pi/4 \} \}$. For simplicity, we will denote in the following $[\theta] = [\ketbra{+_{\theta}}]$.\\

In the remaining of the proof, we are going to use the converters $\cA$ and $\cP$ together with the ideal resource $\cS_{UBQC1}$, to construct a 2-party setting that would achieve signaling, which would end our contradiction proof. More specifically, we will define a converter $D$ running on the right interface of $\cS_{UBQC1}$ which will manage to recover the $\phi_0$ chosen randomly by $\cA$. \\  
As shown in \cref{fig:signal}, if we define $C$ as $C := \cA \cS_{UBQC1}$ and $D$ the converter described above, then the setting can be seen equivalently as: $C$ chooses as random $\phi_0$ and $D$ needs to output $\phi_0 \bmod \pi$. 
This is however impossible, as no message is sent from $\cS_{UBQC1}$ to its right interface (as seen in \cref{fig:signal}) (and thus no message from $C$ to $D$), and therefore guessing $\phi_0$ is forbidden by the no-signaling principle \cite{GRW80}.

 We define $\cP'$ as the converter that, given $[\tilde{\rho}]$ from the outer interface of $\cP$ computes $[\tilde{\phi}] = \Round_\Omega([\tilde{\rho}])$ and outputs $\tilde{\phi}_{\pi} = \tilde{\phi} \bmod \pi$ (as depicted in \cref{fig:signal}). We will now prove that $\tilde{\phi}_\pi = \phi_0 \bmod \pi$ with overwhelming probability.\\

All elements in $\Omega$ are different pure states, and in finite number, so there exist a constant $\eta > 0$ respecting the first condition of \cref{lemma:round}. Moreover from  \cref{eq:neglDescribable} we have that $\cS_{UBQC1}$ is $\eps$-describable with $\eps = \negl[n]$, so we also have (for large enough $n$), $\eta > 6\sqrt{\eps}$. Therefore, from \cref{lemma:round}, we have that:
 \begin{equation}\label{eq:rounds}
  \pr[([\rho], [\tilde{\rho}])  \leftarrow \cA \cS_{UBQC1} \cP]{\Round_\Omega([\tilde{\rho}]) = [\rho]} \geq 1-\negl
  \end{equation}
  But using the definition of converter $\cA$, we have: $[\rho] = [\phi']$, where $\phi' = \phi_0 + \bar{s} \pi$, and hence $\phi' \bmod \pi = \phi_0 \bmod \pi$. Then, using the definition of $\cP'$, the \cref{eq:rounds} is equivalent to:
  \begin{equation}\label{eq:round_eq}
   \pr[([\phi'], \tilde{\phi}_\pi)  \leftarrow \cA \cS_{UBQC1} \cP \cP']{\tilde{\phi}_\pi = \phi_0 \bmod \pi} \geq 1 - \negl
  \end{equation}

  However, as pictured in \cref{fig:signal}, this can be seen as a game between $C = \cA\cS_{UBQC1}$ and $D = \cP \cP'$, where, as explained before, $C$ picks a $\phi_{0} \in \Z\frac{\pi}{2}$ randomly, and $D$ needs to output $\phi_0 \bmod \pi$. From \cref{eq:round_eq} $D$ wins with overwhelming probability, however, we know that since there is no information transfer from C to D, the probability of winning this game better than 1/4 (guessing both the bits at random) would imply signalling.
  
\end{proof}

\begin{remark}
The guessing game described at the end of the preceding proof can be generalized to the case when some (partial) information transfer from $C$ to $D$ takes place.
More precisely, whenever we consider a new resource together with some converters $\cA$ and $\cQ$, it is enough to show that this resource is not describable to prove that it is impossible to classically realize. To that purpose, it may as above be practical to define a guessing game similar to the above one, but without the nice property that no information flows from $C$ to $D$. Here, the connections with the non-local games~\cite{brunner2014bell} and information causality~\cite{pawlowski2009information} could provide an upper bound on the winning probability (e.g., as a function of the conditional mutual information conditioned on the information exchanged). 
We leave the quantitative analysis for future work.
\end{remark}

\subsection{Impossibility of Composable \texorpdfstring{$\sf{UBQC}_{CC}$}{UBQCcc} on Any Number of Qubits \label{subsec:imposs_general_cc_ubqc}}

We saw in \cref{thm:nogo_ubqc1} that it is not possible to implement a composable classical-client $\sf{UBQC}$ protocol performing a computation on a single qubit.
In this section, we prove that this result generalizes to the impossibility of $\sf{UBQC_{CC}}$ on computations using an arbitrary number of qubits. The proof works by reducing the general case to the single-qubit case from the previous section.

\begin{theorem}[{No-go Composable Classical-Client $\sf{UBQC}$}]\label{thm:nogo_ubqc}
  Let $(P_A, P_B)$ be a protocol interacting only through a classical channel $\cC$, such that $(\theta, \rho_B) \leftarrow (P_A \cC P_B)$ with $\theta \in \Z \frac{\pi}{4}$, and such that the trace distance between $\rho_B$ and $\ket{+_\theta}\bra{+_\theta}$ is negligible with overwhelming probability. Then, if we define $(\pi^G_A,\pi^G_B)$ as the $\sf{UBQC}$ protocol on any fixed graph $G$ (with at least one output qubit\footnote{Note, that in $\sf{UBQC_{CC}}$ with zero output qubits the client does not receive any results. Hence, the protocol is trivially implementable for this degenerated case.}), that uses $(P_A, P_B)$ as a sub-protocol to replace the quantum channel, $(\pi^G_A, \pi^G_B)$ is not composable, i.e. there exists no simulator $\sigma$ such that:
  \begin{align}
    \pi^G_A \cC \pi^G_B &\approx_\eps \cS_{UBQC1} \filter^{c=0} \label{eq:correctnessubqc1}\\
    \pi^G_A \cC &\approx_\eps \cS_{UBQC1} \sigma \label{eq:soundnessubqc1}
  \end{align}
  for some negligible $\eps = \negl[n]$.
\end{theorem}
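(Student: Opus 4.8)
The plan is to prove the statement by reduction to the single-qubit case, following the route announced in the proof sketch: I assume towards a contradiction that the graph protocol is composably secure, and from it I manufacture a composably secure single-qubit protocol, which \cref{thm:nogo_ubqc1} forbids. Write $\cS_{UBQC}^G$ for the ideal $\sf{UBQC}$ resource on the graph $G$ from~\cite{dunjko2014composable}, of which $\cS_{UBQC1}$ of \cref{def:ideal_bcq1} is the single-qubit instance. So suppose there were a simulator $\sigma^G$ with $\pi^G_A \cC \pi^G_B \approx_\eps \cS_{UBQC}^G \filter^{c=0}$ and $\pi^G_A \cC \approx_\eps \cS_{UBQC}^G \sigma^G$ for some negligible $\eps$.

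First I would construct a client-side (left-interface) converter $\alpha$ that collapses the graph computation to a single logical measurement. Since $G$ has at least one output qubit, fix one such qubit $o$. On input a single angle $\phi \in \Z\frac{\pi}{4}$, the converter $\alpha$ produces the full family of measurement angles $\{\phi_{i,j}\}$ of Protocol~\ref{protocol:ubqcReal} in which the qubit feeding $o$ carries the angle $\phi$ and every remaining qubit is assigned a fixed angle implementing an identity wire towards $o$ (using the flow and $X/Z$ dependency bookkeeping of~\cite{danos2006determinism}); on the output side $\alpha$ keeps only the bit associated with $o$ and discards the rest. By construction the induced logical computation is exactly $\bar{s} = M_{\pm\phi}\ket{+}$. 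In the degenerate case where $o$ is isolated in $G$, the pattern already coincides verbatim with the single-qubit $\sf{UBQC}$ protocol of \cref{thm:nogo_ubqc1}; for larger $G$ the only extra content is this routing, which is standard measurement calculus.

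The crux is the resource-level identity $\alpha\,\cS_{UBQC}^G \approx_{\negl[n]} \cS_{UBQC1}$, in both the honest and the malicious regimes. Honestly, correctness of the routed pattern gives $\alpha\,\cS_{UBQC}^G\filter^{c=0} \approx_{\negl[n]}\cS_{UBQC1}\filter^{c=0}$, since the retained output bit evaluates to $M_{\pm\phi}\ket{+}$. In the malicious regime I would argue that any deviation a dishonest server can trigger through $\cS_{UBQC}^G$, once post-composed with $\alpha$'s fixed decoding, is again of the admissible single-qubit form $\bar{s} = \xi(\rho,\phi)$ of \cref{def:ideal_bcq1}: the server's CPTP deviation on $G$ together with the discarding of the non-$o$ outcomes defines a single CPTP map $\xi$ on the auxiliary state $\rho$ and on $\phi$, which enters only via the angles emitted by $\alpha$. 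Granting this identity, and noting that the left-interface converter $\alpha$ commutes with the right-interface simulator $\sigma^G$, I set $\pi_A := \alpha\,\pi^G_A$, $\pi_B := \pi^G_B$ and $\sigma := \sigma^G$; appending $\alpha$ to the two assumed graph relations and using that converters never increase the distinguishing advantage yields $\pi_A\cC\pi_B \approx_\eps \cS_{UBQC1}\filter^{c=0}$ and $\pi_A\cC \approx_\eps \cS_{UBQC1}\sigma$. As this $(\pi_A,\pi_B)$ is precisely a single-qubit $\sf{UBQC}$ protocol built from the correct sub-protocol $(P_A,P_B)$, it contradicts \cref{thm:nogo_ubqc1}, which completes the argument.

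I expect the main obstacle to be exactly the malicious-case half of the resource identity: showing that the deviations reachable by a malicious server in $\cS_{UBQC}^G$ decode, through $\alpha$, into the single-qubit class $\{\xi(\rho,\phi)\}$, and in particular that the identity-routed and discarded qubits give the adversary no handle producing an output bit outside this form (the discarded measurement outcomes must be absorbable into the auxiliary register $\rho$). One must also check that the constructed $(\pi_A,\pi_B)$ genuinely matches the protocol shape assumed in \cref{thm:nogo_ubqc1}, so that theorem applies verbatim. The honest correctness and the ``append a converter'' inequalities are routine; the care lies entirely in this resource-level identification of the two malicious interfaces.
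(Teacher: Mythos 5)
Your high-level strategy (reduce the graph case to the single-qubit case) is the same as the paper's, but the way you close the argument has a genuine gap. Theorem~\ref{thm:nogo_ubqc1} is not a statement about \emph{every} protocol that classically realizes $\cS_{UBQC1}$; it is a statement about the \emph{specific} single-qubit protocol of \cref{fig:honestubqc}, and its proof exploits that structure essentially: the modified server converter $\pi_B'$ of \cref{def:SBQC1CC} runs the single $P_B$, obtains $\rho_B\approx\ket{+_\theta}$, and applies $R_Z(-\delta)$ to recover a state close to $\ket{+_{\phi'}}$ — this is what makes $\cS_{UBQC1}$ an \RSP resource (\cref{lemma:SBQC1CC_RSP}) and hence describable. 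Your derived protocol $(\alpha\,\pi^G_A,\pi^G_B)$ runs $n\times m$ instances of $(P_A,P_B)$, entangles the qubits, and exchanges many rounds of messages; it does not match the shape of \cref{fig:honestubqc}, so \cref{thm:nogo_ubqc1} cannot be invoked verbatim, and the check you defer to at the end would in fact fail. Even granting your resource-level identity $\alpha\,\cS_{UBQC}^G\approx\cS_{UBQC1}$ (itself nontrivial on the malicious side, as you note, and also requiring an interface-adapting converter on the right since $\sigma^G$ simulates a different malicious interface than $\cS_{UBQC1}$'s), you would still need to redo the describability argument for the new protocol, i.e.\ exhibit a server-side converter that extracts a quantum state close to $\ket{+_{\phi'}}$ from the graph protocol's transcript. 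That extraction step is the actual content of the paper's proof and is absent from yours.

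The paper's proof supplies exactly this missing piece, and in doing so avoids the detour through $\cS_{UBQC}^G$ entirely. It fixes the client input to $\phi_i=0$ for all $i\neq\omega$ and $\phi_\omega=\phi$, and modifies the server side so that it does \emph{not} entangle the qubits returned by the $P_B$ instances, measures the non-output qubits honestly, and on the output qubit $\omega$ skips the measurement, applies $R_Z(-\delta_\omega)$, and outputs the qubit. A short induction then shows $s_i=r_i$ for all $i\neq\omega$, so every correction term vanishes and $\delta_\omega=\theta_\omega+\phi+r_\omega\pi$ — the transcript and the recoverable state on qubit $\omega$ coincide exactly with those of the single-qubit protocol, after which the machinery of \cref{subsec:imposs_single_cc_ubqc} (RSP property, describability via \cref{thm:nogoClassicalRSP}, and the no-signaling contradiction of \cref{lemma:impossibleDescribable}) applies unchanged. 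To repair your proposal you would need to add this non-entangling, rotation-undoing server converter and re-run the describability argument; keeping $\pi_B:=\pi^G_B$ honest and citing \cref{thm:nogo_ubqc1} as a black box does not suffice.
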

\begin{proof}
  To prove this statement, we just need to prove that we can come back to the setting with a single qubit, where we want to perform a computation with angle $\phi$, and output one angle close to $\phi$  as in the proof of \cref{thm:nogo_ubqc1}. Because the graph has at least one output qubit, we will denote by $\omega$ the index of the last output qubit. So the idea is to let the distinguisher choose the client input such that for any node $i \neq \omega$ in the graph, $\phi_i = 0$, and for the output qubit, $\phi_\omega = \phi$. Moreover, on the server side, the distinguisher will behave like the honest protocol $\pi^G_B$, except that it will not entangle the qubits provided by $P_A$, and it will deviate on the output qubit $\omega$ by not measuring it and sending $s := 0$, the qubit being rotated again with angle $-\delta_\omega$, and outputed on the outer interface, like in the one-qubit case. It is now easy to see by induction (over the index of the qubit, following the order chosen on $G$) that, in the real world, for all $i \neq \omega$, we always have $s_i = r_i$, therefore $\bar{s_i} = 0$. So for all nodes $i$, (including $\omega$), $s^X_i = \oplus_{i \in D_{i}} \bar{s}_i = 0$ and $s^Z_i = \oplus_{i \in D'_{i}} \bar{s}_i = 0$. Thus we have on the last node:
  \begin{align*}
    \delta_\omega
    &= \theta_\omega + (-1)^{s^X_{\omega}}\phi_\omega + s^Z_{\omega}\pi+ r_\omega \pi\\
    &= \theta_\omega + \phi + r_\omega \pi
  \end{align*}
  which corresponds exactly to the single-qubit setting, shown to be impossible.
\end{proof}


\section{Game-Based Security of \texorpdfstring{$\sf{QF}$-$\sf{UBQC}$}{QF-UBQC} \label{sec:game_based_cubqc}}

While we know from Theorem~\ref{thm:nogo_ubqc} that classical-client $\sf{UBQC}$ (henceforth simply $\sf{UBQC_{CC}}$) cannot be proven secure in a fully composable setting, there is hope that it remains possible with a weaker definition of security. And indeed, in this section we show that $\sf{UBQC_{CC}}$ is possible in the \emph{game-based setting} by implementing it using a combination of the known quantum-client $\sf{UBQC}$ Protocol~\ref{protocol:ubqcReal} \cite{broadbent2009universal} and 8-states QFactory Protocol~\ref{protocol:QFactory4to8} \cite{cojocaru2019qfactory}. We start with giving a formal definition of the game-based security of $\sf{UBQC_{CC}}$.

\begin{definition}[Blindness of $\sf{UBQC_{CC}}$] \label{def:game_based_security_blindness_CUBQC}
  A $\sf{UBQC_{CC}}$ protocol $\mathcal{P} = (P_C, P_S)$ is said to be (computationally) adaptively blind if no computationally bounded malicious server can distinguish between runs of the protocol with adversarially chosen measurement patterns on the same MBQC graph.
  
  In formal terms, $\mathcal{P}$ is said to be (computationally) adaptively blind if and only if for any quantum-polynomial-time adversary $A$ it holds that
    \begin{align*}
      \operatorname{Pr} \left[ c' = c \, \left| \, (\phi^{(1)}, \phi^{(2)}) \gets A ,\, c \sample \bin ,\, \left\langle P_C(\phi^{(c)}), A \right\rangle ,\, c' \gets A \right. \right] \leq \frac{1}{2} + \operatorname{negl}(\lambda),
  \end{align*}
  where $\lambda$ is the security parameter, and $\left\langle P_C(\phi^{(c)}), A \right\rangle$ denotes the interaction of the two algorithms $P_C(\phi^{(c)})$ and $A$.
\end{definition}

\begin{remark}
Although, \cref{def:game_based_security_blindness_CUBQC} is written using the terminology of measurement-based model. It doesn't compromise the generality, as the model is universal and can be easily translated into a circuit model, because the measurement pattern and unitary operator have a one-to-one mapping.
\end{remark}

\subsection{Implementing Classical-Client \texorpdfstring{$\sf{UBQC}$}{UBQC} with QFactory \label{cubqc_qfact}}

The $\sf{UBQC}$ protocol from~\cite{broadbent2009universal}, where the quantum interaction is replaced by a $\sf{RSP}^{8-states}_{CC}$ protocol, is shown in Protocol~\ref{protocol:ubqcReal}. In this section, we replace the $\sf{RSP}^{8-states}_{CC}$ protocol with the concrete protocol proposed in~\cite{cojocaru2019qfactory}.  This protocol, known by the name of 8-states QFactory\footnote{We refer here to the 8-states QFactory implementation with negligible abort probability, and superpolynomial parameters. This is necessary since our proof does not take the abort case into account for now.} and described in Protocol~\ref{protocol:QFactory4to8}, exactly emulates the capability of $\sf{RSP}^{8-states}_{CC}$. The resulting protocol contains a QFactory instance for each qubit that would have been generated on the client's side.
The keys to all QFactory instances are generated entirely independently by the client.

Unfortunately, considering the results from Section~\ref{sec:impossibility_composable_CUBQC} there is no hope that the composable security of any $\sf{UBQC_{CC}}$ may be achieved. Nonetheless, letting go of composability, we are able to prove the game-based security for this specific combination of protocols. This leads us to the main theorem of this section.

\begin{theorem}[Game-based Blindness of $\sf{QF}$-$\sf{UBQC}$] \label{thm:game_based_security_UBQC_QF}
    The protocol resulting from combining the quantum-client $\sf{UBQC}$ protocol with QFactory is a (computationally) adaptively blind implementation of $\sf{UBQC_{CC}}$ in the game-based model according to \cref{def:game_based_security_blindness_CUBQC}. We call this protocol $\sf{QF}$-$\sf{UBQC}$.
\end{theorem}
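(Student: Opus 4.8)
The plan is to prove game-based blindness exactly along the two-stage route indicated in the proof sketch: first settle the single-qubit case by a short sequence of games, then lift it to an arbitrary graph $G$ by a hybrid argument over the qubits. For the single-qubit case, recall from Protocol~\ref{protocol:ubqcReal} that the only message carrying information about the client's angle is $\delta = \phi + \theta + r\pi$, where $\theta \in \Z\frac{\pi}{4}$ is the angle of the state prepared by QFactory (Protocol~\ref{protocol:QFactory4to8}) and $r \gets \{0,1\}$ is a fresh one-time-pad bit. I would write $\theta = \theta_b + t\pi$ as its \emph{basis} part $\theta_b \in \{0,\tfrac{\pi}{4},\tfrac{\pi}{2},\tfrac{3\pi}{4}\}$ and its \emph{bit} part $t \in \{0,1\}$, so that $\delta = \phi + \theta_b + (t \oplus r)\pi$. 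The key cryptographic ingredient is the computational hiding of the basis: in QFactory the basis is conveyed to the server only through an LWE-based ciphertext $y_0$ (cf.\ \cref{lemma:qfactorysecurewithleakage} and the concluding remarks of \cref{subsec:imperfectRSP}), so by semantic security the server's view is computationally independent of $\theta_b$.

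Concretely, Game~$0$ is the real blindness experiment of \cref{def:game_based_security_blindness_CUBQC} with challenge pattern $\phi^{(c)}$. In Game~$1$ I replace the QFactory ciphertext encoding $\theta_b$ by an encryption of a fixed value, bounding the distinguishing advantage by the semantic-security advantage, hence by $\negl[\lambda]$ under LWE; after this hop $\theta_b$ is a uniformly random angle that is information-theoretically hidden from the server. In Game~$2$ I observe that, with $\theta_b$ uniform over the four values $\{0,\tfrac{\pi}{4},\tfrac{\pi}{2},\tfrac{3\pi}{4}\}$ and $r$ uniform, the pad $\theta_b + (t \oplus r)\pi$ is uniform over all of $\Z\frac{\pi}{4}$ and independent of $\phi$. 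Hence $\delta$ carries no information about $c$, the experiment collapses to guessing a hidden coin, and the advantage is exactly $0$. Summing the hops yields a negligible overall advantage, establishing single-qubit blindness.

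To handle a general graph with $N = n \times m$ qubits I would run a hybrid argument over the qubits, which is where the induction on graph size from the sketch enters. Since the client generates the keys of all QFactory instances \emph{independently} (one per qubit), I define hybrids $H_0,\dots,H_N$ in which the basis parts of the first $k$ instances have been decoupled from the server's view as in Game~$1$, while the remaining instances are run honestly. Each step $H_k \to H_{k+1}$ is justified by semantic security on the $(k{+}1)$-th ciphertext alone, with the reduction simulating all other instances and the entire classical $\sf{UBQC}$ bookkeeping itself; this is efficient precisely because the other instances do not require the trapdoor of the challenged one. In $H_N$ every transmitted angle $\delta_{i,j} = \phi'_{i,j} + \theta_{i,j} + r_{i,j}\pi$ is padded by an independent uniform basis and an independent uniform $r_{i,j}$, so the joint distribution of the transcript is independent of $c$ and the final experiment is again coin-guessing.

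The main obstacle is the adaptive coupling between qubits through the dependency corrections $\phi'_{i,j} = (-1)^{s^X_{i,j}}\phi_{i,j} + s^Z_{i,j}\pi$, since the parities $s^X_{i,j}, s^Z_{i,j}$ are functions of earlier outcomes $\bar{s}$ that a malicious adaptive server partly controls through the bits $s_{i,j}$ it returns. I must argue that the per-qubit masking survives this adversarial influence: the fresh bit $r_{i,j}$ one-time-pads the entire $\pi$-component (absorbing both the adversarially chosen sign flip $(-1)^{s^X_{i,j}}$ and the $s^Z_{i,j}\pi$ correction), while the hidden basis $\theta_{b,i,j}$ masks the remaining $\tfrac{\pi}{4}$-resolution of $\phi_{i,j}$. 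Making this precise requires showing that the reductions remain sound when $c$ can influence later corrections only through already-blinded angles, so that no information about $c$ leaks back through the dependency structure; verifying this commutativity of the blinding with the adaptive corrections is the technical heart of the argument.
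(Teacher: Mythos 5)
Your proposal follows essentially the same route as the paper: a sequence of game hops for the single-qubit case that uses the semantic security of the LWE-based encryption to computationally decouple the basis part of the QFactory output from the server's view, followed by a per-qubit hybrid (the paper phrases it as an induction on the number of qubits) for general graphs. Two points need correction or completion.

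First, one mechanism you cite is misattributed: the fresh bit $r_{i,j}$ does \emph{not} absorb the sign flip $(-1)^{s^X_{i,j}}$, since $(-1)^{s^X}\phi$ differs from $\phi$ by $-2\phi$, which is not a multiple of $\pi$ in general. What hides $(-1)^{f_1}\phi_{i,j}$ is that the full additive pad $\theta_b + (t\oplus r)\pi$ is uniform over $\Z\frac{\pi}{4}$ and independent of everything else, so it one-time-pads \emph{any} angle added to it, including the sign-flipped $\phi$ and the $s^Z\pi$ correction; your conclusion is right but not for the reason you give. Relatedly, your single hop ``replace the ciphertext of $\theta_b$'' compresses what in 8-states QFactory are two separate 4-states instances: $L_3 = B_1$ is hidden directly by the semantic security of its own ciphertext, whereas $L_2 = B_1' \oplus [(B_2\oplus s_2)\cdot B_1]$ involves server-chosen values and becomes uniform only because $B_1'$ (protected by a second, independent ciphertext) acts as a one-time pad on it. The paper therefore needs two semantic-security hops plus this structural observation; $\theta_b$ is not simply the plaintext of one ciphertext.

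Second, you correctly identify the adaptive coupling through the corrections $s^X_{i,j}, s^Z_{i,j}$ as the technical heart and explicitly leave it open. The paper closes it by generalizing the statement to arbitrary (deterministic families of) correction functions $f_1, f_2$ appearing as $\delta_i = (-1)^{f_1(\cdot)}\phi_i + \theta_1\pi/4 + \theta_2\pi/2 + \theta_3\pi + r_i\pi + f_2(\cdot)\pi$, and proving the induction step for all such $f_1, f_2$; the uniform-masking argument then goes through verbatim no matter how the adversary influences the corrections, and the MBQC flow corrections are recovered as a special case. Adopting that generalization would turn your acknowledged obstacle into a routine step.
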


The proof of \cref{thm:game_based_security_UBQC_QF} which will be given in the remainder of this section follows two main ideas:
\begin{enumerate}
    \item Every angle used in the $\sf{UBQC}$ protocol has only eight possible values, and can, therefore, be described by three bits. In the protocol, the first bit is the one for which QFactory \emph{cannot} guarantee blindness. Fortunately, the additional one-time padding in $\sf{UBQC}$ allows analyzing the blindness of the protocol independently of the blindness of exactly this first bit. Therefore, it suffices to rely on the blindness of the last two bits which is conveniently guaranteed by QFactory and the hardness of LWE.
    \item To analyze the leakage about the last two bits during a QFactory run, it is sufficient to notice that the leakage is equal to a ciphertext under an LWE-based encryption scheme. The semantic security of this encryption scheme and the hardness assumption for LWE guarantee that this leakage is negligible and can be omitted.
\end{enumerate}

In more detail, the 8-states QFactory protocol which is used here consists of two combined runs of 4-states QFactory, each contributing with a single blind bit to the three-bit angles used in the $\sf{UBQC}$ protocol. Recall from \cref{thm:8_statescorrectness} and \cref{thm:security_qfac_2_1} the formulae for how these angles from the 4-states protocol are combined in the 8-states protocol. If $B_1$ is the hidden bit of the first 4-states QFactory instance and $B_1'$ the hidden bit of the second instance, then we obtain
\begin{align}
    L_1 = B_2' \oplus B_2 \oplus [B_1 \cdot (s_1 \oplus s_2)], \; L_2 = B_1' \oplus [(B_2 \oplus s_2) \cdot B_1], \; L_3 = B_1,
\end{align}
where $L = L_1L_2L_3 \in \bin^3$ is the description of the output state $\left| +_{L\frac{\pi}{4}} \right\rangle$, $s_1, s_2$ are computed by the server, and
\begin{align}
    B_2 = f(\sk, B_1, y, b), \qquad B_2' = f(\sk', B_1', y', b')
\end{align}
for some function $f$, QFactory secret keys $\sk, \sk'$, and server-chosen values $y, b, y', b'$.

The two 4-states QFactory instances now leak the ciphertext of $B_1$ and $B_1'$, respectively. Given the semantic security of the encryption, after a run of 8-states QFactory, $L_2$ and $L_3$ remain hidden, while the blindness of $L_1$ cannot be guaranteed by QFactory. This fact is going to be useful in the following proof.

\subsection{Single-Qubit \texorpdfstring{$\sf{QF}$-$\sf{UBQC}$}{QF-UBQC} \label{subsection:game_single_qubit_case}}

We first prove the security of combining QFactory with $\sf{UBQC}$ on a single qubit.

\begin{lemma}[Blindness in the single-qubit case] \label{lemma:single_qubit_case}
    The protocol resulting from combining the quantum-client $\sf{UBQC}$ protocol with (8-states) QFactory is a (computationally) adaptively blind implementation of $\sf{UBQC_{CC}}$ in the game-based model for MBQC computations on a single qubit.
\end{lemma}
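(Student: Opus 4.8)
The plan is to prove \cref{lemma:single_qubit_case} by a hybrid argument that transforms the real single-qubit experiment into one where the server's entire view is manifestly independent of the hidden coin $c$, so that no adversary can guess $c$ with probability better than $\tfrac12$. Recall that in the single-qubit protocol the client obtains an angle $\theta \in \Z\frac\pi4$ from the two combined $4$-states QFactory runs, samples a one-time pad $r \sample \bin$, and sends the single message $\delta = \theta + r\pi + \phi^{(c)}$; the server additionally sees the two QFactory transcripts $T = (T_1, T_2)$. Writing $\theta = L\frac\pi4$ with $L = L_1L_2L_3 \in \bin^3$ and $\Phi^{(c)} := 4\phi^{(c)}_1 + 2\phi^{(c)}_2 + \phi^{(c)}_3$, I would work in $\Z_8$ (units of $\frac\pi4$) and use the identity
\begin{equation}
  \delta \equiv 4(L_1 \oplus r) + 2L_2 + L_3 + \Phi^{(c)} \pmod 8,
\end{equation}
which holds because adding $4r$ to $4L_1 + 2L_2 + L_3$ cleanly flips the top bit. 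The whole argument then reduces to showing that $(T, \delta)$ is computationally independent of $c$.

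The three bits are hidden by two different mechanisms, which I would exploit in turn. The bit $L_1 = B_2' \oplus B_2 \oplus [B_1(s_1\oplus s_2)]$ is \emph{not} guaranteed hidden by QFactory, but it enters $\delta$ only through $L_1 \oplus r$; since $r$ is a fresh private pad used nowhere else in the server's view, $L_1 \oplus r$ is uniform and independent of everything else. Hence I can replace it by a fresh uniform bit $\ell_1$ in an \emph{exact} (information-theoretic) rewriting of the experiment, after which $\delta \equiv 4\ell_1 + 2L_2 + L_3 + \Phi^{(c)}$ and the quantity $B_2'$ no longer occurs anywhere in the view. The remaining two bits carry the actual secret: $L_3 = B_1$ and $L_2 = B_1' \oplus [(B_2\oplus s_2)B_1]$ each contain a QFactory basis bit ($B_1$, resp.\ $B_1'$) that is computationally hidden given the corresponding transcript by the blindness guarantee of QFactory (\cref{thm:security_qfac_2_1}, ultimately the semantic security of the LWE-based encryption). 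I would use this to replace first $B_1'$ and then $B_1$ by fresh uniform bits, thereby making $L_2$ and $L_3$ independently uniform.

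The main obstacle is a trapdoor tension in these last two reductions: computing $\delta$ requires the value bits $B_2 = f(\sk,\dots)$ and $B_2' = f(\sk',\dots)$, whose computation needs the QFactory trapdoors, yet a reduction breaking the hiding of $B_1$ (resp.\ $B_1'$) must \emph{not} hold the trapdoor $\sk$ (resp.\ $\sk'$), or it could simply decrypt. The resolution is a careful ordering of the hybrids so that each trapdoor is eliminated before it is needed by a reduction. After the information-theoretic masking of $L_1$, $B_2'$ (hence $\sk'$) has already disappeared, so the reduction that hides $B_1'$ only needs $\sk$, the \emph{other} instance's trapdoor, which the $B_1'$-hiding game permits it to hold. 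Once $B_1'$ is replaced by a fresh uniform bit, $L_2$ becomes uniform and can be sampled directly, so the term $(B_2\oplus s_2)B_1$, and with it $B_2$ and $\sk$, also disappears. Only then do I run the reduction that hides $B_1$: at this point computing $\delta$ needs neither $\sk$ nor $B_2$, so the reduction is a legitimate first-instance QFactory adversary. I expect this dependency bookkeeping, namely checking that each step's reduction is efficiently simulatable without the forbidden trapdoor, to be the delicate part.

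Finally, in the last hybrid $\ell_1, L_2, L_3$ are three independent uniform bits, so $4\ell_1 + 2L_2 + L_3$ is uniform on $\Z_8$ and $\delta \equiv (4\ell_1 + 2L_2 + L_3) + \Phi^{(c)}$ is uniform on $\Z_8$, independent of $T$ (the QFactory runs never use $\phi$) and of the shift $\Phi^{(c)}$. Thus the view $(T,\delta)$ is independent of $c$ and $\Pr[c'=c] = \tfrac12$ in this hybrid; summing the negligible distances of the two computational steps yields $\Pr[c'=c] \le \tfrac12 + \negl(\lambda)$ in the real experiment, which is exactly \cref{def:game_based_security_blindness_CUBQC} restricted to one qubit. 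I would phrase the two computational reductions in the message-hiding form $(T_i, B) \approx (T_i, U)$, which for a uniformly random bit $B$ is equivalent to QFactory's unpredictability guarantee for the basis bit.
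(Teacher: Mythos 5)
Your proposal is correct and follows essentially the same route as the paper's proof, which is a game-hopping argument in exactly your order: first use the one-time pad $r$ to eliminate $L_1$ (and with it $B_2'$), then invoke semantic security to strip the ciphertext of $B_1'$ so that $B_1'$ one-time pads $L_2$ (eliminating $B_2$), then strip the ciphertext of $B_1$, leaving $\delta$ uniform and the view independent of $c$. Your explicit discussion of the trapdoor-ordering tension is handled implicitly in the paper by the same sequencing of games, so the two arguments coincide in substance.
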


\begin{proof}
We start with the real protocol, describing the adaptive blindness of QFactory combined with single-qubit $\sf{UBQC}$.    In the following, we denote the set of possible angles by $M = \{ j\pi/4, j=0,\dots,7 \}$.
The encryption scheme that appears in Game~1 is the semantically secure public-key encryption scheme from \cite{regev2009lattices}. Note that the two key pairs are generated completely independently on the challenger's side.

\noindent \begin{minipage}{\textwidth}
    \null\hfill \textsc{Game 1}: \\[-3pt]
    \fbox{
    
    \pseudocode[codesize=\scriptsize, width=\textwidth, colsep=-0em, addtolength=0em]{
        \qquad\;\;\; \textbf{Adversary} \> \> \textbf{Challenger} \\[0.1\baselineskip][\hline]
        \> \> \\[-0.6\baselineskip]
        \pcln \hspace{-5pt}\text{Choose } \phi^{(1)}, \phi^{(2)} \in M \> \sendmessageright*[3cm]{\phi^{(1)}, \phi^{(2)}} \>  \hspace{-5pt}c \sample \bin \\[-5pt]
        \pcln \> \> \hspace{-5pt}B_{1}, B_{1}' \sample \bin \\[-5pt]
        \pcln \> \sendmessageleft*[3cm]{\begin{matrix} \pk, \pk', \\ \operatorname{Enc}^{\pk}(B_{1}), \operatorname{Enc}^{\pk'}(B_{1}') \end{matrix}} \> \hspace{-5pt}\text{Generate key pairs } (\sk, \pk), (\sk', \pk') \\[-8pt]
        \pcln \> \sendmessageright*[3cm]{y, b, y', b', s_1, s_2} \> \hspace{-5pt}B_2 = f(\sk, B_1, y, b),\, B_2' = f(\sk', B_1', y', b') \\[-8pt]
        \pcln \> \> \hspace{-5pt}L_1 = B_2' \oplus B_2 \oplus [B_1 \cdot (s_1 \oplus s_2)] \\
        \pcln \> \> \hspace{-5pt}L_2 = B_1' \oplus [(B_2 \oplus s_2) \cdot B_1] \\
        \pcln \> \> \hspace{-5pt}L_3 = B_1 \\
        \pcln \> \> \hspace{-5pt}r \sample \bin \\
        \pcln \> \sendmessageleft*[3cm]{\delta} \> \hspace{-5pt}\delta = \phi^{(c)} + L_{3} \pi /4 + L_{2} \pi /2 + L_{1} \pi + r \pi \\[-8pt]
        \pcln \> \sendmessageright*[3cm]{s} \> \\[-8pt]
        \qquad\;\;\; \hspace{-5pt}\text{Compute guess } \\[-10pt]
        \pcln \hspace{-5pt}c' \in \bin \> \sendmessageright*[3cm]{c'} \> \hspace{-5pt}\text{Check } c' = c ?
    }
    
}
    \vspace{0.2cm}
\end{minipage}

In the following, instead of repeating the redundant parts of subsequent games, we only present incremental modifications to Game~1. Every not explicitly written line is assumed to be identical to the previous game.

Clearly, since $s$ is never used by the challenger, we can remove it from the protocol without distorting the success probability of the adversary. Next, we remove $L_1$ from the protocol and from the calculation of $\delta$. $L_1$ is only used in the calculation of $\delta$, which can be rewritten as
\begin{align}
    \delta = \phi^{(c)} + L_{3} \pi /4 + L_{2} \pi /2 + (L_{1} + r) \pi.
\end{align}
Since $r$ is a uniform binary random variable with unique use in this line, $(L_1 + r)$ is still uniform over $\bin$. Therefore, removing $L_1$ leaves the distribution of the protocol outcome unchanged.

\noindent \begin{minipage}{\textwidth}
    \null\hfill \textsc{Game 2}: \\[-3pt]
    \fbox{\pseudocode[codesize=\scriptsize, lnstart=3, width=\textwidth]{
        \;\;\vdots \\[-8pt]
        \pcln \> \sendmessageright*[4cm]{y, b, y', b', s_1, s_2} \> B_2 = f(\sk, B_1, y, b),\, \tikzmark{start0} B_2' = f(\sk', B_1', y', b') \tikzmark{stop0} \\[-8pt]
        \tikzmark{start1a} \pcln \tikzmark{stop1a}  \> \> \tikzmark{start1b}  L_1 = B_2' \oplus B_2 \oplus [B_1 \cdot (s_1 \oplus s_2)]  \tikzmark{stop1b} \\[-8pt]
        \;\;\vdots \\[-8pt]
        \setcounter{pclinenumber}{8}
        \pcln \> \sendmessageleft*[4cm]{\delta} \> \delta = \phi^{(c)} + L_{3} \pi /4 + L_{2} \pi /2 \tikzmark{start2} + L_{1} \pi \tikzmark{stop2} + r \pi \\[-8pt]
        \tikzmark{start3a}  \pcln \tikzmark{stop3a} \> \sendmessageright*[4cm]{s} 
        \> \\[-12pt]
        \;\;\vdots
    }
    }

    \begin{tikzpicture}[remember picture, overlay]
        \draw[red,thick] ([yshift=0.5ex]pic cs:start0) -- ([yshift=0.5ex]pic cs:stop0);
        \draw[red,thick] ([yshift=0.5ex]pic cs:start1a) -- ([yshift=0.5ex]pic cs:stop1a);
        \draw[red,thick] ([yshift=0.5ex]pic cs:start1b) -- ([yshift=0.5ex]pic cs:stop1b);
        \draw[red,thick] ([yshift=0.5ex]pic cs:start2) -- ([yshift=0.5ex]pic cs:stop2);
        \draw[red,thick] ([yshift=0.5ex]pic cs:start3a) -- ([yshift=0.5ex]pic cs:stop3a);
        \draw[red,thick] (4.3,0.9) -- (4.7,0.9);
    \end{tikzpicture}
    \vspace{0.2cm}
\end{minipage}

The next step introduces a (negligible) distortion to the success probability of the adversary.
By the semantic security of the employed encryption scheme, no quantum-polynomial-time adversary can notice if the plaintext is replaced by pure randomness except with negligible probability, even if information about the original plaintext is leaked on the side. Therefore, replacing $B_1'$ in the encryption by independent randomness cannot lead to a significant change of the adversary's success probability.
Further, since ciphertexts of independent randomness can be equally generated by the adversary herself (being in possession of the public key), we can remove the encryption of $B_1'$ from the protocol altogether.

\noindent \begin{minipage}{\textwidth}
	\null\hfill \textsc{Game 3}: \\[-3pt]
	\fbox{\pseudocode[codesize=\scriptsize, width=\textwidth]{
		\;\;\vdots \\[-8pt]
		\setcounter{pclinenumber}{2}
		\pcln \> \sendmessageleft*[4cm]{\pk, \pk', \operatorname{Enc}^{\pk}(B_{1}), \operatorname{Enc}^{\pk'}(B_{1}')}
		 \> \text{Generate key pairs } (\sk, \pk), \tikzmark{start6} (\sk', \pk') \tikzmark{stop6} \\[-12pt]
		\;\;\vdots
	}}
	\begin{tikzpicture}[remember picture, overlay]
		\draw[red,thick] ([yshift=0.5ex]pic cs:start4) -- ([yshift=0.5ex]pic cs:stop4);
		\draw[red,thick] ([yshift=0.5ex]pic cs:start5) -- ([yshift=0.5ex]pic cs:stop5);
		\draw[red,thick] (5.2,.9) -- (6.6,.9);
		\draw[red,thick] (3.3,.9) -- (3.7,.9);
		\draw[red,thick] ([yshift=0.5ex]pic cs:start6) -- ([yshift=0.5ex]pic cs:stop6);
	\end{tikzpicture}
	\vspace{0.2cm}
\end{minipage}

Next, note that $B_1'$ perfectly one-time pads the value of $L_2$. This breaks the dependency of $L_2$ on $B_2$, $s_2$ and $B_1$. It does not change the distribution of $L_2$, if $L_2$ is instead directly sampled uniformly from $\bin$.
Since $B_2$ is unused, we remove it in the following game, and $y, b, y', b', s_1, s_2$ can be ignored.

\noindent \begin{minipage}{\textwidth}
	\null\hfill \textsc{Game 4}: \\[-3pt]
	\fbox{\pseudocode[codesize=\scriptsize, width=\textwidth]{
		\;\;\vdots \\[-8pt]
		\setcounter{pclinenumber}{3} 
		\tikzmark{start7a} \pcln \tikzmark{stop7a} \> \sendmessageright*[4cm]{y, b, y', b', s_1, s_2}
		\> \tikzmark{start7c} B_2 = f(\sk, B_1, y, b) \tikzmark{stop7c} \\[-12pt]
		\;\;\vdots \\[-3pt]
		\setcounter{pclinenumber}{5}
        \pcln \> \> \tikzmark{start8} L_2 = B_1' \oplus [(B_2 \oplus s_2) \cdot B_1] \tikzmark{stop8} \quad \textcolor{red}{L_2 \sample \bin} \\[-8pt]
        \;\;\vdots
	}}
	\begin{tikzpicture}[remember picture, overlay]
		\draw[red,thick] ([yshift=0.5ex]pic cs:start7a) -- ([yshift=0.5ex]pic cs:stop7a);
		\draw[red,thick] ([yshift=0.5ex]pic cs:start7b) -- ([yshift=0.5ex]pic cs:stop7b);
		\draw[red,thick] ([yshift=0.5ex]pic cs:start7c) -- ([yshift=0.5ex]pic cs:stop7c);
		  \draw[red,thick] (3.2,1.7) -- (5.6,1.7);
		\draw[red,thick] ([yshift=0.5ex]pic cs:start8) -- ([yshift=0.5ex]pic cs:stop8);
		
	\end{tikzpicture}
	\vspace{0.2cm}
\end{minipage}
By the same argument as for the transition from Game~2 to Game~3, we remove the encryption of $B_1$ from the following game. This introduces at most a negligible change in the success probability of the adversary.

Finally, since the encryption scheme is not in use anymore, we can also remove the key generation and the message containing the public key without affecting the adversary's success probability.

\noindent \begin{minipage}{\textwidth}
	\null\hfill \textsc{Game 5}: \\[-3pt]
	\fbox{\pseudocode[codesize=\scriptsize, width=\textwidth]{
		\;\;\vdots \\[-8pt]
		\setcounter{pclinenumber}{2}
		\tikzmark{start9a} \pcln \tikzmark{stop9a} \> \sendmessageleft*[4cm]{\pk, \operatorname{Enc}^{\pk}(B_{1})}
		 \> \tikzmark{start9c} \text{Generate key pair } (\sk, \pk) \tikzmark{stop9c} \\[-12pt]
		\;\;\vdots
	}}
	\begin{tikzpicture}[remember picture, overlay]
		\draw[red,thick] ([yshift=0.5ex]pic cs:start9a) -- ([yshift=0.5ex]pic cs:stop9a);
		  \draw[red,thick] (4.5,.9) -- (6.3,.9);
		\draw[red,thick] ([yshift=0.5ex]pic cs:start9b) -- ([yshift=0.5ex]pic cs:stop9b);
		\draw[red,thick] ([yshift=0.5ex]pic cs:start9c) -- ([yshift=0.5ex]pic cs:stop9c);
	\end{tikzpicture}
	\vspace{0.2cm}
\end{minipage}

We now see that $\delta$ is a uniformly random number, $L_2, L_3$, and $r$ being i.i.d. uniform bits. Therefore, the calculation and the message containing $\delta$ can be removed from the protocol without affecting the adversary.

\noindent \begin{minipage}{\textwidth}
	\null\hfill \textsc{Game 6}: \\[-3pt]
	\fbox{\pseudocode[codesize=\scriptsize, width=\textwidth]{
		\;\;\vdots \\
		\setcounter{pclinenumber}{1}
		\tikzmark{start10a} \pcln \tikzmark{stop10a} \> \> \tikzmark{start10b} B_{1}, B_{1}' \sample \bin \tikzmark{stop10b} \\[-8pt]
		\;\;\vdots \\
		\setcounter{pclinenumber}{5}
        \tikzmark{start11a} \pcln \tikzmark{stop11a} \> \> \tikzmark{start11b} L_2 \sample \bin \tikzmark{stop11b} \\
        \tikzmark{start12a} \pcln \tikzmark{stop12a} \> \> \tikzmark{start12b} L_3 = B_1 \tikzmark{stop12b} \\
		\tikzmark{start13a} \pcln \tikzmark{stop13a} \> \> \tikzmark{start13b} r \sample \bin \tikzmark{stop13b} \\[-5pt]
		\tikzmark{start14a} \pcln \tikzmark{stop14a} \> \sendmessageleft*[4cm]{\delta}
		 \> \tikzmark{start14c} \delta = \phi^{(c)} + L_{3} \pi /4 + L_{2} \pi /2 + r \pi \tikzmark{stop14c} \\[-12pt]
		\;\;\vdots
	}}
	\begin{tikzpicture}[remember picture, overlay]
		\draw[red,thick] ([yshift=0.5ex]pic cs:start10a) -- ([yshift=0.5ex]pic cs:stop10a);
		\draw[red,thick] ([yshift=0.5ex]pic cs:start10b) -- ([yshift=0.5ex]pic cs:stop10b);
		\draw[red,thick] ([yshift=0.5ex]pic cs:start11a) -- ([yshift=0.5ex]pic cs:stop11a);
		\draw[red,thick] ([yshift=0.5ex]pic cs:start11b) -- ([yshift=0.5ex]pic cs:stop11b);
		\draw[red,thick] ([yshift=0.5ex]pic cs:start12a) -- ([yshift=0.5ex]pic cs:stop12a);
		\draw[red,thick] ([yshift=0.5ex]pic cs:start12b) -- ([yshift=0.5ex]pic cs:stop12b);
		\draw[red,thick] ([yshift=0.5ex]pic cs:start13a) -- ([yshift=0.5ex]pic cs:stop13a);
		\draw[red,thick] ([yshift=0.5ex]pic cs:start13b) -- ([yshift=0.5ex]pic cs:stop13b);
		\draw[red,thick] ([yshift=0.5ex]pic cs:start14a) -- ([yshift=0.5ex]pic cs:stop14a);
		\draw[red,thick] ([yshift=0.5ex]pic cs:start14b) -- ([yshift=0.5ex]pic cs:stop14b);
		\draw[red,thick] ([yshift=0.5ex]pic cs:start14c) -- ([yshift=0.5ex]pic cs:stop14c);
		\draw[red,thick] (4.9,.95) -- (5.2,.95);
	\end{tikzpicture}
	\vspace{0.2cm}
\end{minipage}

In Game~6, the inputs of the adversary are ignored by the challenger. Therefore, the computation angles $\phi^{(1)}$, $\phi^{(2)}$ can equally be removed from the protocol, leaving us with the final Game~7.

\noindent \begin{minipage}{\textwidth}
	\null\hfill \textsc{Game 7}: \\[-3pt]
	\fbox{\pseudocode[codesize=\scriptsize, width=\textwidth]{
		\qquad\;\;\; \textbf{Adversary} \> \> \textbf{Challenger} \\[0.1\baselineskip][\hline]
		\> \> \\[-0.6\baselineskip]
		\setcounter{pclinenumber}{0} 
		\pcln \tikzmark{start15} \text{Choose } \phi^{(1)}, \phi^{(2)} \in M \tikzmark{stop15} \> \sendmessageright*[4cm]{\phi^{(1)}, \phi^{(2)} }
		\>  c \sample \bin \\[-5pt]
		\setcounter{pclinenumber}{10}
		\pcln \text{Compute guess } c' \in \bin \> \sendmessageright*[4cm]{c'} \> \text{Check } c' = c ?
	}}
	\begin{tikzpicture}[remember picture, overlay]
		\draw[red,thick] ([yshift=0.5ex]pic cs:start15) -- ([yshift=0.5ex]pic cs:stop15);
		\draw[red,thick] ([yshift=0.5ex]pic cs:start16) -- ([yshift=0.5ex]pic cs:stop16);
		\draw[red,thick] (7.2,1.25) -- (8.2,1.25);
	\end{tikzpicture}
	\vspace{0.2cm}
\end{minipage}

Game~7 exactly describes the adversary's uninformed guess of the outcome of an independent bit flip. Therefore, by a simple information-theoretic argument, any strategy for the adversary will lead to a success probability of exactly $1/2$.

We summarize:

\begin{align*}
    &\text{Succ-Pr}_\text{Game1} = \text{Succ-Pr}_\text{Game2}, \quad \left| \text{Succ-Pr}_\text{Game2} - \text{Succ-Pr}_\text{Game3} \right| \leq \operatorname{negl}(\lambda), \\
    &\text{Succ-Pr}_\text{Game3} = \text{Succ-Pr}_\text{Game4}, \quad \left| \text{Succ-Pr}_\text{Game4} - \text{Succ-Pr}_\text{Game5} \right| \leq \operatorname{negl}(\lambda), \\
    &\text{Succ-Pr}_\text{Game5} = \text{Succ-Pr}_\text{Game6} = \text{Succ-Pr}_\text{Game7} = \frac{1}{2},
\end{align*}
and therefore we have 
    $\left| \text{Succ-Pr}_\text{Game1} - \frac{1}{2} \right| \leq \operatorname{negl}(\lambda)$
concluding the proof.
\end{proof}

subsection{General \texorpdfstring{$\sf{QF}$-$\sf{UBQC}$}{QF-UBQC}  \label{subsec:game_based_general}}

We extend the security proof from \cref{subsection:game_single_qubit_case} to $\sf{UBQC}$ on polynomially-sized graphs, i.e. MBQC computations on a polynomial number of qubits. The proof works by induction over the number $n$ of qubits in the graph. \cref{lemma:single_qubit_case} with $n=1$ serves as start of the induction. We continue with proving the induction step, assuming the security of $\sf{QF}$-$\sf{UBQC}$ on graphs of size $n$ and showing its security for any graph of size $n+1$. The induction step works analogously to the proof of \cref{lemma:single_qubit_case}. In this way, the security of $\sf{QF}$-$\sf{UBQC}$ on $n$ qubits is reduced to the security of $\sf{QF}$-$\sf{UBQC}$ on $n-1$ qubits, which can be reduced to the security of $\sf{QF}$-$\sf{UBQC}$ on even one qubit less. This chain continues down to the single-qubit case whose security was already established in \cref{lemma:single_qubit_case}. Every step in this chain adds at most a negligible probability to the adversary's advantage. Therefore, also any such chain of polynomial length adds no more than a negligible probability to the adversary's advantage in the single-qubit case, thereby showing the security of the protocol on $n$ qubits. We now provide the full details of the induction step. 

\begin{proof}[Details of the proof of \cref{thm:game_based_security_UBQC_QF}]
The proof works by induction over the number $n$ of qubits in the graph. \cref{lemma:single_qubit_case} with $n=1$ serves as start of the induction. We continue with proving the induction step, assuming the security of $\sf{QF}$-$\sf{UBQC}$ on graphs of size $n$ and showing its security for any graph of size $n+1$.

We first state some useful observations for the proof:
\begin{enumerate}
    \item The existence of a \emph{flow} on the MBQC graph induces a total order of all qubits in the graph, the order in which the qubits are measured. We subsequently assume that in the protocol the qubits are processed in exactly this order.
    \item Given this order on the qubits, the dependence of the computation angles $\delta_i$ on outcomes of measurement of other qubits takes a specific form, they solely depend on previous (corrected) measurement outcomes $\{ \bar{s}_j, j<i \}$, i.e. outcomes of measurements of qubits smaller in the order induced by the flow. Since the exact form of this dependence does not matter for the following proof, we denote the update of the angles in the following general way:
    \begin{align*}
        \delta_i = &(-1)^{f_1(s_1, r_1, \dots, s_{i-1}, r_{i-1})} \phi_i + \theta_1 \pi /4 + \theta_2 \pi /2 + \theta_3 \pi \\
        &+ r_i\pi + f_2(s_1, r_1, \dots, s_{i-1}, r_{i-1})\pi,
    \end{align*}
    with (deterministic families of) functions $f_1$ and $f_2$.
    \item Given the previous observation, one can generalize the statement of the theorem to a family of protocols for any functions $f_1$ and $f_2$. For the remainder of the proof, we do hence not assume anything about these two functions, but simply take them as given. The actual statement of the theorem then follows as a special case, imposing that $f_1$ and $f_2$ describe the MBQC correction terms.
\end{enumerate}

Given these observations, the rest of the proof works analogously to the proof of \cref{lemma:single_qubit_case}, removing one-by-one the ciphertexts of the two basis bits $B_1, B_1'$ of the last QFactory instance, before removing the last measurement angle $\delta$ and reducing the protocol on $n+1$ qubits to the protocol on one qubit less.
\end{proof}

By the inductive nature of this proof, every qubit -- and hence every QFactory instance -- adds some negligible value to the success probability of the malicious adversary. This explains that the security only holds for polynomially-sized graphs. For an MBQC graph on a superpolynomial number of qubits, there are no guarantees anymore that these small errors don't add up to something constant. Having in mind that QFactory is trivially broken by exponential adversaries, it is clear that this is the best we can expect.


\section*{Acknowledgements.}
The authors thank Céline Chevalier, Omar Fawzi, Daniel Jost, and Luka Music for useful discussions. LC also thanks M.T. This work has been supported in part by grant FA9550-17-1-0055, by the European Union’s H2020 Programme under grant agreement number ERC-669891, and by the French ANR Project ANR-18-CE39-0015 (CryptiQ). EK acknowledges support from the EPSRC Verification of Quantum Technology grant (EP/N003829/1), the EPSRC Hub in Quantum Computing and Simulation (EP/T001062/1), and the UK Quantum Technology Hub: NQIT grant (EP/M013243/1). LC and DL gratefully acknowledge support from the French ANR project ANR-18-CE47-0010 (QUDATA). LC, EK, and DL acknowledge funding from the EU Flagship Quantum Internet Alliance (QIA) project. AM gratefully acknowledges funding from the AFOSR MURI project ``Scalable Certification of Quantum Computing Devices and Networks''. This work was partly done while AM was at University of Edinburgh, UK where it was supported by EPSRC Verification of Quantum Technology grant (EP/N003829/1).

\bibliographystyle{alpha}
\bibliography{ref}


\appendix

\section{Game-Based Security and Constructive Cryptography \label{app:game_and_ac}}

The main aim of our work is to prove possibility and impossibility results in different security models. We will in this paper focus mostly on two different notions: the game-based security model and the Constructive Cryptography framework.

The definition of game-based security is pretty straightforward: we define a \emph{game} between a challenger and an (arbitrary) adversary: a protocol is secure if no adversary can win this game with ``good'' probability. The problem of this approach is that one game describes only one possible attack, and it is hard to list all the possible attacks against a protocol. Therefore, a protocol that proves to be secure in a specific game might not be secure in an arbitrary environment (composed with other protocols in parallel or in series).

Composable security on the other hand takes a different approach to phrasing the guarantees achieved by a protocol. Loosely speaking, a protocol is composable when it is shown to be secure in an arbitrarily adversarial environment\footnote{Of course, the environment may still be limited to ``efficient'' computations.}, and where secure means that it achieves a well-defined ideal (secure by definition) resource. This means the protocol retains the desired functionality even if it is composed of other instances of its own or a completely different protocol. There are several approaches which provide a general framework to study this cryptographic definitions~\cite{canetti2001universally,backes2003composable,maurer2011abstract}, but we will focus in this paper on Constructive Cryptography (CC) (also known under the term Abstract Cryptography (AC)). In this section, we provide relevant terminologies (mostly adapted to our protocol) required to analyse composable security in this framework, introduced by Maurer and Renner in~\cite{maurer2011abstract}. For more details, we refer readers to some of the previous works~\cite{maurer2011constructive,maurer2011abstract,dunjko2014composable,dunjko2016blind}.

The basic elements of AC are systems: objects with well-distinguished and labeled interfaces. The system uses interfaces to exchange information with the outside world and/or other systems. Systems are grouped in distinct classes: resources, converters, filters, and distinguisher.

\emph{Resource systems} (or $\cI$-resources) are devices with several interfaces in $\cI$, in general, each of them accessible by a single agent: each interface represents the actions that are accessible by that player. Resources are the central elements of CC, and they are used at the abstract level to specify the relevant properties of a protocol. Note that in this work we only consider resources with two interfaces $I = \{A,B\}$ because our protocol consists of two parties (one client $A$ and one server $B$).

A \emph{converter system}, on the other hand, is always limited to two interfaces, an inside and an outside one. Converters are usually attached to the interfaces of a resource (or a group of resource as already explained), and the name reflects the fact that a converter \emph{converts} the functionality of the resource's interface it is attached to into a new functionality on the outside. A resource having a converter attached to one of its interfaces continues to qualify as a resource, possibly equipped with new functionalities.  Usually, if $\alpha \in \Sigma$ is a converter and $\cR$ a resource, we write $\alpha^i\cR$ to denote new resource where the inner interface of $\alpha$ is connected to the interface $i$ of $\cR$, the outer interface of $\alpha$ being the new interface $i$. But because in our case we have two interfaces $\cI = \{A, B\}$, we will put the converter on the left of the resource when it is plugged on the interface $A$, and we will put the converter on the right of the resource when it is plugged on the interface $B$: $\alpha^A\cR$ is denoted $\alpha\cR$ while $\alpha^B\cR$ is $\cR\alpha$.
 
A \emph{filter} (usually denoted $\filter$) is a special converter used to force a honest behaviour on a given interface of a resource. They are usually used to prove the correctness of a protocol, as they describe what can be done in an honest run. They are removed when we want to provide full power to a cheating adversary or to a simulator. Usually, in order to keep the filter simple, the functionality accepts as a first message a bit $c$ which says if the party wants to behave honestly ($c=0$) or maliciously ($c=1$). That way, the filter $\filter^{c=0}$ (or simply $\filter$) just sends $c=0$ to the resource, and then forwards all the messages between it's inner and outer interface.

A \emph{distinguisher} helps to quantify the distance between resources. Given an $n$-interface resource $\cR$, a distinguisher $D \in \cD$ outputs a bit determined after interacting with the $n$ interfaces of $\cR$ (we denote by $D\cR$ this random variable). Then the distance (actually it is a pseudo-metric) between two resources $\cR$ and $\cS$ is defined by the best advantage $\eps$ a distinguisher $D \in \cD$ can achieve when trying to determine which resource it is interacting with. This leads to the following definition:
\begin{align}
  \cR \approx_\eps \cS :\Longleftrightarrow \Delta^{\cD}(\cR, \cS) \leq \eps
\end{align}
with $\Delta^{\cD}(\cR, \cS) = \sup_{D \in \cD} \Delta(D\cR, D\cS)$, where $\Delta(D\cR, D\cS)$ is the statistical distance between the distributions $D\cR$ and $D\cS$. Note that $\Delta^\cD$ defines a pseudo-metric: $\forall \eps > 0, (\cR, \cS, \cT) \in \Phi^3$,
\begin{align}
  \Delta^\cD(\cR, \cR) &= 0\\
  \Delta^\cD(\cR,\cS) &= \Delta^\cD(\cR,\cS)\\
  \Delta^\cD(\cR,\cS) &\leq \Delta^\cD(\cR,\cT) + \Delta^\cD(\cT,\cR)  
\end{align}

In the general Constructive Cryptography framework, we do not need to specify how the different systems are constructed, we just need to have some general properties on them: we basically require $\braket{\Phi, \Sigma}$ to be a \emph{cryptographic algebra}, and the pseudo-metric must be \emph{compatible} with $\braket{\Phi, \Sigma}$ (see for example  \cite[Sec. 4]{maurer2011constructive} for precise definitions). And as soon as $\cD$ can ``absorb'' the converters and the resources, then $\cD$ is compatible with $\braket{\Phi, \Sigma}$ \cite[Lem. 1]{maurer2011constructive}. That way, it is possible to derive different notions of security by just changing the sets $\Phi$ (resources), $\Sigma$ (converters) and $\cD$ (distinguisher) and making sure they respect these general properties. In the paper, we will focus mostly on two definitions (respecting the above properties): when all the systems (resources, converters, and distinguishers) are \emph{feasible} (in our case we mean they run in polynomial time on a quantum machine), denoted as $(\Phi^f, \Sigma^f, \cD^f)$ we will say that the security is \emph{computational}. If the systems are unbounded $(\Phi^u, \Sigma^u, \cD^u)$ we will refer to \emph{information-theoretic} security.

Note that the impossibility results presented in this paper apply in both computational and information-theoretic security, and because we only focus on these two settings, $\braket{\Phi, \Sigma}$ will always be a cryptographic algebra, and the pseudo-metric $\Delta^\cD$ is always compatible with it. When a property is valid only for one set of distinguishers, we will write this set above the $\approx$ sign, like for example $\cR \approx^{\cD^u}_\eps \cS$.

\noindent A main theorem is that any such construction achieve (general) composability:
\begin{lemma}[{\cite[Thm.~1]{maurer2011abstract}\cite[Thm.~3]{maurer2011constructive}}]
\sloppy The construction $\constructs{}$ is \emph{(generally) composable}, i.e. for all $(\eps, \eps') \in \R^+$, ${(\cR,\cS,\cT) \in \Phi^3}$, $\protocol \in \Sigma^2$:
  \begin{itemize}
  \item we have sequential composability: $(\cR \constructs{\protocol}{\eps} \cS \land \cS \constructs{\protocol'}{\eps'} \cT) \Rightarrow \cR \constructs{\protocol \circ \protocol'}{\eps + \eps'} \cT$,
  \item we have parallel composability: $(\cR \constructs{\protocol}{\eps} \cS \land \cR' \constructs{\protocol'}{\eps'} \cS') \Rightarrow \cR \| \cR' \constructs{\protocol | \protocol'}{\eps + \eps'} \cS \| \cS'$
  \item $\cR \constructs{\id}{0} \cR$
  \end{itemize}
  where $|$ (resp. $\circ$) represents the parallel (resp. serial) composition of protocols, $\|$ is the merging of resources, and $\id$ is the identity converter.
\end{lemma}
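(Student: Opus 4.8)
The plan is to reduce everything to two structural facts about the setting: (i) the triangle inequality for the distinguishing pseudo-metric $\Delta^\cD$, which is already recorded above, and (ii) the \emph{non-expansiveness} of the composition operations, namely that attaching a converter to an interface or merging a resource in parallel can only decrease (never increase) distinguishing advantage. The latter is exactly the statement that $\cD$ can ``absorb'' converters and resources, i.e. that $\Delta^\cD$ is compatible with the cryptographic algebra $\braket{\Phi,\Sigma}$; concretely, $\cR\approx_\eps\cS$ implies $\alpha\cR\approx_\eps\alpha\cS$, $\cR\alpha\approx_\eps\cS\alpha$, and $\cR\|\cT\approx_\eps\cS\|\cT$ for every $\alpha\in\Sigma$ and $\cT\in\Phi$. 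Once these two facts and the interchange law of the algebra are in hand, each clause of the lemma follows by a short chain of approximations.

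For sequential composability, recall that $\cR\constructs{\protocol}{\eps}\cS$ unfolds (\cref{def:realize}) into an availability statement and the existence of a simulator $\sigma$ with $\pi_A\cR\approx_\eps\cS\sigma$; likewise $\cS\constructs{\protocol'}{\eps'}\cT$ gives $\sigma'$ with $\pi_A'\cS\approx_{\eps'}\cT\sigma'$. I would take the composed protocol $\protocol\circ\protocol'=(\pi_A'\pi_A,\pi_B\pi_B')$ and the composed simulator $\sigma'\sigma$ on the server interface, and argue the security condition via
\begin{align}
  \pi_A'\pi_A\cR \;&\approx_\eps\; \pi_A'\cS\sigma \;\approx_{\eps'}\; \cT\sigma'\sigma,
\end{align}
where the first step attaches $\pi_A'$ to both sides of $\pi_A\cR\approx_\eps\cS\sigma$ and the second attaches $\sigma$ on the right of $\pi_A'\cS\approx_{\eps'}\cT\sigma'$ (both by non-expansiveness); the triangle inequality then yields the claimed error $\eps+\eps'$. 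The availability condition is handled by the same two tools applied to $\pi_A'(\pi_A\cR\pi_B)\pi_B'$, using that the honest converters $\pi_A',\pi_B'$ of the second protocol, run on top of the filtered $\cS$, reproduce the filtered behaviour of $\cT$.

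For parallel composability I would run the analogous argument side by side: with $\sigma,\sigma'$ the two simulators, take $\sigma\|\sigma'$ as the simulator for the merged adversarial interface and chain
\begin{align}
  (\pi_A\|\pi_A')(\cR\|\cR') = (\pi_A\cR)\|(\pi_A'\cR') \;&\approx_\eps\; (\cS\sigma)\|(\pi_A'\cR') \\
  &\approx_{\eps'}\; (\cS\|\cS')(\sigma\|\sigma'),
\end{align}
each approximation merging an unchanged component in parallel and invoking non-expansiveness, with the triangle inequality collecting the error $\eps+\eps'$; availability is again identical with resources merged. The reflexivity clause $\cR\constructs{\id}{0}\cR$ is immediate, taking the identity converter both as protocol and as simulator, since $\Delta^\cD(\cR,\cR)=0$.

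The part that needs genuine care -- and the only place where the concrete setting really enters -- is establishing the non-expansiveness property itself, i.e. that $\cD$ absorbs converters and resources. In the information-theoretic setting this is unconditional. In the computational setting, however, absorbing a converter $\alpha$ (or a resource $\cT$) into the distinguisher requires that $\alpha$ (resp. $\cT$) be efficiently simulable, so that a distinguisher against the composed systems yields an equally good \emph{polynomial-time} distinguisher against the originals. I would therefore verify that all protocols, simulators, and merged resources appearing in the chains remain in the admissible class $\Sigma$ (here quantum polynomial time), so that the reduction preserves the distinguisher bound. This closure is precisely what lets the abstract lemma of Maurer and Renner be invoked, and it is the crux of the argument rather than the approximation chains, which are otherwise routine.
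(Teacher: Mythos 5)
The paper itself does not prove this lemma: it imports it by citation from \cite{maurer2011abstract,maurer2011constructive}, so there is no in-paper proof to compare against. Your argument is essentially the proof given in those cited works --- the triangle inequality for $\Delta^{\cD}$ combined with compatibility (distinguisher absorption) of converters and parallel contexts, the composed simulators $\sigma'\sigma$ and $\sigma\|\sigma'$, and the interchange law --- and the closure caveat you raise for the computational setting is exactly what the paper's restriction to $(\Phi^f,\Sigma^f,\cD^f)$ together with \cite[Lem.~1]{maurer2011constructive} is meant to guarantee, so the proposal is correct in substance. The one point to tighten is the availability step of sequential composition: under the paper's simplified \cref{def:realize}, where only the ideal resource carries a filter, your chain produces $\pi_A'\,\cS\filter\,\pi_B'$, whereas the availability statement of the second construction concerns $\pi_A'\,\cS\,\pi_B'$ with \emph{unfiltered} $\cS$; this mismatch is resolved in the cited works by defining constructions between \emph{filtered} resources (availability proven with the filter attached, security with it removed), which is how your remark about ``running on top of the filtered $\cS$'' must be formalized for the bookkeeping to close.
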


\section{QFactory: Remote State Preparation, Revisited\label{app:qfactory}}

The construction of the QFactory protocol relies on a family of functions with certain cryptographic properties, specifically, a 2-regular homomorphic-hardcore family of functions. For the formal definition of these properties, see \cite{cojocaru2019qfactory}.

We first begin by recalling the formal description of the protocol in \cref{subsec:QF4} and then in \cref{subsec:correctness_qfactory} and \cref{subsec:security_qfactory} we present the results concerning the correctness and security of QFactory.

\subsection{4-states and 8-states QFactory protocol}\label{subsec:QF4}

\begin{breakablealgorithm}
\caption{4-states QFactory: classical delegation of the BB84 states (\cite{cojocaru2019qfactory})} \label{protocol:qfactoryReal}
\vspace{\baselineskip}
\noindent \textbf{Requirements:} 
Public: A 2-regular homomorphic-hardcore family $\mathcal{F}$ with respect to $\{h_k\}$ and $d_0$. For simplicity, we will represent the sets $\mathcal{D}'$ (respectively $\mathcal{R}$) using $n$ (respectively $m$) bits strings: $\cD' = \{0, 1\}^{n}$, $\mathcal{R} = \{0, 1\}^{m}$. \\

\noindent\textbf{Stage 1: Preimages superposition} 
\begin{enumerate}
    \item Client runs the algorithm $(k,t_k) \leftarrow \text{Gen}_{\mathcal{F}}(1^n)$.
    \item Client instructs Server to prepare one register at $\otimes^n H\ket{0}$ and second register initiated at $\ket{0}^{m}$.
    \item Server receives k from the client and applies $U_{f_k}$ using the first register as control and the second as target.
    \item Server measures the second register in the computational basis, obtains the outcome $y$. The combined state is given by ${(\ket{x}+\ket{x'}) \otimes \ket{y}}$ with $f_k(x)=f_k(x')=y$ and $y\in \Ima f_k$. 
\end{enumerate}

\noindent\textbf{Stage 2: Output preparation} 
\begin{enumerate}
    \item Server applies $U_{h_k}$ on the preimage register $\ket{x}+\ket{x'}$ as control and another qubit initiated at $\ket{0}$ as target. Then, measures all the qubits, but the target in the $\{\frac{1}{\sqrt{2}}(\Ket{0} \pm \Ket{1})\}$ basis, obtaining the outcome $b = (b_1, ..., b_{n})$. Now, the Server returns both $y$ and $b$ to the Client. 
    \item Client using the trapdoor $t_k$ computes the preimages of $y$:
    \begin{itemize}
    \item if $y$ does not have exactly two preimages $x,x'$ (the server is cheating with overwhelming probability), defines $B_1 = d_0(t_k)$, and chooses $B_2 \in \{0,1\}$ uniformly at random
    \item if $y$ has exactly two preimages $x,x'$, defines $B_1 = h_k(x) \xor h_k(x') = d_0(t_k)$, and $B_2$. 
    \end{itemize}
\end{enumerate}

\noindent \textbf{Output:} The quantum state that the Server has generated is (with overwhelming probability~\footnote{\label{note}As for the previous protocol, the probability comes from the probability of $\mathcal{F}$ being a 2-regular homomorphic-hardcore family of functions}) the BB84 state $\ket{\quout} = H^{B_1} X^{B_2} \ket{0}$  (see \cref{eq:expressions_exp1} and \cref{eq:expressions_exp2} for the exact value of $B_1$ and $B_2$). The output of the Server is a quantum state $\ket{\quout}$ and the output of the Client is given by $(B_1,B_2)$ ($2$ bits).
\vspace{\baselineskip}
\end{breakablealgorithm}

\vspace{\baselineskip}

\begin{breakablealgorithm}
\caption{8-states QFactory: classical delegation of the $\ket{+_{\theta}}$ states (\textup{\cite{cojocaru2019qfactory}})} \label{protocol:QFactory4to8}
\vspace{\baselineskip}
\noindent\textbf{Requirements:} Same as in Protocol~\ref{protocol:qfactoryReal}\\

\noindent\textbf{Input:} Client runs twice the algorithm ${Gen}_{\mathcal{F}}(1^n)$, obtaining $(k^1, t^1_k), (k^2, t^2_k)$. Client keeps $t_k^1, t_k^2$ private.\\ 

\noindent\textbf{Protocol Steps:}
\begin{enumerate}
\item Client runs 4-states QFactory Protocol~\ref{protocol:qfactoryReal} to obtain a state $\ket{\quin_1}$ and a "rotated" 4-states QFactory to obtain a state $\ket{\quin_2}$ (by rotated 4-states QFactory we mean a 4-states QFactory, but where the last set of measurements in the $\Ket{\pm}$ basis is replaced by measurements in the $\Ket{\pm_{ \frac{\pi}{2}} }$ basis). 
\item Client records measurement outcomes $(y^1,b^1)$, $(y^2, b^2)$ and computes and stores the corresponding  indices of the output states of the 2 runs of 4-states QFactory protocol: $(B_1, B_2)$ for $\ket{\quin_1}$ and $(B_1', B_2')$ for $\ket{\quin_2}$.  
\item Client instructs Server to apply the Merge Gadget in Fig. \ref{fig:gadget} (\cite{cojocaru2019qfactory}) on the states $\ket{\quin_1}$, $\ket{\quin_2}$. 
\item Server returns the 2 measurement results $s_1$, $s_2$. 
\item Client using $(B_1, B_2)$, $(B_1', B_2')$, $s_1$, $s_2$ computes the index $L = L_1L_2L_3 \in \{0, 1\}^3$ of the output state (see \cref{eq:expressions_l_1}, \cref{eq:expressions_l_2}, and \cref{eq:expressions_l_3} for the exact value of $L_1$, $L_2$, and $L_3$, respectively.)
\end{enumerate}

\noindent\textbf{Output:} The output of the Server is (with overwhelming probability) a quantum state $\ket{\quout} :=  \ket{+_{L\frac{\pi}{4}}}$ and the output of the Client is given by $L$ ($3$ bits). 
\vspace{\baselineskip}
\end{breakablealgorithm}

\begin{figure}[ht]
\centering
\includegraphics[trim=5cm 21cm 4cm 4cm, clip=true]{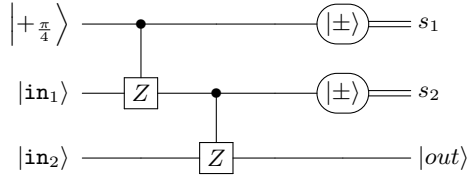}
\caption{Merge Gadget (Taken from \textup{\cite{cojocaru2019qfactory}})}
\label{fig:gadget}
\end{figure}

\subsection{Correctness of QFactory \label{subsec:correctness_qfactory}}

In an honest run, the description of the output state of the protocol depends on measurement results $y \in \Ima f_k$ and $b$, but also on the 2 preimages $x$ and $x'$ of $y$.

The output state of 4-states QFactory belongs to the set of states $\{\Ket{0}, \Ket{1}, \Ket{+}, \Ket{-}\}$ and its exact description is the following:

\begin{theorem}[4-states QFactory is correct (\textup{\cite{cojocaru2019qfactory}})]\label{thm:correctness}
  In an honest run, with overwhelming probability the output state $\ket{\quout}$ of the 4-states QFactory Protocol~\ref{protocol:qfactoryReal} is a BB84 state whose basis is $B_1 = h_k(x) \xor h_k(x') = d_0$, and:
\begin{itemize}
\item if $d_0 = 0$, then the state is $\ket{h_k(x)}$ (computational basis, also equal to $\ket{h_k(x')}$)
\item if $d_0 = 1$, then if $\sum_i b_i \cdot (x_i \xor x'_i) = 0 \bmod 2$, the state is $\ket{+}$, otherwise the state is $\ket{-}$ (Hadamard basis).
\end{itemize}
i.e.
\begin{align}
  \ket{\quout} &= H^{B_1} X^{B_2} \ket{0} \label{eq:correct_output}
\end{align}
with
\begin{align}
  B_1 &= h_k(x) \xor h_k(x') = d_0 \label{eq:expressions_exp1} \\
  B_2 &= (d_0 \times (b \cdot (x \xor x'))) \xor h(x)h(x') \label{eq:expressions_exp2}
\end{align}
(the inner product is taken modulo 2, and $x \xor x'$ is a bitwise xor)
\end{theorem}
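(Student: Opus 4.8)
The plan is to prove correctness by explicitly tracking the quantum state held by the server through the two stages of Protocol~\ref{protocol:qfactoryReal}, and then verifying that the resulting single-qubit state matches the claimed closed form $\ket{\quout} = H^{B_1}X^{B_2}\ket{0}$ in each of the possible cases. The only non-deterministic ingredient that could fail is the $2$-regularity of $f_k$: since $\mathcal{F}$ is a $2$-regular homomorphic-hardcore family, the measured image point $y$ has exactly two preimages $x, x'$ except with negligible probability, and I would condition on this overwhelmingly likely event throughout. This conditioning is precisely the source of the ``with overwhelming probability'' clause in the statement.

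First I would analyse Stage~1. After the server prepares $\otimes^n H\ket{0} \otimes \ket{0}^m$, applies $U_{f_k}$, and measures the target register obtaining $y \in \Ima f_k$, the preimage register collapses (conditioned on $y$ having two preimages) to $\tfrac{1}{\sqrt 2}(\ket{x} + \ket{x'})$ with $f_k(x) = f_k(x') = y$. Then I would turn to Stage~2 and apply the controlled evaluation $U_{h_k}:\ket{z}\ket{0}\mapsto\ket{z}\ket{h_k(z)}$ to the fresh target qubit, yielding $\tfrac{1}{\sqrt 2}\big(\ket{x}\ket{h_k(x)} + \ket{x'}\ket{h_k(x')}\big)$.

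The key step is the case split on $d_0 = h_k(x)\xor h_k(x')$. When $d_0 = 0$ the two hardcore bits agree, so the state factors as $\tfrac{1}{\sqrt 2}(\ket{x}+\ket{x'})\otimes\ket{h_k(x)}$; the target is already a product computational-basis state $\ket{h_k(x)}$, and the subsequent Hadamard-basis measurement of the preimage register cannot disturb it, giving $\ket{\quout}=\ket{h_k(x)}$. When $d_0 = 1$ the state is entangled, and here the main computation is to project the preimage register onto the Hadamard-basis outcome $\ket{+_b}=2^{-n/2}\sum_z(-1)^{b\cdot z}\ket{z}$. A short calculation shows the post-measurement target is proportional to $(-1)^{b\cdot x}\ket 0 + (-1)^{b\cdot x'}\ket 1$, which up to a global phase equals $\tfrac{1}{\sqrt 2}\big(\ket 0 + (-1)^{b\cdot(x\xor x')}\ket 1\big)$; the elementary identity I would invoke here is $b\cdot x + b\cdot x'\equiv b\cdot(x\xor x')\pmod 2$. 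Hence the output is $\ket{+}$ when $\sum_i b_i(x_i\xor x'_i)=0\bmod 2$ and $\ket{-}$ otherwise.

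Finally, I would reconcile the two branches with the unified formula by checking the four concrete outputs. For $d_0=0$ the outputs $\ket{h_k(x)}$ correspond to $B_2=h_k(x)$, which coincides with the product $h(x)h(x')$ exactly because the bits agree in this branch; for $d_0=1$ the outputs $\ket{\pm}$ correspond to $B_2=b\cdot(x\xor x')$, while $h(x)h(x')=0$ since the bits now differ. Both are captured simultaneously by $B_2=(d_0\cdot(b\cdot(x\xor x')))\xor h(x)h(x')$, and $B_1=d_0$ selects the basis, so indeed $\ket{\quout}=H^{B_1}X^{B_2}\ket 0$. The main obstacle is purely careful bookkeeping: handling the Hadamard-basis projection and the relative phase correctly in the $d_0=1$ branch, and appealing to the homomorphic-hardcore property to guarantee that $B_1 = h_k(x)\xor h_k(x') = d_0(t_k)$ is precisely the quantity the client can recover from the trapdoor $t_k$.
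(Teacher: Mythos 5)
Your proposal is correct. Note that the paper itself states \cref{thm:correctness} without proof, importing it verbatim from \cite{cojocaru2019qfactory}, so there is no in-paper argument to compare against; but your direct state-tracking computation is the standard and essentially unique way to establish it. The two case analyses are right: for $d_0=0$ the target factors out as $\ket{h_k(x)}$ and the Hadamard-basis measurement of the (product) preimage register is irrelevant, and for $d_0=1$ the projection onto $2^{-n/2}\sum_z(-1)^{b\cdot z}\ket{z}$ leaves $(-1)^{b\cdot x}\ket{h_k(x)}+(-1)^{b\cdot x'}\ket{h_k(x')}$, which up to global phase depends only on $b\cdot(x\oplus x')$ regardless of which preimage is labelled $x$, so your implicit choice of labelling is harmless. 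The reconciliation with $B_2=(d_0\cdot(b\cdot(x\oplus x')))\oplus h(x)h(x')$ via $h(x)h(x')=h(x)$ when the bits agree and $h(x)h(x')=0$ when they differ is exactly right, as is attributing the ``overwhelming probability'' clause to conditioning on $y$ having exactly two preimages.
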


\begin{theorem}[8-states QFactory is correct (\textup{\cite{cojocaru2019qfactory}})]\label{thm:8_statescorrectness}
In an honest run, the Output state of the 8-states QFactory Protocol is of the form $\Ket{+_{L \cdot \frac{\pi}{4}}}$, where $L = L_1L_2L_3 \in \{0, 1\}^3$, defined as:
\begin{align}
L_1 &= B_2' \oplus B_2 \oplus [B_1 \cdot (s_1 \oplus s_2)]  \label{eq:expressions_l_1} \\
L_2 &= B_1' \oplus [(B_2 \oplus s_2) \cdot B_1] \label{eq:expressions_l_2}\\
L_3 &=  B_1
\label{eq:expressions_l_3}
\end{align}
\end{theorem}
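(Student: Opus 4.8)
The plan is to obtain the claimed expressions \cref{eq:expressions_l_1}, \cref{eq:expressions_l_2} and \cref{eq:expressions_l_3} by a direct circuit computation: feed the two states produced by the (rotated) 4-states runs into the Merge Gadget of \cref{fig:gadget} and propagate them to the output wire, keeping track of every measurement-dependent Pauli byproduct. First I would pin down the two input states using the already-established 4-states correctness. By \cref{thm:correctness}, the first run prepares $\ket{\quin_1}=H^{B_1}X^{B_2}\ket 0$, i.e.\ a computational-basis state $\ket{B_2}$ when $B_1=0$ and the equatorial state $\ket{+_{B_2\pi}}$ when $B_1=1$. The second, \emph{rotated}, run replaces the final $\ket{\pm}$-measurement by a $\ket{\pm_{\pi/2}}$-measurement, which is equivalent to applying $S=R_Z(\pi/2)$ before an ordinary $\ket{\pm}$-measurement; hence its output is $\ket{\quin_2}=S\,H^{B_1'}X^{B_2'}\ket 0$, the $S$-image of a BB84 state. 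Writing both inputs in the common Clifford form lets me treat all four basis combinations uniformly in what follows.

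Next I would run these two qubits through the gadget symbolically. The gadget is Clifford up to the two intermediate measurements, so I would commute the input corrections $H^{B_1}X^{B_2}$ and $S H^{B_1'}X^{B_2'}$ past the internal entangling gates using the standard Clifford relations (e.g.\ $HX=ZH$, $SX=YS$ up to phase, and the $CZ$ conjugation rules), collecting them at the output wire. The two measurements then each contribute a Pauli byproduct indexed by their outcomes $s_1,s_2$, which I would likewise push through the remaining gates. Collapsing all the accumulated single-qubit Cliffords acting on a fixed reference $\ket+$ yields an operator of the form $Z^{a}R_Z(\beta)$ applied to $\ket+$, that is, a pure equatorial state $\ket{+_{\beta+a\pi}}$ whose angle can be read off modulo $2\pi$.

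Finally I would match the resulting angle to $L\pi/4=L_3\pi/4+L_2\pi/2+L_1\pi$ and identify the three bits. The decisive point is the case split on $B_1$: when $B_1=1$ the (equatorial) first qubit feeds a $\pi/4$ rotation into the output wire through the gadget, producing $L_3=B_1$ together with the cross-terms in which the first-qubit data is gated by $B_1$; when $B_1=0$ the first qubit is computational and these contributions vanish. This case analysis is exactly what turns linear XOR bookkeeping into the multiplicative terms $[B_1\cdot(s_1\oplus s_2)]$ in $L_1$ and $[(B_2\oplus s_2)\cdot B_1]$ in $L_2$. I expect this last step --- correctly collecting the measurement-outcome-dependent corrections and resolving the $B_1$-conditioned contributions into the precise Boolean expressions, rather than merely proving that the output lies in the eight-state set --- to be the main obstacle; the phase and Pauli-commutation bookkeeping through the gadget is routine but error-prone, and reproduces the derivation of~\cite{cojocaru2019qfactory}.
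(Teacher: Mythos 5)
The paper contains no proof of this statement to compare against: \cref{thm:8_statescorrectness} is recalled verbatim from~\cite{cojocaru2019qfactory} (as its header indicates) and is used elsewhere as an imported fact. Judged on its own, your plan --- fix the two inputs via \cref{thm:correctness}, propagate all corrections through the Merge Gadget of \cref{fig:gadget}, and match the resulting equatorial angle against $L_1\pi + L_2\pi/2 + L_3\pi/4$ --- has the right shape and is essentially the derivation carried out in the cited reference.

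That said, there are two substantive reservations. First, the proposal is a plan rather than a proof: the step that would actually produce the three Boolean formulas --- collecting the $s_1,s_2$-dependent byproducts and resolving the $B_1$-conditioned contributions into $B_1\cdot(s_1\oplus s_2)$ and $(B_2\oplus s_2)\cdot B_1$ --- is precisely the step you defer and yourself flag as the main obstacle, so \cref{eq:expressions_l_1} and \cref{eq:expressions_l_2} are never verified. Second, the premise that the gadget is ``Clifford up to the two intermediate measurements'' cannot be correct: both inputs are stabilizer states (their equatorial angles are multiples of $\pi/2$), whereas the output $\ket{+_{L\pi/4}}$ with $L_3=B_1=1$ is a non-stabilizer state, so by Gottesman--Knill the Merge Gadget must contain a non-Clifford element (a non-Clifford gate or a non-Pauli-basis measurement). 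Consequently the ``routine Pauli-commutation bookkeeping'' you invoke does not directly apply: a purely Clifford propagation of byproducts yields only affine (XOR) update rules and could never generate the multiplicative, $B_1$-gated terms in $L_1$ and $L_2$; those terms are exactly the signature of the non-Clifford step, which your sketch gestures at (the ``$\pi/4$ rotation fed into the output wire'') but never reconciles with the claimed Clifford structure. A correct write-up would have to identify the non-Clifford gate in the gadget explicitly and perform the case analysis on $B_1$ through it.
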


\subsection{Security of QFactory \label{subsec:security_qfactory}}

In any run of the protocol, honest or malicious, the state that the client believes that the server has is given by \cref{thm:correctness}. Therefore, the task that a malicious server wants to achieve, is to be able to guess, as good as he can, the description of the output state that the client (based on the public communication) thinks the server has produced. In particular, in our case, the server needs to guess the bit $B_1$ (corresponding to the basis) of the (honest) output state.

\begin{definition}[4 states basis blindness]\label{def:4basisblind}
  We say that a protocol $(\pi_A, \pi_B)$ achieves \textbf{basis-blindness} with respect to an ideal list of 4 states \\
  $S = \{S_{B_1,B_2}\}_{(B_1,B_2) \in \{0,1\}^2}$ if:
  \begin{itemize}
  \item $S$ is the set of states that the protocol outputs, i.e.:
    \[\pr{\ket{\phi} = S_{B_1B_2} \in S \mid ((B_1,B_2),\ket{\phi}) \leftarrow (\pi_A \| \pi_B) } \geq 1 - \negl\]
  \item and no information is leaked about the index bit $B_1$ of the output state of the protocol, i.e for all QPT adversary $\cA$:
  \[\pr{ B_1 = \tilde{B_1} \mid ((B_1, B_2), \tilde{B_1}) \leftarrow (\pi_A \| \cA)} \leq 1/2 + \negl \]
  \end{itemize}
\end{definition}

\begin{theorem}[4-states QFactory is secure (\textup{\cite{cojocaru2019qfactory}})] \label{thm:security_qfac_2_0}
  Protocol~\ref{protocol:qfactoryReal} satisfies $4$-states basis blindness with respect to the ideal list of states \\
  $S = \{H^{B_1}X^{B_2}\ket{0}\}_{B_1,B_2} = \{\Ket{0}, \Ket{1}, \Ket{+}, \Ket{-}\}$.
\end{theorem}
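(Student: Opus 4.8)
The plan is to verify the two conditions of basis-blindness (\cref{def:4basisblind}) separately: the correctness of the output set, and the computational hiding of the basis bit $B_1$.

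First I would dispatch the output-set condition by directly invoking \cref{thm:correctness}: in an honest run, with overwhelming probability the server holds the state $H^{B_1}X^{B_2}\ket{0}$, which as $(B_1,B_2)$ ranges over $\{0,1\}^2$ is exactly the set $S = \{\Ket{0}, \Ket{1}, \Ket{+}, \Ket{-}\}$. Hence the first bullet of \cref{def:4basisblind} is immediate.

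The heart of the proof is the second bullet, the unpredictability of $B_1$. The key structural observation is that in Protocol~\ref{protocol:qfactoryReal} the basis bit satisfies $B_1 = d_0(t_k)$ in \emph{both} branches of the client's case distinction --- whether or not the returned $y$ admits exactly two preimages. Consequently $B_1$ is a deterministic function of the trapdoor $t_k$ alone and does \emph{not} depend on the messages $(y,b)$ returned by the (possibly malicious) server, nor on the server's strategy. Moreover, the only message the honest client $\pi_A$ ever transmits is the public description $k$; it produces $(B_1,B_2)$ internally and sends nothing further. Therefore the entire view of any adversary $\cA$ interacting with $\pi_A$ is efficiently simulatable given $k$ alone.

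With this in hand, the blindness of $B_1$ reduces to the \emph{hardcore} property of the family $\cF$, namely that $d_0(t_k)$ is unpredictable given only the public description $k$. Concretely, I would argue by contradiction: suppose a QPT adversary $\cA$ achieves $\pr{B_1 = \tilde B_1} \geq 1/2 + \mu$ for some non-negligible $\mu$. I then build a predictor $\cB$ that, on input $k$, runs $\cA$ while playing the role of the client (forwarding $k$ as the client's single message and discarding $\cA$'s response, which cannot influence $B_1$), and outputs $\cA$'s guess $\tilde B_1$ as its prediction of $d_0(t_k)$. Since $B_1 = d_0(t_k)$, the predictor $\cB$ succeeds with the same advantage $\mu$, contradicting the hardcore-bit guarantee of $\cF$. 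This yields $\pr{B_1 = \tilde B_1} \leq 1/2 + \negl$ and completes the proof.

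The step I expect to be the main obstacle is justifying that the simulation of $\cA$'s view from $k$ is faithful against a \emph{quantum} adversary that may deviate arbitrarily: one must check that nothing the client reveals or withholds after receiving $(y,b)$ is correlated with $d_0(t_k)$, and that the hardcore property is stated for adversaries that can themselves evaluate $f_k$ and $h_k$ from $k$. Provided the hardcore guarantee of $\cF$ is post-quantum and holds relative to the public key, the reduction is tight and loses nothing beyond the hardcore advantage.
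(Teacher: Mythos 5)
The paper does not actually prove this statement: it is imported verbatim from \cite{cojocaru2019qfactory} as a black-box ingredient, with no proof reproduced here. Your reconstruction — the output-set condition from \cref{thm:correctness}, plus the observation that $B_1 = d_0(t_k)$ in both branches of the client's case distinction so that the adversary's view is generated from $k$ alone and blindness of $B_1$ reduces tightly to the (post-quantum) hardcore property of $d_0$ for the family $\cF$ — is exactly the argument given in that reference, so your proposal is correct and takes essentially the same route as the source the paper relies on.
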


\begin{definition}[8 states basis blindness]\label{def:8basisblind}
\sloppy  Similarly, we say that a protocol $(\pi_A, \pi_B)$ achieves \textbf{basis-blindness} with respect to an ideal list of 8 states $S = \{S_{L_1, L_2, L_3}\}_{(L_1, L_2, L_3) \in \{0,1\}^3}$ if:
  \begin{itemize}
  \item $S$ is the set of states that the protocol outputs, i.e.:
    \[\pr{\ket{\phi} = S_{L_1,L_2,L_3} \in S \mid ((L_1,L_2,L_3),\ket{\phi}) \leftarrow (\pi_A \| \pi_B) } =1\]
  \item and if no information is leaked about the ``basis'' bits $(L_2,L_3)$ of the output state of the protocol, i.e for all QPT adversary $\cA$:
    \[\pr{ L_2 = \tilde{L_2} \text{ and } L_3 = \tilde{L_3} \mid ((L_1,L_2,L_3), (\tilde{L_2},\tilde{L_3})) \leftarrow (\pi_A \| \cA)} \leq 1/4 + \negl \]
  \end{itemize}
\end{definition}

\begin{theorem}[8-states QFactory is secure (\textup{\cite{cojocaru2019qfactory}})]\label{thm:security_qfac_2_1}
\sloppy Protocol~\ref{protocol:QFactory4to8} satisfies $8$-state basis blindness with respect to the ideal set of states ${S = \{\ket{+_{\pi L /4}}\}_{L \in \{0,\dots,7\}} = \{\ket{+}, \ket{+_{\frac{\pi}{4}}}, .., \ket{+_{\frac{7\pi}{4}}}\}}$.
\end{theorem}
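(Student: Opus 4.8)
The plan is to establish the two requirements of \cref{def:8basisblind} separately, reducing the security of the 8-states protocol to the already-established 4-states basis blindness of its two constituent instances (\cref{thm:security_qfac_2_0}). The first requirement — that the honest output always lies in $S=\{\ket{+_{\pi L/4}}\}_{L\in\{0,\dots,7\}}$ — is immediate from the correctness statement \cref{thm:8_statescorrectness}, which also supplies the explicit formulas $L_3=B_1$, $L_2=B_1'\oplus[(B_2\oplus s_2)\cdot B_1]$, and $L_1=B_2'\oplus B_2\oplus[B_1\cdot(s_1\oplus s_2)]$ that I will exploit below. The substance is therefore the second requirement: that no QPT adversary playing the role of the server can guess the pair $(L_2,L_3)$ with probability better than $1/4+\negl$.

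First I would record the crucial structural fact that the 8-states protocol runs two \emph{independently keyed} copies of 4-states QFactory: the first determines $(B_1,B_2)$, the second (the ``rotated'' run) determines $(B_1',B_2')$, and only the merge gadget couples them. Since $L_3=B_1$ and $L_2=B_1'\oplus g$ with $g:=[(B_2\oplus s_2)\cdot B_1]$, the basis bit $B_1'$ of the \emph{second} instance one-time-pads $L_2$ and appears nowhere in the transcript except through the client's private post-processing. My main step is to replace $B_1'$ by a fresh uniform bit $u$ that is revealed nowhere. To justify this swap against a QPT server I would invoke the 4-states blindness of the second instance: for a single, uniformly chosen bit, the guessing bound of \cref{def:4basisblind} is equivalent to the adversary's entire view being computationally indistinguishable under $B_1'=0$ versus $B_1'=1$, so switching $B_1'\mapsto u$ changes the adversary's success probability by at most $\negl$. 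In the resulting hybrid, $L_2=u\oplus g$ is a uniform bit that is independent of the adversary's view and of $L_3$.

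With $L_2$ now uniform and independent of $(\mathrm{view},L_3)$, I would factor the joint success probability as $\Pr[\tilde L_2=L_2\wedge\tilde L_3=L_3]=\tfrac12\,\Pr[\tilde L_3=L_3]$, since the adversary's guess $\tilde L_2$ is a function of its view and hence matches the independent uniform $L_2$ with probability exactly $1/2$. It then remains to bound $\Pr[\tilde L_3=L_3]=\Pr[\tilde L_3=B_1]$. Here I would reduce to the 4-states blindness of the \emph{first} instance: because the two instances use independent keys and the hybrid already generates the second instance from scratch using $u$, any would-be guesser can be turned into an adversary against \cref{thm:security_qfac_2_0} for the first instance that simulates everything on the second side itself. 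This yields $\Pr[\tilde L_3=B_1]\le\tfrac12+\negl$, and combining gives $\Pr[\tilde L_2=L_2\wedge\tilde L_3=L_3]\le\tfrac14+\negl$, as required.

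The main obstacle I anticipate is the faithful construction of the two reductions against an \emph{adaptive, malicious} server who may choose $y,b,y',b',s_1,s_2$ arbitrarily during the merge gadget. Concretely, I must check (i) that the single-bit ``unpredictability $\Rightarrow$ indistinguishability'' equivalence survives the fact that $g$ depends on adversarially chosen quantities, so that the one-time-pad argument is not circular, and (ii) that the reduction to the first instance can simulate the full merge-gadget interaction of the second instance without its trapdoor — which is exactly what the independence of the keys and the replacement of $B_1'$ by $u$ are designed to enable. Once these simulation details are pinned down, the negligible error terms from the two blindness invocations add up, leaving the clean $1/4+\negl$ bound.
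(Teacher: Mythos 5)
You should first be aware that the paper does not prove this statement at all: \cref{thm:security_qfac_2_1} is imported verbatim from \cite{cojocaru2019qfactory} and restated in the appendix only as a building block for \cref{sec:game_based_cubqc}, so there is no in-paper proof to compare your argument against.

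On its own merits, your reconstruction is sound and follows the natural (and, in spirit, the original) route: the first condition of \cref{def:8basisblind} is just \cref{thm:8_statescorrectness}; $L_3=B_1$ is protected directly by the 4-states basis blindness of the first instance; and $B_1'$, independent of the first instance because the two key pairs are generated independently, one-time-pads $L_2$, so after the standard single-bit ``unpredictability iff indistinguishability'' switch the joint guessing probability factors as $\tfrac12\Pr[\tilde L_3=B_1]\le\tfrac14+\negl$. Two details deserve explicit attention beyond the ones you already flag. First, the second instance is the \emph{rotated} 4-states QFactory (final measurements in the $\ket{\pm_{\pi/2}}$ basis rather than $\ket{\pm}$), whereas \cref{thm:security_qfac_2_0} is stated for Protocol~\ref{protocol:qfactoryReal}; you must observe that the hardcore bit $B_1'=d_0(t_{k'})$ and the transcript's leakage about it do not depend on that final measurement basis, so the blindness guarantee carries over to the rotated variant. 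Second, in your hybrid the success event $\tilde L_2=L_2\wedge\tilde L_3=L_3$ is only checkable given the \emph{first} instance's trapdoor (needed to compute $B_1$ and $B_2$ in $g=(B_2\oplus s_2)\cdot B_1$); this is unproblematic precisely because the reduction against the second instance generates the first key pair itself, but it is the exact place where the independence of the two key generations is used and should be stated rather than left implicit. With those two points pinned down, your argument goes through and yields the claimed $1/4+\negl$ bound.
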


\section{Distance Measures for Quantum States}

\begin{lemma}\label{lemma:trace_formula}
	For any two self-adjoint trace-class operators $\rho, \sigma$ it holds that
	\begin{align*}
		\Tr(\rho\sigma) = \frac{1}{2} \left[ \Tr(\rho^2) + \Tr(\sigma^2) \right] - \frac{1}{2} \left\| \rho - \sigma \right\|_{\text{HS}}^2,
	\end{align*}
	where the Hilbert-Schmidt norm is defined as
	\begin{align*}
		\| A \|_{\text{HS}} = \sqrt{\Tr(A^\ast A)}.
	\end{align*}
\end{lemma}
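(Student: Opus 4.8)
The plan is to prove this purely algebraically by expanding the Hilbert–Schmidt norm on the right-hand side and simplifying using only the self-adjointness of the operators together with the linearity and cyclicity of the trace. Concretely, I would start from the definition $\|\rho - \sigma\|_{\text{HS}}^2 = \Tr\big((\rho-\sigma)^\ast(\rho-\sigma)\big)$. Since $\rho$ and $\sigma$ are self-adjoint, so is their difference, whence $(\rho-\sigma)^\ast = \rho - \sigma$, and the expression collapses to $\Tr\big((\rho-\sigma)^2\big)$.

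The next step is a routine expansion. By linearity of the trace, $\Tr\big((\rho-\sigma)^2\big) = \Tr(\rho^2) - \Tr(\rho\sigma) - \Tr(\sigma\rho) + \Tr(\sigma^2)$. Because all Hilbert spaces in this paper are finite-dimensional, every product of (trace-class) operators is again trace-class and cyclicity of the trace applies without any subtlety; in particular $\Tr(\sigma\rho) = \Tr(\rho\sigma)$. This yields $\|\rho-\sigma\|_{\text{HS}}^2 = \Tr(\rho^2) + \Tr(\sigma^2) - 2\,\Tr(\rho\sigma)$, and solving for $\Tr(\rho\sigma)$ gives exactly the claimed identity.

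There is no genuine obstacle here: the only two points that deserve a line of justification are that self-adjointness removes the adjoint in the Hilbert–Schmidt inner product, and that cyclicity legitimately symmetrizes the cross term $\Tr(\rho\sigma)+\Tr(\sigma\rho)$ into $2\,\Tr(\rho\sigma)$; both are immediate in finite dimension. I would therefore present the whole argument as a short chain of displayed equalities ending in the final rearrangement, and flag that the finite-dimensionality assumption stated in the preliminaries is what lets us treat all products as trace-class and apply cyclicity freely.
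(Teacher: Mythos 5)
Your proposal is correct and follows exactly the paper's own argument: expand $\Tr\bigl((\rho-\sigma)^\ast(\rho-\sigma)\bigr)$ using self-adjointness, then symmetrize the cross terms via cyclicity of the trace and rearrange. The paper states this in one line; you simply make the same steps explicit.
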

\begin{proof}
	This follows directly from the relation
	\begin{align*}
		(\rho - \sigma)^2 = \rho^2 - \rho\sigma - \sigma\rho + \sigma^2
	\end{align*}
	and the fact that $\rho$ and $\sigma$ are self-adjoint operators.
\end{proof}

The following lemma formalizes the following statement: If $\Tr(\rho\sigma)$ is close to $1$, then both $\rho$ and $\sigma$ must be almost pure, and $\rho$ and $\sigma$ must be close. Note that \cref{lemma:trace_properties} holds in particular for density matrices $\rho$ and $\sigma$, despite being stated for a more general class of operators.

\begin{lemma}\label{lemma:trace_properties}
	Let $\varepsilon \geq 0$ and $\Tr \left( \rho\sigma \right) \geq 1 - \varepsilon$ for two self-adjoint, positive semi-definite operators $\rho, \sigma$ with trace less than 1. Then, it holds that
	\begin{enumerate}
		\item $\Tr \left( \rho^2 \right) \geq 1 - 2\varepsilon$,
		\item $\Tr \left( \sigma^2 \right) \geq 1 - 2\varepsilon$, and
		\item $\left\| \rho - \sigma \right\|_{\text{HS}} \leq \sqrt{2\varepsilon}$.
	\end{enumerate}
\end{lemma}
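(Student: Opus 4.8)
The plan is to derive all three claims directly from the trace identity in \cref{lemma:trace_formula}, together with the single auxiliary fact that a self-adjoint positive semi-definite operator $\tau$ with $\Tr(\tau) \leq 1$ satisfies $\Tr(\tau^2) \leq 1$. The latter follows from the spectral decomposition: writing the nonnegative eigenvalues as $\lambda_i \geq 0$ with $\sum_i \lambda_i = \Tr(\tau) \leq 1$, one has $\Tr(\tau^2) = \sum_i \lambda_i^2 \leq \left( \sum_i \lambda_i \right)^2 \leq 1$, where the first inequality uses that the omitted cross terms $\lambda_i \lambda_j$ are nonnegative. I expect this purity bound to be the only nontrivial ingredient, and it is entirely elementary.

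With $\Tr(\rho^2) \leq 1$ and $\Tr(\sigma^2) \leq 1$ in hand, I would rearrange \cref{lemma:trace_formula} into the form
\begin{align*}
  \left\| \rho - \sigma \right\|_{\text{HS}}^2 = \Tr(\rho^2) + \Tr(\sigma^2) - 2\Tr(\rho\sigma).
\end{align*}
For item (3), I substitute the two purity upper bounds and the hypothesis $\Tr(\rho\sigma) \geq 1 - \varepsilon$ to obtain $\left\| \rho - \sigma \right\|_{\text{HS}}^2 \leq 1 + 1 - 2(1-\varepsilon) = 2\varepsilon$, and hence $\left\| \rho - \sigma \right\|_{\text{HS}} \leq \sqrt{2\varepsilon}$.

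For items (1) and (2), I instead drop the (nonnegative) Hilbert--Schmidt term in \cref{lemma:trace_formula}, which yields $\Tr(\rho^2) + \Tr(\sigma^2) \geq 2\Tr(\rho\sigma) \geq 2(1-\varepsilon)$. Combining this with the purity bound $\Tr(\sigma^2) \leq 1$ gives $\Tr(\rho^2) \geq 2(1-\varepsilon) - \Tr(\sigma^2) \geq 1 - 2\varepsilon$, proving (1); claim (2) follows by the symmetric argument, exchanging the roles of $\rho$ and $\sigma$. Since no step uses anything beyond the identity and the purity bound, there is no real obstacle here; the only point requiring mild care is recognizing that the purity bound genuinely relies on the trace-at-most-one hypothesis (without it the cross-term argument still holds, but the constant $1$ would be replaced by $(\Tr\tau)^2$), which is precisely why the lemma is stated for operators of trace less than one rather than for arbitrary positive operators.
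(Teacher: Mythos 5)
Your proof is correct and follows essentially the same route as the paper's: both derive all three items from \cref{lemma:trace_formula} together with the bound $\Tr(\tau^2) \leq 1$, dropping the nonnegative Hilbert--Schmidt term for items (1) and (2) and substituting the purity upper bounds for item (3). The only difference is that you explicitly justify the purity bound via the spectral decomposition, which the paper simply asserts; this is a welcome but minor addition.
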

\begin{proof}
	\begin{enumerate}
		\item With the formula from \cref{lemma:trace_formula}, we infer that
			\begin{align*}
				\Tr(\rho\sigma) \leq \frac{1}{2} \left[ \Tr(\rho^2) + \Tr(\sigma^2) \right] \leq \frac{1}{2} \left[ \Tr(\rho^2) + 1 \right],
			\end{align*}
			using the non-negativity of the Hilbert-Schmidt norm and the fact that $\Tr \left( \sigma^2 \right) \leq 1$. Hence,
			\begin{align*}
				\Tr \left( \rho^2 \right) \geq 2 \Tr \left( \rho\sigma \right) - 1 \geq 1 - 2\varepsilon .
			\end{align*}
		\item Analogously to 1.
		\item Using $\Tr \left( \rho^2 \right) \leq 1$ and $\Tr \left( \sigma^2 \right) \leq 1$, we obtain
			\begin{align*}
				\Tr \left( \rho\sigma \right) &\leq 1 - \frac{1}{2} \left\| \rho - \sigma \right\|_{\text{HS}}^2 \\
				\Rightarrow \left\| \rho - \sigma \right\|_{\text{HS}}^2 &\leq 2 \left( 1 - \Tr \left( \rho\sigma \right) \right) \leq 2 \varepsilon,
			\end{align*}
		    which implies the claim.
	\end{enumerate}
\end{proof}

\begin{lemma}\label{lemma:equivalence_purity_closeness}
  Let $\lambda$ be a security parameter and let $\rho, \sigma$ be two density matrices of finite and fixed dimension. Then, the following statements are equivalent:
  \begin{enumerate}
    \item $\Tr \left( \rho^2 \right) \geq 1 - \operatorname{negl}(\lambda)$, $\Tr \left( \sigma^2 \right) \geq 1 - \operatorname{negl}(\lambda)$, and $\operatorname{TD}\left( \rho - \sigma \right) \leq \operatorname{negl}(\lambda)$,
    \item $\Tr \left( \rho \sigma \right) \geq 1 - \operatorname{negl}(\lambda)$,
  \end{enumerate}
  where $\operatorname{TD}$ denotes the trace distance.
\end{lemma}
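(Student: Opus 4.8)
The plan is to prove the two implications separately. Both directions rest on the identity of \cref{lemma:trace_formula} together with \cref{lemma:trace_properties}, and the only ingredient beyond these is the translation between the trace distance $\operatorname{TD}(\rho,\sigma)=\tfrac12\|\rho-\sigma\|_1$ (the Schatten-$1$ norm) appearing in statement~1 and the Hilbert--Schmidt (Schatten-$2$) norm $\|\cdot\|_{\mathrm{HS}}$ appearing in those lemmas.

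For the direction $2 \Rightarrow 1$, I would set $\varepsilon = \operatorname{negl}(\lambda)$ and apply \cref{lemma:trace_properties} verbatim: it immediately yields $\Tr(\rho^2) \geq 1 - 2\varepsilon$ and $\Tr(\sigma^2) \geq 1 - 2\varepsilon$, which are again of the form $1 - \operatorname{negl}(\lambda)$, as well as $\|\rho - \sigma\|_{\mathrm{HS}} \leq \sqrt{2\varepsilon}$. To recover the trace-distance bound, I would invoke that in fixed dimension $d$ one has $\|A\|_1 \leq \sqrt{d}\,\|A\|_{\mathrm{HS}}$ for every operator $A$ (Cauchy--Schwarz on the singular values), so that $\operatorname{TD}(\rho,\sigma) \leq \tfrac{\sqrt d}{2}\sqrt{2\varepsilon}$. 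Since $d$ is a fixed constant and the square root of a negligible function is still negligible, this bound is $\operatorname{negl}(\lambda)$, establishing statement~1.

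For the direction $1 \Rightarrow 2$, I would start from $\Tr(\rho\sigma) = \tfrac12[\Tr(\rho^2)+\Tr(\sigma^2)] - \tfrac12\|\rho-\sigma\|_{\mathrm{HS}}^2$ of \cref{lemma:trace_formula}. The bracketed term is at least $1 - \operatorname{negl}(\lambda)$ by the two purity hypotheses. For the subtracted term I would use the reverse comparison $\|A\|_{\mathrm{HS}} \leq \|A\|_1$ (monotonicity of Schatten norms, needing no dimension dependence), giving $\|\rho-\sigma\|_{\mathrm{HS}} \leq \|\rho-\sigma\|_1 = 2\operatorname{TD}(\rho,\sigma) \leq \operatorname{negl}(\lambda)$, whence $\|\rho-\sigma\|_{\mathrm{HS}}^2 \leq \operatorname{negl}(\lambda)$. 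Substituting, $\Tr(\rho\sigma) \geq 1 - \operatorname{negl}(\lambda)$.

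The calculations are entirely routine; the one point demanding care is the asymmetry of the two norm comparisons. The bound $\|A\|_{\mathrm{HS}} \leq \|A\|_1$ holds in any dimension and suffices for $1 \Rightarrow 2$, whereas the converse bound $\|A\|_1 \leq \sqrt{d}\,\|A\|_{\mathrm{HS}}$ needed for $2 \Rightarrow 1$ genuinely consumes the finite, fixed-dimension hypothesis; this is the only place where dropping that assumption would break the argument. Everything else reduces to the closure of the negligible functions under multiplication by a constant and under taking square roots.
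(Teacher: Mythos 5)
Your proof is correct and follows essentially the same route as the paper's: one direction via \cref{lemma:trace_properties}, the other via the identity in \cref{lemma:trace_formula}, with the equivalence of the trace and Hilbert--Schmidt norms in fixed finite dimension bridging the two distance measures. You are in fact slightly more careful than the paper in pinpointing that the dimension-dependent comparison $\|A\|_1 \leq \sqrt{d}\,\|A\|_{\mathrm{HS}}$ is only needed for the direction $2 \Rightarrow 1$, while $1 \Rightarrow 2$ uses only the dimension-free inequality $\|A\|_{\mathrm{HS}} \leq \|A\|_1$.
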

\begin{proof}
    One direction of the equivalence follows directly from \cref{lemma:trace_properties}. The other direction follows from the formula in \cref{lemma:trace_formula} and the fact that in finite-dimensional spaces the trace norm is equivalent to the Hilbert-Schmidt norm.
\end{proof}

\begin{lemma}\label{lemma:trace_transitivity}
	Let $\varepsilon_1, \varepsilon_2 \geq 0$. Let further $\Tr \left( \rho_1 \rho_2 \right) \geq 1 - \varepsilon_1$ and $\Tr \left( \rho_2 \rho_3 \right) \geq 1 - \varepsilon_2$ for self-adjoint, positive semi-definite operators $\rho_1, \rho_2, \rho_3$ with trace less than 1. Then it holds that $\Tr \left( \rho_1 \rho_3 \right) \geq 1 - 3 \left( \varepsilon_1 + \varepsilon_2 \right)$.
\end{lemma}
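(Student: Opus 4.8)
The plan is to reduce the entire statement to the Hilbert--Schmidt geometry supplied by \cref{lemma:trace_formula,lemma:trace_properties} and then close the bound with an elementary AM--GM step. The key observation is that the two hypotheses $\Tr(\rho_1\rho_2) \geq 1 - \varepsilon_1$ and $\Tr(\rho_2\rho_3) \geq 1 - \varepsilon_2$ each say, via \cref{lemma:trace_properties}, simultaneously that consecutive operators are close in Hilbert--Schmidt norm \emph{and} that all three operators are nearly pure. Concretely, I would first extract from \cref{lemma:trace_properties}(3) the two distance bounds $\| \rho_1 - \rho_2 \|_{\text{HS}} \leq \sqrt{2\varepsilon_1}$ and $\| \rho_2 - \rho_3 \|_{\text{HS}} \leq \sqrt{2\varepsilon_2}$, and from parts (1) and (2) the purity bounds $\Tr(\rho_1^2) \geq 1 - 2\varepsilon_1$ and $\Tr(\rho_3^2) \geq 1 - 2\varepsilon_2$. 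All of these invocations are legitimate precisely because the operators are assumed self-adjoint, positive semi-definite, and of trace at most one.

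Next I would apply the triangle inequality for the Hilbert--Schmidt norm (valid since it is a genuine norm) to chain the two distance bounds into
\[
  \| \rho_1 - \rho_3 \|_{\text{HS}} \leq \| \rho_1 - \rho_2 \|_{\text{HS}} + \| \rho_2 - \rho_3 \|_{\text{HS}} \leq \sqrt{2\varepsilon_1} + \sqrt{2\varepsilon_2}.
\]
I would then feed this, together with the two purity bounds, back into the identity of \cref{lemma:trace_formula} applied to the pair $(\rho_1, \rho_3)$, namely $\Tr(\rho_1\rho_3) = \tfrac{1}{2}\left[ \Tr(\rho_1^2) + \Tr(\rho_3^2) \right] - \tfrac{1}{2} \| \rho_1 - \rho_3 \|_{\text{HS}}^2$. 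Expanding $\left( \sqrt{2\varepsilon_1} + \sqrt{2\varepsilon_2} \right)^2 = 2\varepsilon_1 + 2\varepsilon_2 + 4\sqrt{\varepsilon_1\varepsilon_2}$ and collecting terms yields the intermediate estimate
\[
  \Tr(\rho_1\rho_3) \geq 1 - 2\varepsilon_1 - 2\varepsilon_2 - 2\sqrt{\varepsilon_1\varepsilon_2}.
\]

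Finally I would absorb the cross term using the AM--GM inequality $2\sqrt{\varepsilon_1\varepsilon_2} \leq \varepsilon_1 + \varepsilon_2$ (equivalently $\left( \sqrt{\varepsilon_1} - \sqrt{\varepsilon_2} \right)^2 \geq 0$), which upgrades the intermediate bound to exactly $\Tr(\rho_1\rho_3) \geq 1 - 3(\varepsilon_1 + \varepsilon_2)$, as claimed. I do not anticipate a genuine obstacle: the argument is a short chain of already-established facts. The only points needing a little care are verifying that the trace-at-most-one hypotheses are exactly those required to invoke \cref{lemma:trace_properties}, and recognizing that the slightly lossy factor $3$ (rather than $2$) in the conclusion is precisely what the AM--GM step buys, so no sharper control of the cross term $\sqrt{\varepsilon_1\varepsilon_2}$ is necessary.
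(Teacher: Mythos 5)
Your proposal is correct and follows essentially the same route as the paper's proof: extract the purity and Hilbert--Schmidt distance bounds from \cref{lemma:trace_properties}, chain via the triangle inequality, substitute into the identity of \cref{lemma:trace_formula}, and absorb the cross term $2\sqrt{\varepsilon_1\varepsilon_2}$ with AM--GM. The only cosmetic difference is that you apply AM--GM after substituting into the trace formula (yielding the slightly sharper intermediate bound $1 - 2\varepsilon_1 - 2\varepsilon_2 - 2\sqrt{\varepsilon_1\varepsilon_2}$) whereas the paper bounds $\left\| \rho_1 - \rho_3 \right\|_{\text{HS}}^2 \leq 4(\varepsilon_1+\varepsilon_2)$ first; the two orderings are equivalent.
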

\begin{proof}
	From \cref{lemma:trace_properties} we know that $\Tr \left( \rho_1^2 \right) \geq 1 - 2\varepsilon_1$, $\Tr \left( \rho_3^2 \right) \geq 1 - 2\varepsilon_2$, and
	\begin{align*}
		\left\| \rho_1 - \rho_2 \right\|_{\text{HS}} \leq \sqrt{2\varepsilon_1}, \quad \left\| \rho_2 - \rho_3 \right\|_{\text{HS}} \leq \sqrt{2\varepsilon_2}.
	\end{align*}
	By the triangle inequality for the Hilbert-Schmidt norm, it follows readily that
	\begin{align*}
		\left\| \rho_1 - \rho_3 \right\|_{\text{HS}} \leq \sqrt{2\varepsilon_1} + \sqrt{2\varepsilon_2}
	\end{align*}
	and therefore
	\begin{align*}
		\left\| \rho_1 - \rho_3 \right\|_{\text{HS}}^2 &\leq \left( \sqrt{2\varepsilon_1} + \sqrt{2\varepsilon_2} \right)^2 = 2 \varepsilon_1 + 2 \varepsilon_2 + 4 \sqrt{\varepsilon_1} \sqrt{\varepsilon_2} \leq 4 \left( \varepsilon_1 + \varepsilon_2 \right)
	\end{align*}
	where we applied the inequality of the geometric mean to obtain the last bound. Using the formula from \cref{lemma:trace_formula}, we then conclude that
	\begin{align*}
		\Tr \left( \rho_1 \rho_3 \right) &= \frac{1}{2} \left[ \Tr(\rho_1^2) + \Tr(\rho_3^2) \right] - \frac{1}{2} \left\| \rho_1 - \rho_3 \right\|_{\text{HS}}^2 \\
		&\geq \frac{1}{2} \left[ 1 - 2\varepsilon_1 + 1 - 2\varepsilon_2 \right] - \frac{1}{2} 4 \left( \varepsilon_1 + \varepsilon_2 \right) \geq 1 - 3 \left( \varepsilon_1 + \varepsilon_2 \right).
	\end{align*}
\end{proof}

\end{document}